\newcommand*{\addFileDependency}[1]{%
  \typeout{(#1)}%
  \@addtofilelist{#1}%
  \IfFileExists{#1}{}{\typeout{No file #1.}}%
}
\newcommand*{\myexternaldocument}[1]{%
  \externaldocument{#1}%
  \addFileDependency{#1.tex}%
  \addFileDependency{#1.aux}%
}
\newtheorem{theorem}{Theorem}
\newtheorem{lemma}[theorem]{Lemma}
\newtheorem{proposition}{Proposition}
\newtheorem{definition}{Definition}
\newtheorem{corollary}{Corollary}
\title{Approximate Revenue Maximization \\for Diffusion Auctions}
\author{
  Yifan Huang\thanks{Equal contribution.} \and
  Dong Hao\thanks{Equal contribution. Corresponding author: \href{mailto:haodongpost@gmail.com}{haodongpost@gmail.com}} \and
  Zhiyi Fan \and
  Yuhang Guo \and
  Bin Li
}
\date{}
\begin{document}
\maketitle
\vspace{-30pt}
\begin{center}
\textit{ University of Electronic Science and Technology of China \\
  University of New South Wales \\
  Nanjing University of Science and Technology
}
\end{center}

\vspace{20pt}
\begin{abstract}
Reserve prices are widely used in practice. The problem of designing revenue-optimal auctions based on reserve price has drawn much attention in the auction design community. Although they have been extensively studied, most developments rely on the significant assumption that the target audience of the sale is directly reachable by the auctioneer, while a large portion of bidders in the economic network unaware of the sale are omitted. This work follows the diffusion auction design, which aims to extend the target audience of optimal auction theory to all entities in economic networks. We investigate the design of simple and provably near-optimal network auctions via reserve price. Using Bayesian approximation analysis, we provide a simple and explicit form of the reserve price function tailored to the most representative network auction. We aim to balance setting a sufficiently high reserve price to induce high revenue in a successful sale, and attracting more buyers from the network to increase the probability of a successful sale. This reserve price function preserves incentive compatibility for network auctions, allowing the seller to extract additional revenue beyond that achieved by the Myerson optimal auction. Specifically, if the seller has $\rho$ direct neighbours in a network of size $n$, this reserve price guarantees a $1-{1 \over \rho}$ approximation to the theoretical upper bound, i.e., the maximum possible revenue from any network of size $n$. This result holds for any size and any structure of the networked market.
\end{abstract}


\section{Introduction}
\noindent 



 
%

Designing revenue optimal auctions to sell an item to symmetric bidders is one of the most fundamental problems in mechanism design. 
The optimal auction is the one with the highest expected revenue.
In auction theory, attracting together a larger group of customers in an auction has the potential to enhance not only the seller's revenue but also overall social welfare \citep{Bulow1996Auction}.  
With symmetric i.i.d. assumption, the \textit{number of bidders} and the \textit{reserve price} jointly decide the seller's revenue in a mechanism of allocation rule and payment rule. The seminal result by  \citep{Bulow1996Auction} demonstrates the power of competition for extracting revenue: 
the revenue from SPA with  $\rho+1$ bidders, is higher
than the revenue from the optimal auction with $\rho$ bidders. The gist of this theorem is that besides optimizing the mechanism (i.e., allocation rule, payment rule and reserve price), attracting
participation is more important. However,  most works in optimal auction design are under the basic assumption that the number of bidders is fixed, while how to attract bidders is not under consideration. Despite the evident importance of social networks in almost all human-activated systems \citep{jackson2008social, borgatti2009network}, standard auction models only concentrate on realizing preferred allocations and payments within a fixed group of bidders that are directly reachable by the seller. This limits the audience of the auction to a very limited group, and does not align with real markets, which operate in the context of significant social and economic relationships between entities, playing a pivotal role in economic dynamics. 


Our work is based on the idea that if the auction mechanism can attract more bidders, additional high-value bidders have the opportunity to participate in the auction. Consequently, \textit{setting a reasonably higher reserve price for this enlarged audience} can increase the seller's revenue. 
In contrast to classical optimal auction which does not consider the underlying social network structure among the agents, we investigate the reserve price for diffusion auctions,  that view the  audience for optimal auction design as unfixed, and incentivize each participated bidder to invite extra bidders from their neighbourhood to join the auction.
This essentially relaxes the commonly overlooked assumption in classical optimal auctions, which presumes a fixed number, denoted as $\rho$, of bidders. By doing so, it unleashes the potential of competition from a more extensive networked audience. This shift broadens the target market from the `thin market' of classical optimal auctions to a much larger economic network, which can afford a higher reserve price. In addition to the $\rho$ bidders initially accessible to the auctioneer, there exists a substantial segment of potential buyers who are originally hidden to the auctioneer  through whom the higher reserve price can be set.  

More formally, we consider an economic network  $G=\left(N, E\right)$, where $N$ encompasses all potential bidders in the network and only $\rho$ of them are direct neighbors of the seller, and $E$ are the social links between all potential bidders.
That is, in addition to the $\rho$ bidders which are considered in the classical optimal auction design, there exists a substantial portion of potential buyers in $G$ who are hidden from the auctioneer but can be located via a some mechanism utilizing the graph structure.
The strategy of each bidder in network auctions is $\left(b_i, e_i\right)$, where $b_i$ is the bid while $e_i\subseteq E_i$ is any subset of $i$'s neighbours, meaning which neighbour $i$ invites. 


With this lens, the problem of seller's revenue optimization boils down to how to incentive each participant to do a spontaneous invitation, and then how to choose the the good reserve price. To this objective, we design  a novel network auction mechanism that can simultaneously enhance  the two factors in Bulow-Klemperer theorem: the number of bidders and  revenue optimization. 
This newly proposed optimal auction is proven to be dominant-strategy incentive-compatible in the sense that, on one hand, this mechanism incentivizes each current buyer to attract other buyers from the economic network into the sale, i.e., each agent $i$ truthfully invite $e_i'=e_i$. On the other hand, it incentivizes every participants to truthfully report her valuation $b_i=v_i$.

Built upon the DSIC property, this new mechanism optimizes and adapts the reserve price to the growing number of bidders attracted from the economic network. We prove that, for any economic network, it generates higher revenue compared to  Myerson optimal auction. 
For this new network auction mechanism whose revenue varies with the network structure, we derive the reserve price $\gamma$ which is given in a simple and explicit form. Its revenue approximates very closely  to the theoretical  upper-bound, that is the maximum possible revenue extracted from any possible economic networks. More precisely, its approximation ratio is at least $1-{1\over \rho}$, where  $\rho$ is the number of bidders considered in classical optimal auction.
Given the richness of the (truthful) network auction design space, it is interesting  that such a simple yet nearly optimal reserve price works very well.

\subsection{Reserve Price in Auctions}
Setting a reserve price is a typical way to improve the seller's revenue. 
The theory of auction with reserve price was originally given by  \citet{myerson1981optimal} and \citet{riley1981optimal}: A single seller faces a fixed number of  buyers, if the buyers’ valuations for the item are i.i.d. drawn from a known distribution, then the optimal auction that maximizes the seller's revenue is just the second price auction with a reserve price $r$. The allocation rule awards the item to the highest bidder, unless all bids are less than $r$, in which case no one gets the item. The corresponding payment rule charges the winner (if any) the second-highest bid or $r$, whichever is larger.  This result extends beyond the i.i.d. setting: when agents are independent but not identical, the optimal auction entails sorting the agents based on a function of their bid (referred to as the virtual value) and allocating the item to the agent with the highest non-negative virtual value.

%

Let's illuminate what is the core inside this simple optimal auction from Myerson and R\&S.
Regarding the revenue-maximizing reserve price $r$, it is surprising that it is independent of the number of buyers. Essentially, it is the bid which makes the inverse of the virtual valuation equal $0$, and this important virtual valuation
of a bidder depends on her own valuation and distribution, and not on those of the others.  
On the other hand, however, the expected revenue extracted by the optimal auction is decided by the number of bidders, the value distribution, and the reserve price. More precisely, it is the number of bidders times the expected payment, which  is further decided by the virtual valuations and  the reserve price-based allocation \citep{myerson1981optimal}. 
 The expected payment can also be interpreted in another form, by
 an integral with   the reserve price and the number of bidders as parameters \citep{riley1981optimal}.

The discussion of optimal reserve price regarding the number of bidders in standard auctions can also be seen in works by \citep{engelbrecht1987optimal, levin1996optimal, wolfstetter1996auctions, menicucci2021basic}. Recent progresses also investigate how many more additional bidders need to join the auction in order to surpass the revenue of the classical optimal mechanism \citep{eden2016competition, esfandiari2019online, cai202199, beyhaghi2019optimal, bei2023bidder}, however, formal methods about how to  attract bidders are not a key point in these works. 
The discussion of optimal reserve price regarding the number of bidders in standard auctions can also be seen in works by \citep{engelbrecht1987optimal, levin1996optimal, wolfstetter1996auctions, menicucci2021basic, allouah2020prior}. Recent progresses also investigate how many more additional bidders need to join the auction in order to surpass the revenue of the classical optimal mechanism \citep{eden2016competition, esfandiari2019online, cai202199, beyhaghi2019optimal, bei2023bidder}, however, formal methods about how to  attract bidders are not a key point in these works. 
The literature has also attempted to utilize the knowledge of the exact number of buyers. For example, \citep{mcafee1987auctions} characterize optimal auctions when the seller possesses some prior knowledge about the number of buyers. \citep{levin2004auctions} investigate the seller's optimal response when the number of buyers is intentionally chosen from a set of ambiguities.

Researchers also devoted to relaxing the symmetric i.i.d. assumption \citep{krishna2009auction, fu2015randomization, koccyiugit2020distributionally}. 
Many works on optimal auction view the valuation distribution as non i.i.d., and 
 tries to find a suitable auction that generates as much revenue as possible. Many of them 
pursue to  find  near-optimal mechanisms that  are simple, practical, and robust \citep{hartline2013mechanism, roughgarden2016optimal, shah2019semi, castiglioni2023increasing}. 
In literature, there are also evidences from realistic Internet auctions showing that the higher the reserve price, the fewer the number of qualified bidders, thus reducing the success rate of auction \citep{bajari2003winner,lucking2007pennies,barrymore2009effect}. These evidences  support our idea that bring in more buyers and then increase the reserve price.

\subsection{Diffusion Auctions}

Our work belongs to the research direction of \textit{diffusion auction design} \citep{li2017mechanism, lee2017mechanisms, guo2021emerging, li2022diffusion, jeong2024groupwise}, an emerging area at the intersection of auction theory and network economics. In diffusion auctions, the auction mechanism not only incentivizes truthful bidding but also encourages each participant to actively diffuse auction information to their neighbors in a social network. The central idea is to stimulate information propagation so that buyers who are already aware of the auction help spread the word to others, thereby increasing participation and potential competition. As a result, the seller benefits from both improved revenue and enhanced overall welfare \citep{li2022diffusion}.

To ensure incentive compatibility in this dual setting of bidding and information diffusion, formal characterizations have been established \citep{li2020incentive}. A variety of diffusion auction mechanisms have since been proposed \citep{li2017mechanism, jeong2024groupwise, Lee2017MechanismsWR, Li2018CustomerSI, li2019graph, kawasaki2021mechanism, liu2021budget, guo2022combinatorial, shi2022social, li2024diffusion}. However, despite significant progress on mechanism design for diffusion and participation incentives, a general framework for revenue optimization in networked auctions remains largely undeveloped.

In the context of single-item diffusion auctions, it has been shown that no mechanism can simultaneously guarantee efficiency, incentive compatibility, and individual rationality without incurring potential losses in seller revenue. Although the VCG mechanism is naturally extensible to networks, it frequently results in deficits for the seller due to externality payments. Addressing these shortcomings, the foundational work by \citet{li2017mechanism} introduced the first DSIC information diffusion mechanism (IDM). Building on IDM, \citet{Lee2017MechanismsWR} proposed a multi-level mechanism (MLM) that improves efficiency and maintains the same revenue as IDM but sacrifices truthfulness. Further extensions include the customer sharing mechanism (CSM) \citep{Li2018CustomerSI}, which adapts IDM to sharing economies; the critical diffusion mechanism (CDM) \citep{li2019diffusion} for unweighted networks; and the weighted diffusion mechanism (WDM) \citep{li2019graph} for weighted graphs. Mechanisms such as FDM \citep{zhang2020redistribution} and NRM \citep{zhangECAIincentivize} focus on fairer reward distributions and improved social welfare without harming seller revenue.

\subsection{Revenue Maximization Problems in Diffusion Auctions}

Despite growing interest in diffusion auctions, the specific problem of \textit{revenue maximization} remains underexplored. Only a handful of works have explicitly addressed how to design diffusion mechanisms that improve seller revenue.

A recent attempt by \citet{bhattacharyya2023optimal} introduces the \texttt{maxViVa} mechanism, which transforms the network into a referral tree using diffusion timestamps and applies Myerson's virtual valuation function to maximize revenue. However, \texttt{maxViVa} fails to satisfy DSIC due to its reliance on timestamp-based referral paths and decreasing virtual valuations as the number of downstream buyers increases. This creates perverse incentives: direct neighbors of the seller with larger subtrees face higher reserve prices, discouraging propagation in dense subgraphs. Additionally, this mechanism assumes the seller has complete knowledge of diffusion timelines—an unrealistic requirement in practice.

Another work by \citet{zhang2023optimal} proposes the CWM mechanism, which applies a node-wise reserve price based on Myerson's virtual value function directly to each agent. However, this method lacks a rigorous justification for why the classical reserve price remains appropriate in networked diffusion settings. The reserve price for each agent is fixed under the assumption that buyer valuations are i.i.d., and no structural adjustment is made for their position in the network. Moreover, CWM assumes that losing bidders continue to propagate information truthfully, an assumption that introduces weak incentives for propagation and risks inefficiencies in participation.

Recent contributions also include the fixed-price diffusion mechanism studied by \citet{yu2024diffusion}, which provides a lower bound on the revenue and analyzes its average-case approximation. \citet{jeong2023groupwise} proposed the GPR mechanism based on groupwise externalities, which is efficient, weakly budget-balanced, and propagation strategy-proof. Similarly, the TRDM mechanism by \citet{zhang2023truthful} introduces sybil-proof properties to guard against sybil attacks and group collusion, achieving efficient allocations with high ex-post revenue.

A solid foundation for revenue-optimal payments under fixed allocations has been provided by \citet{li2020incentive}, and more recently, \citet{bhattacharyya2023optimal} explored broader payment characterizations for randomized diffusion auctions. Nonetheless, the vast majority of existing work focuses on maximizing social welfare, diffusion reach, or characterizing strategic equilibria, with only limited attention paid to optimizing the seller's expected revenue in general network settings.

We introduce a class of reserve price strategies that adapt to the underlying network structure—an element often overlooked in prior work. Our motivation stems from the observation that traditional reserve pricing assumes full participation, which is often violated in diffusion settings. For instance, a buyer with a high valuation might be excluded from the auction due to a lack of direct or indirect connections with the seller. Without incentives for intermediaries to propagate the auction information, the seller’s market becomes “thin,” leading to underperformance in both participation and revenue.

By tailoring reserve prices to account for network positions and propagation paths, our mechanism closes the gap between participation incentives and revenue optimization. This also offers a practical design strategy in settings where full social reach cannot be guaranteed. While the role of network structure in pricing has been studied in other domains such as operations, marketing, and economics \citep{alizamir2022impact}, it remains rare in the context of \textit{diffusion auction design}. Our work contributes to filling this gap.

\subsection{Our Method}
To the best of our knowledge, no existing work has systematically explored how to incorporate reserve prices into diffusion auction mechanisms while preserving critical properties such as individual rationality and incentive compatibility in both the bidding and diffusion dimensions. This omission marks a significant gap in the literature. In classical auction theory, the use of reserve prices is a standard and powerful tool for enhancing the seller’s expected revenue. However, in networked settings—where participation depends not only on valuation but also on information diffusion—naïvely applying reserve prices risks violating the incentive structures that underpin diffusion auction design.

Recognizing this, we propose the first general framework for integrating reserve price optimization into existing diffusion auction mechanisms. Our approach builds upon the Information Diffusion Mechanism (IDM) \citep{li2017mechanism}, which serves as a canonical model in diffusion auction design, and extends it within a Bayesian framework to optimize the seller's revenue. We analyze the expected revenue under Bayesian assumptions and derive the optimal global reserve price. While this theoretically optimal reserve price improves revenue, it undermines incentive compatibility—highlighting a core trade-off in the design of revenue-maximizing diffusion auctions.

To reconcile revenue optimization with strategic tractability, we further propose a practical mechanism—denoted as \texttt{APX-R}—that introduces a simple and explicit reserve pricing rule tailored for diffusion settings. Our design achieves approximately optimal revenue while maintaining the propagation and bidding incentives necessary for DSIC in many realistic cases. Importantly, we show that \texttt{APX-R} can extract more revenue than the classical Myerson benchmark in diffusion environments, where market access is constrained by network structure and participation is endogenous.

By situating our contribution within the broader diffusion auction literature and focusing on a key but underexplored aspect—revenue enhancement through reserve pricing—we offer a new and complementary direction to existing work. To facilitate comparison and contextualize our contribution, we summarize and contrast the properties of \texttt{APX-R} with those of existing single-item diffusion mechanisms in the table below.
\linespread{1.25}
{ 
	\begin{table}[htbp]\label{table_com}
		\centering
		\begin{tabular}{l|l|r}
			\toprule
			\textbf{Mechanisms} & \textbf{Revenue} & \textbf{Propagation Incentive} \\
		  \hline
            \texttt{APX-R} &  $\max\{v^*_{-1}, \gamma\}$ & strong incentive \\
            \hline
            IDM & $v^*_{-1}$ & strong incentive \\
            \hline
            \texttt{maxViVa} & $w_{T_1}^{-1}(w_{T_2})$ & no  propagation incentive \\ 
            \hline
            CWM & $\max\{v^{*}_{-w},rp\}$ & weak incentive \\
            \hline
            GPR & $v^*_{-1}$ & strong incentive \\
            \hline
            TRDM & $v^*_{-p_w}$ & strong incentive \\
        
			\bottomrule
		\end{tabular}%
		\label{tab:addlabel}%
         \caption{Revenue Comparisons}
	\end{table}%
}


Let $v^*_{(\cdot)}$ denote the highest bid among the corresponding set of reachable buyers. $1$ represents the closest critical diffusion buyer for the winner, and $\gamma$  is the reserve price defined in this article. In the \texttt{maxViVa} mechanism, the graph is transformed into a tree structure based on the timestamp at which each buyer propagates the sale info to others. The virtual valuation for each sub-tree rooted at $\ell$, a direct neighbor of the seller is defined as $w_{T_{\ell}}(v_{T_{\ell}})=v_{T_{\ell}}-\frac{1-F_{T_{\ell}}(v_{T_{\ell}})}{f_{T_{\ell}}}$ where $v_{T_{\ell}}$ is the valuation of buyers in sub-tree $T_{\ell}$, $F_{T_{\ell}}(\cdot)$ is the cdf of the highest valuation of the buyers in sub-tree $T_{\ell}$, and $f_{T_{\ell}}$ is the corresponding pdf. Let $T_1$ and $T_2$ represent the highest and second-highest virtual valuations, respectively, among the seller’s neighbors. The winner of each mechanism is denoted as $w$ while $p_w$ represents the shortest path from the seller to the winner. The notation  $-w$ refers to the set of buyers excluding the winner, while $-p_w$ represents the set of buyers excluding those in $p_w$. $rp$ is the reserve price defined in myerson optimal auction, the solution to  $v-\frac{1-F(v)}{f(v} = 0$.


The GPR mechanism is an efficient diffusion auction framework that allocates the item to the buyer with the highest bid among all buyers. It leverages groupwise externality to calculate the payment for both the winner and the critical buyers on the winner's path. While GPR maintains incentive compatibility in the dimension of propagation, it fails to do so in the bidding dimension, as losers on the winning path can strategically overbid to get higher rewards. Notably, GPR shares the same payment rule as IDM but differs in its allocation rule within single-item diffusion auctions. In our work, we propose a method to set an appropriate reserve price for IDM while preserving its incentive compatibility. This approach is also extended to GPR since the payment rule is identical to IDM in single-item networked auctions.


The TRDM mechanism allocates the item to the buyer with the highest bid, ensuring the efficiency of the auction process. Beyond this, TRDM guarantees incentive compatibility in diffusion dimensions and effectively prevents sybil attacks and collusion. Although IDM is not sybil-proof, sybil bids have no impact on the seller's revenue under IDM. In fact, the revenue generated by IDM is not lower than that of TRDM, while the total reward distributed under TRDM is not less than that of IDM. In contrast, for other non-sybil-proof mechanisms, such as CDM, FDM, and NRM, the introduction of sybil bids negatively impacts revenue, the more sybil bids are employed, the lower the resulting revenue for the seller.

\section{Problem Formulation}

Consider a network $G=(N\cup{s}, E)$, where $N\cup{s}$ constitutes the set of all agents. The seller $s$ has an item for sale, while agents $i \in N= \{1, \ldots, n\}$ are potential buyers within this network. An edge $(i, j)\in E$ signifies that $i$ is adjacent to $j$. Let $e_i$ denote the direct neighbors of agent $i$ with power set $\mathcal{P}(e_i)$, and the seller $s$ only maintains $\rho$ direct neighbours $e_s$.  It's important to highlight that this network could contain  a significantly larger number of agents compared to  classical optimal auction, i.e., $n \gg \rho$. 

In this network, a \textit{diffusion path} is a route through which the sale info travels from one agent to another. Agent $i$ can participate in the sale only if there exists a diffusion path from $s$ to $i$. Once  agent $i$ receives the sale info, she can place a bid and choose to diffuse the sale info to her neighbors. Classical auctions only focus on buyers who have an edge to the seller $s$, while a large portion of agents who are connected to the seller via a diffusion path and the information flow among these agents are omitted. Diffusion auctions aim to incentivize the buyers not only to truthfully bid, but also to  diffuse the information to their neighbors.

%


Every agent $i $ has a private type $t_i = (v_i, e_i)$, where $v_i$ is her true valuation. We assume that all $v_i$ are i.i.d. over interval $V_i=[0, \bar{v}]$, with cumulative distribution function (cdf) $F$ and   probability density function (pdf) $f$. 
Based on her true type $t_i$, agent $i$ takes action $t_i' = (v_i', e_i') \in T_i = V_i \times \mathcal{P}(e_i)$, where $v_i'$ is her bid and $e_i' \subseteq e_i$ is a subset of neighbors to whom she diffuses the sale info. 
Let $\mathbf{T}$ be the  space of type profiles, and let $\mathbf{t'} = (t_i', \mathbf{t}_{-i}')$ denote the action profile.


\subsection{Seller's Problem} 

A network auction mechanism $\mathcal{M} = (\pi, p)$ takes $\mathbf{t'} $ as input and determines  allocation $\pi=\{\pi_i\}_{i \in N}$ and  payment $p=\{p_i\}_{i \in N}$, where $\pi_i:\mathbf{T} \rightarrow\{0.1\}$, $ p_i:\mathbf{T} \rightarrow \mathcal{R}$.
While this definition is consistent with classical auctions, the set of potential bidders has been extended to include the entire network (i.e., the set $N$ and its underlying edges), as opposed to classical auctions that consider only $e_s$.

Under $\mathcal{M}=(\pi, p)$, given the action  profile $\mathbf{t'} $,  if the allocation rule satisfies $\sum_{i \in N} \pi_i\left(\mathbf{t}^{\prime}\right) \leq 1$  and $\pi_i\left(\mathbf{t}^{\prime}\right)=1$ when $t^{\prime}_{ i} \neq \text{nil}$, then this allocation rule is \textit{feasible}. We assume that the set of all feasible allocation rules is $\Pi$. The \textit{social welfare} is $\mathcal{W}(\pi, \mathbf{t'})=\sum_{i \in N} \pi_i(\mathbf{t'}) v_i$, where $\pi_i\left(\mathbf{t}^{\prime}\right)=1$ means  agent $i$ gets the item.  $\mathcal{M}$  is \textit{efficient} if its $\pi$  maximizes the social welfare. 



The utility for $i$ is $u_i(t_i, \mathbf{t}^{\prime},\pi,p) = \pi_i(\mathbf{t}^{\prime}) v_i - p_i(\mathbf{t}^{\prime})$. We focus on individually rational (IR) and dominant-strategy incentive-compatible (DSIC) auction mechanisms for the network setting. We aim to design new $\pi$ and $p$ with a reserve price $r$ that are IR and DSIC, and \textit{maximise the seller's revenue} and surpass the classical Myerson optimal revenue as much as possible.
\begin{definition}
Auction $\mathcal{M}$ is individually rational if $u_i\left(t_i,\left(\left(v_i, r_i^{\prime}\right), \mathbf{t}_{-i}^{\prime}\right)\right)=\pi_i\left(\mathbf{t}^{\prime}\right) v_i-p_i\left(\mathbf{t}^{\prime}\right) \geq 0$,  for all  $i \in N$, $r_i  \in \mathcal{P}\left(r_i\right)$ and $\mathbf{t}^{\prime}_{-i} \in T_{-i}$.
\end{definition}
\begin{definition}\label{definition_IC}
Auction $\mathcal{M}=(\pi, p)$ is dominant strategy incentive compatible if $u_i\left(t_i,\left(t_i, \mathbf{t}^{\prime}_{-i}\right)\right) \geq u_i\left(t_i,\left(t_i^{\prime}, \mathbf{t}^{\prime \prime}_{-i}\right)\right)$, for all $ i \in N$, $ t_i \in T_i$ and $ \mathbf{t}^{\prime}_{-i} \in T_{-i}$.
\end{definition}

The \textit{seller's revenue} is the total payment from all buyers: 
$
    R e v^{\mathcal{M}}\left(\mathbf{t}^{\prime}\right)=\sum_{i \in N} p_i\left(\mathbf{t}^{\prime}\right).
$
In network auctions, it's plausible that certain losing buyers who contribute significantly to diffusion get rewards, so it's possible that $p_i\left(\mathbf{t'}\right)<0$ for some $i$.
If $p_i\left(\mathbf{t}^{\prime}\right)\geq 0$, this means that the buyer $i$ should pay $p_i\left(\mathbf{t}^{\prime}\right)$ to the seller, while $p_i\left(\mathbf{t}^{\prime}\right)<0$ indicates that buyer $i$ should receive $\left|p_i\left(\mathbf{t}^{\prime}\right)\right|$ from the seller. $\mathcal{M}$ is \textit{weakly balanced} (WBB) if $R e v^{\mathcal{M}}\left(\mathbf{t}^{\prime}\right) \geq 0$ for any $\mathbf{t}^{\prime} $. 
$\mathcal{M}$ is \textit{optimal} if it maximises the seller's revenue. 
To compare the revenue of two different auctions, the general approach is to use \textit{average-case} or \textit{Bayesian analysis}.

 \section{Theoretical Benchmark}
To set a benchmark for our revenue-maximising network auction design, in this section we first characterize the upper-bound and lower-bound of the revenue optimization.

\subsection{Theoretical Upper-Bound}
The action profile $\mathbf{t'}$ forms a diffusion graph upon which the allocation, payments, and the seller's revenue are determined. The space of possible diffusion graphs is extremely large. For our benchmark, we target the most ideal network where the seller has omniscient knowledge, allowing her to directly call together all agents within the  network. That is, in this most ideal network for the seller, all buyers are direct neighbors of the seller, i.e., $e_s=N$. 

\begin{proposition} \label{proposition_OPT_benchmark}
       Consider an extremely ideal case  of the network, where all nodes are direct neighbors of the seller, i.e., $e_s=N$. Running the classical Myerson optimal auction within $N$ provides  the upper-bound for revenue.
{ \begin{equation}\label{eq_upperbound}
        \operatorname{OPT}(N) = \bar{v} - \hat{r}F^n(\hat{r})- \int_{\hat{r}}^{\bar{v}} \left[nF^{n-1}(x) - (n-1)F^n(x) \right] \, dx
    \end{equation}}
    where $\hat{r}$ is the solution to $x - (1-F(x))/f(x)=0$.
\end{proposition}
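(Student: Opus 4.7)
The plan is to prove Proposition~\ref{proposition_OPT_benchmark} in two conceptually separate stages: first, argue that the ideal complete-access network yields the revenue upper bound among all networks of size $n$; second, carry out the Myerson/order-statistic computation that gives the closed-form expression in \eqref{eq_upperbound}.

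For the first stage, I would observe that in any diffusion auction the seller can only extract payment from the subset of agents in $N$ that are actually reached along the realized diffusion paths. Since the seller's revenue is a sum of non-negative (in expectation, after any possible rewards) contributions from the participating buyers, collapsing all edges so that $e_s = N$ can only enlarge the set of available bidders without restricting any feasible allocation or payment rule. In particular, any DSIC-IR diffusion mechanism on $G$ induces a DSIC-IR direct mechanism on the subset of bidders it reaches, and that set is contained in $N$. Hence no diffusion auction on any $G$ with $|N|=n$ can exceed the revenue of the revenue-optimal direct auction on $N$ when $e_s = N$. This reduces the upper-bound question to the classical setting.

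For the second stage, since $v_i$ are i.i.d.\ with cdf $F$ on $[0,\bar v]$, Myerson's theorem says that the revenue-optimal direct mechanism is the second-price auction with reserve $\hat r$ defined by $x - (1-F(x))/f(x) = 0$. The seller's payment equals $\max(\hat r, v_{(2)})$ when $v_{(1)} \ge \hat r$ and $0$ otherwise, where $v_{(1)}, v_{(2)}$ denote the top two order statistics. Using the tail-integral identity $E[Y] = \int_0^{\bar v} \Pr(Y > t)\,dt$, I would split the integral at $\hat r$: for $t < \hat r$, the event reduces to $\{v_{(1)} \ge \hat r\}$ with probability $1 - F^n(\hat r)$; for $t \ge \hat r$, it reduces to $\{v_{(2)} > t\}$, whose probability is $1 - F^n(t) - n F^{n-1}(t)(1-F(t))$, a standard order-statistic expression. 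Summing the two integrals gives
\begin{equation*}
\hat r\bigl(1 - F^n(\hat r)\bigr) + \int_{\hat r}^{\bar v}\bigl[1 + (n-1)F^n(t) - n F^{n-1}(t)\bigr]\,dt,
\end{equation*}
which after absorbing $\int_{\hat r}^{\bar v} 1\,dt = \bar v - \hat r$ yields exactly \eqref{eq_upperbound}.

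The only genuinely subtle step is the first one: one has to be careful that ``collapsing $G$ to the star with $e_s = N$'' really dominates every admissible diffusion auction, because diffusion mechanisms generally pay rewards to intermediaries and these rewards are absent in the star case. The cleanest way to handle this is to note that in the star case the seller can simply mimic any diffusion rule by running the corresponding direct mechanism on $N$ and retaining the rewards that would otherwise have gone to intermediaries; this yields weakly higher revenue while preserving IR and DSIC, so optimizing over DSIC-IR direct mechanisms on $N$ (which Myerson solves) is a valid upper bound. Once this reduction is in place, the computational part above is a routine order-statistic calculation.
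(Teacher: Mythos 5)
Your proposal is correct: the reduction to the complete-access star followed by the Myerson/order-statistic computation of $E\!\left[\max(\hat r, v_{(2)})\,\mathbb{1}\{v_{(1)}\ge \hat r\}\right]$ via the tail-integral identity reproduces eq.~\eqref{eq_upperbound} exactly (and specializes correctly to the uniform case in eq.~\eqref{eq_upperbound_uniform}). The paper does not spell out a proof—it simply asserts the bound with the one-line observation that no network can attract more than $N$ buyers and invokes the classical Myerson formula—so your derivation fills in precisely the standard argument the authors leave implicit, with your final paragraph on absorbing intermediary rewards being a welcome extra care the paper omits.
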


Specifically, when $F$ is the uniform distribution $U[0,\bar{v}]$, the reserve price set for buyers in Myerson optimal auction should be $\hat{r} = \bar{v}/2$, then $\operatorname{OPT}(N)$ can be calculated as:
\begin{equation}\label{eq_upperbound_uniform}
    \operatorname{OPT}(N)=\frac{\bar{v}(n-1)}{n+1}+\frac{\bar{v}}{n+1} \cdot\left(\frac{1}{2}\right)^n.
\end{equation}

Such a revenue is purely hypothetical, as it assumes the seller possesses omniscient information about all the potential buyers in the network.
Given any $e_s$, any $\bf t$ and any $\bf t'$, the seller can never attract more buyers than $N$, thus $\operatorname{OPT}(N)$ is the maximum possible revenue for \textit{any mechanism}  \textit{and any network}. This upper-bound serves as our benchmark, setting the highest standard for our mechanism design. 

\subsection{Lower-Bound}
Conversely, there is a worst-case scenario for the seller, where  the mechanism does not provide diffusion incentives and the seller can only call together buyers in $e_s$. This is  just the classical auction setting where it is a thin market. Let $\rho$ denote the size of $e_s$, and we utilize the revenue from the Myerson optimal auction for this classical scenario as our lower-bound. Use the same reasoning as for Proposition \ref{proposition_OPT_benchmark} but only consider the seller's direct neighbours in $e_s$, we have the revenue of the classical Myerson optimal auction as follows.
\begin{equation}\label{eq_bottomline}
    \operatorname{MYS}(e_s) = \bar{v} - \hat{r}F^{\rho}(\hat{r})- \int_{\hat{r}}^{\bar{v}} \left[\rho F^{\rho-1}(x) - (\rho-1)F^\rho(x) \right] \, dx
\end{equation}
Similarly, when $F$ is $U[0,\bar{v}]$, the lower-bound can be derived as:
{ {\begin{equation}
\operatorname{MYS}(e_s)   =\frac{\bar{v}(\rho-1)}{\rho+1}+\frac{\bar{v}}{\rho+1} \cdot\left(\frac{1}{2}\right)^\rho. 
\label{eq_bottomline_lowerbound}
\end{equation}}}

It is worth noting that, even this lower-bound is not a low standard for mechanism design. For notation simplicity, we didn't include $\mathbf{t}'$ in these benchmarks.
The goal of the upcoming content is to design IR,  DSIC network auction that always generates higher revenue than  $\operatorname{MYS}(e_s)$, while also approximating  $\operatorname{OPT}(N)$.

\section{Approximation Mechanism}
In this section, we design a network auction which has optimized revenue based on reserve price. Given its objective is to approximate the maximum possible revenue in any network market, i.e., eq.\eqref{eq_upperbound}, we call it the  approximation mechanism with reserve price (\texttt{APX-R}) and denote its expected revenue  as function $\operatorname{APX}(\cdot)$. 



\subsection{Data Structure}
With the action profile $\mathbf{t'}$, a \textit{diffusion graph} can be formed by aggregating all buyers' $e'_i$.  This graph contains all   diffusion paths.  If all diffusion paths from seller $s$ to $j$ must pass through   $i$, then  $i$ is called a \textit{diffusion critical node (DCN)} for $j$.
Downstream of $i$, we denote the group of nodes that share $i$ as a common DCN (including $i$) as $d_i$ and call it   \textit{diffusion downstream group (DDG)} of $i$. Denote $-d_i=N - d_i$, then $\mathbf{t'}_{-d_i}$ refers to the action profile of the remaining nodes excluding $d_i$. Upstream of $i$, we let $\left\{s, z_1, \ldots, z_k, \ldots,i\right\}$ denote the sequence of all DCNs for $i$ and  call it    \textit{diffusion critical sequence (DCS)} for $i$.

\begin{definition}\label{Def_POT}
The union of all agents' DCS forms a tree, which captures the partial ordering of  agents' importance in the diffusion. Call it the partial ordering tree (POT).
\end{definition}

POT ranks the buyers' \textit{topological importance} for the sale in networks. Note that POT is a partial ordering, thus only buyers on a same branch can be compared. 




\begin{definition}\label{Def_POST}
Let $T_i=$ $\left(d_i, E_i\right)$ be the Partial Ordering Sub-Tree (POST) rooted at node $i$, where $d_i$ is the set of sub-tree nodes, and let the set of edges be denoted as \\$E_i=\left\{(j, k) \mid j, k \in d_i \land (j, k) \in E\right\}$. Denote $k_i=|T_i|=|d_i|$ the size of  $T_i$.  
\end{definition}
POT is the partial ordering of all nodes, while POST is that on set $d_i$. 
The concept of DCS will play a central role in the design of \texttt{APX-R}, whereas POST will be utilized in the analysis of the approximation ratio.

\subsection{Mechanism \texttt{APX-R}}
Now we explain the workflow of \texttt{APX-R}. It is inherited from the simplest and most representative network auction, i.e., the information diffusion mechanism (\texttt{IDM}) \citep{li2017mechanism,li2022diffusion}. However, the verification of allocation now has the reserve price as an additional criterion.

\begin{algorithm}[tb]
\SetAlgoLined
\caption{\texttt{APX-R}: Approximation Mechanism with Reserve Price}
\label{alg_Mechanism_APX}
\KwIn{action profile $\mathbf{t'}$ and reserve price $r \in [0,1]$.}
\KwOut{allocation rule $\pi^{apx}(\mathbf{t'})$, payment rule $p^{apx}(\mathbf{t'})$.}

\textbf{Preprocessing:} construct diffusion graph and convert it into POT; \label{alg_POST}

\textbf{Initialization:$\forall i \in N, \pi_i^{apx}(\mathbf{t'})\leftarrow 0, p_i^{apx}(\mathbf{t'})\leftarrow 0 $}; \label{alg_Init}
Find  buyer $h$ with  highest bid $v'_h$\; \label{alg_HighestBidder}
\If {$v'_h< r$}  
{
    auction concludes without transaction; \label{alg_CheckFailure2}
}
\Else
{
    Initialize $h$ as winner: $w \leftarrow h$, and find $h$'s DCS $c_h$;\label{alg_FindPPS}
    
    \For {$j \in c_h \backslash\{h\}$} 
    {
        \If {$ v'_j \geq r$ \textbf{and} $v'_j=v^*_{-d_{j+1}}$}   
        {
            $w \leftarrow j$; 
            break; \label{alg_CriteriaEnd}
        }
    }
    
    $\pi_w^{apx}(\mathbf{t'}) \leftarrow 1$; \label{alg_DecideWinnder}
    
    $p_w^{apx}(\mathbf{t'})\leftarrow \max \{v^*_{-d_w}, r\}$;\label{alg_WinnerPayment}
    
    \For {$j \in c_w\backslash\{w\} $}
    {
        $p_j^{apx}(\mathbf{t'})\leftarrow \max \{v^*_{-d_j}, r\}-\max \{v^*_{-d_{j+1}}, r\}$;\label{alg_PPNsPayment}
    }
}
\end{algorithm}

In \texttt{APX-R},
lines \ref{alg_POST} to \ref{alg_DecideWinnder} is the allocation. 
Line \ref{alg_POST} constructs the diffusion graph according to the action profile, and then convert it into the corresponding POT. Lines \ref{alg_HighestBidder}  to \ref{alg_CheckFailure2} check whether there is a bid that meets the reserve price and if not, the sale fails. If the auction does not fail, then line \ref{alg_FindPPS} identifies the highest bidder's DCS. Denote the highest bid from group $d_i$ as $v^*_{-d_i}$. 
Along  this DCS, the algorithm will select the final winner. The intuition is that after removing all successors $d_{j+1}$ of $j$, if (1) $j$ stands as the highest bidder (i.e., $v'_j=v^*_{-d_{j+1}}$) and (2) her bid surpasses the reserve price $r$, then she wins. 
The algorithm then checks the above two criteria.
The algorithm only selects the first $j$ that satisfies both criteria as the winner, as in  line \ref{alg_DecideWinnder}. 
The  allocation rule classifies the agents in the network into the following partition.
After the allocation by Algorithm \ref{alg_Mechanism_APX}, buyers are divided into three categories: (i) the winner $w$; (ii) DCNs of the winner, i.e., $c_w-\{w\}$ ; (iii) other buyers, i.e., $N- c_w$.

From line \ref{alg_WinnerPayment} on is about payments. The payment to each buyer is based on the critical winning bid, that is, the minimum bid if a buyer wants to win. 
The payment of the winner $w$ is just her  critical winning bid $\max \{v^*_{-d_w}, r\}$. 
Any $ j $ along  the highest bidder's DCS is rewarded by line \ref{alg_PPNsPayment}, 
\begin{equation}\label{eq_PPNpayment}
     p_j^{apx}(\mathbf{t'})=\max \left\{v^*_{-d_j}, r\right\}-\max  \left\{v^*_{-d_{j+1}}, r\right\}. 
\end{equation}
This follows a similar form to the pioneering work of \texttt{IDM}. It is interpreted by the ``social externality'' such that the removal of some critical nodes can exclude other nodes from the market, so that an agent's externality comes not only from her bid but also from her social position.
The sharp difference with the original \texttt{IDM}, however, is that this externality in eq.\eqref{eq_PPNpayment} is now a difference between two maxima, where the reserve price $r$ is a function we don't yet know. 

\subsection{Basic Properties} 
Before analyzing the reserve price, we first prove that \texttt{APX-R} using any reserve price is WBB and IR. 
Denote ${\operatorname{APX}}({\mathbf t}';r)$ the seller's revenue under \texttt{APX-R} with reserve price $r$. Summing up the payments $p_j^{apx}$  of all  $j \in c_w$ gives us the seller's revenue. The subtrahend of $p_j^{apx}$ and the minuend of $p_{j+1}^{apx}$ are identical and offset each other. 



\begin{proposition}\label{proposition_WBB}
The mechanism \texttt{APX-R} with any reserve price is WBB with revenue ${\operatorname{APX}}({\mathbf t}';r)= \max \{v^*_{-d_1}, r\}$.
\end{proposition}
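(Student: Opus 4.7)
The plan is to compute the seller's revenue directly as a telescoping sum and then observe that the resulting expression is manifestly nonnegative. By Algorithm \ref{alg_Mechanism_APX}, for any action profile $\mathbf{t'}$ either (i) no bid meets the reserve, in which case every $p_i^{apx}=0$ and both claims are trivial, or (ii) a winner $w$ is selected along with a DCS $c_w$. In case (ii), only the winner $w$ and the nodes in $c_w\setminus\{w\}$ have nonzero payments, so I would begin by writing
\[
\operatorname{APX}(\mathbf{t'};r)=\sum_{i\in N}p_i^{apx}(\mathbf{t'})=p_w^{apx}(\mathbf{t'})+\sum_{j\in c_w\setminus\{w\}}p_j^{apx}(\mathbf{t'}).
\]

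Next I would order the DCS as $c_w=\{z_1,z_2,\ldots,z_k=w\}$, where $z_1$ is the seller's direct neighbour on the path to $w$ and $z_{j+1}$ is the downstream successor of $z_j$ used in line \ref{alg_PPNsPayment}. Substituting the payment rules yields
\[
\operatorname{APX}(\mathbf{t'};r)=\max\{v^*_{-d_w},r\}+\sum_{j=1}^{k-1}\Bigl(\max\{v^*_{-d_{z_j}},r\}-\max\{v^*_{-d_{z_{j+1}}},r\}\Bigr).
\]
The inner sum telescopes and the boundary term $\max\{v^*_{-d_{z_k}},r\}=\max\{v^*_{-d_w},r\}$ cancels the winner's payment, leaving $\operatorname{APX}(\mathbf{t'};r)=\max\{v^*_{-d_{z_1}},r\}$. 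Identifying $z_1$ with the node labelled $1$ in the statement gives the desired revenue expression $\max\{v^*_{-d_1},r\}$.

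For the WBB conclusion I would simply note that $v^*_{-d_1}\geq 0$ (valuations are nonnegative on $[0,\bar v]$) and $r\geq 0$, so $\max\{v^*_{-d_1},r\}\geq 0$ for every $\mathbf{t'}$, which is exactly the WBB condition. The only subtlety worth flagging is bookkeeping: one must verify that $d_{z_{j+1}}\subseteq d_{z_j}$ so that the ``$-d$'' supersets are nested and the indices $z_j,z_{j+1}$ in line \ref{alg_PPNsPayment} are consistent with the ordering used in the telescoping step. This follows immediately from Definition \ref{Def_POT}, since the DCS is a descending chain of DCNs in the POT and each DDG contains all downstream DDGs on the same branch, so the only real work in the proof is that brief consistency check before the telescope collapses.
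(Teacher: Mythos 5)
Your proposal is correct and matches the paper's own argument, which is exactly the telescoping observation that ``the subtrahend of $p_j^{apx}$ and the minuend of $p_{j+1}^{apx}$ are identical and offset each other,'' leaving the first DCN's term $\max\{v^*_{-d_1},r\}\geq 0$. Your version just spells out the ordering of the DCS, the failed-auction case, and the nesting of the DDGs a bit more explicitly than the paper does.
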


WBB is not obtained for free. The  work by \citep{li2017mechanism,li2022diffusion} proved that VCG  in networks is not WBB. In contrast,  the mechanism \texttt{APX-R} with any reserve price is WBB. It is also easily proved that \texttt{APX-R} is  IR.
\begin{proposition}\label{proposition_IR}
The mechanism \texttt{APX-R} with any reserve price is IR.
\end{proposition}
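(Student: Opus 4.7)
The plan is to verify individual rationality agent by agent by partitioning the participants, under the truthful action $v_i'=v_i$, into the three categories induced by Algorithm \ref{alg_Mechanism_APX}: (i) the winner $w$, (ii) the diffusion critical nodes of $w$ lying on $c_w\setminus\{w\}$, and (iii) every remaining buyer. For category (iii), both $\pi_i^{apx}$ and $p_i^{apx}$ stay at their initialized value of $0$, so the utility is trivially $0$, and IR holds regardless of the diffusion action $e_i'$ chosen.

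For category (ii), I would first establish the monotonicity $\max\{v^*_{-d_j},r\}\le \max\{v^*_{-d_{j+1}},r\}$ along the DCS. This follows because $j{+}1$ is a successor of $j$ in the DCS, hence $d_{j+1}\subsetneq d_j$ and therefore $-d_{j+1}\supsetneq -d_j$, which makes the maximum bid outside $d_{j+1}$ at least as large as that outside $d_j$; taking pointwise maxima with $r$ preserves the inequality. Plugging this into \eqref{eq_PPNpayment} shows that $p_j^{apx}(\mathbf{t}')\le 0$ for every critical node, so each such agent receives (weakly) a non-negative transfer, and since they do not obtain the item their utility $-p_j^{apx}(\mathbf{t}')\ge 0$.

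The main step is category (i). I need to show that the winner's true valuation dominates the charged price $\max\{v^*_{-d_w},r\}$. Inspect how $w$ is produced: the loop only replaces $h$ by some $j\in c_h\setminus\{h\}$ when $v'_j\ge r$ and $v'_j=v^*_{-d_{j+1}}$, and if no such replacement occurs the initial choice $w=h$ survives only because the guard $v'_h\ge r$ on line \ref{alg_CheckFailure2} is passed. In either case $v'_w\ge r$. For the other term, if $w=h$ then $v'_h$ is the global maximum so $v'_h\ge v^*_{-d_h}$; if $w$ is chosen later, the same monotonicity argument used above gives $v'_w=v^*_{-d_{w+1}}\ge v^*_{-d_w}$. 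Combining these two bounds yields $v'_w\ge \max\{v^*_{-d_w},r\}=p_w^{apx}(\mathbf{t}')$, and under truthful bidding $v_w=v'_w$ so the winner's utility $v_w-p_w^{apx}(\mathbf{t}')\ge 0$.

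The only subtle point, and thus the obstacle I would be most careful about, is justifying the inequality $v'_j\ge v^*_{-d_j}$ for a winner $j$ selected strictly below $h$ on the DCS: one has to argue that $j$ itself is excluded from $-d_j$ (because $j\in d_j$ by the definition of DDG), so that the equality $v'_j=v^*_{-d_{j+1}}$ and the containment $d_{j+1}\subsetneq d_j$ together give a genuine upper bound on $v^*_{-d_j}$. Once this is confirmed, the three cases combine into a complete proof of IR for \texttt{APX-R} under any reserve price $r$.
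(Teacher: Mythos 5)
Your proof is correct: the three-way case split (winner, critical diffusion nodes on $c_w\setminus\{w\}$, everyone else), the monotonicity $\max\{v^*_{-d_j},r\}\le\max\{v^*_{-d_{j+1}},r\}$ along the DCS, and the check that the winner's bid dominates both $r$ and $v^*_{-d_w}$ together establish IR for any reserve price. The paper asserts this proposition without giving a proof (``it is also easily proved''), and your argument is precisely the standard one it leaves implicit, in the same case-analysis style as Lemmas \ref{lemma_IC_fixed_ei} and \ref{lemma_IC_fixed_vi}.
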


\section{Approximate Revenue Maximization}
A significant aspect   is how to set the reserve price so that \texttt{APX-R} is DSIC and has a good revenue. 
Now we will derive the  expected revenue of it and explore the condition that the reserve price must satisfy in order for the mechanism to be DSIC and yield high revenue. We start by analyzing the most common assumption that all buyers' valuations follow a uniform distribution 
$U[0,\bar{v}]$. In later sections, we extend our theoretical framework to other distribution models.

\subsection{Computation of Expected Revenue} Based on Definitions \ref{Def_POT} and \ref{Def_POST}, we can derive the expected revenue of \texttt{APX-R}  with any $r$, assuming that  buyers act truthfully. We will ensure the DSIC  in the following sections. This is a common procedure in DSIC mechanism design.
 
\begin{definition}\label{def_Tx}
If in the POT, there are $m$  neighbors  of $s$, then denote the POSTs directly branching from $s$ by set $\{T_1, \ldots, T_x, \ldots, T_m\}$, where the size of $T_x$ is $|T_x|=k_x$ and the highest bid in $T_x$ is $v_x$. We call $v_x$ the \textit{sub-tree value}.
\end{definition} 
The sub-tree  $T_x$ is essentially a sub-market in $G$, and the seller has $m$ sub-markets.
The cdf of $v_x$ is denoted as $H_x(v_x)=F^{k_x}(v_x)$, which is the probability that the valuations of all $k_x$ buyers in sub-tree $T_x$ are not greater than $v_x$. Denot its pdf as $h_x(v_x)$.

The condition for the winner to be from sub-tree $T_x$ is  that 
$v_x$ exceeds  all  $n-k_x$ bids from other sub-trees. 
Let    $Y$ be the highest bid from the sub-trees excluding $T_x$. Then the probability that the winner is from $T_x$ is $\operatorname{Prob}(Y \leq v_x)=F^{n-k_x}(v_x)$.  Then we can  get the cdf and pdf of $Y$, which are denoted as $G_x$ and $g_x$, respectively.
The  expected revenue  extracted from $T_x$ is calculated as:
{ {\begin{equation}\label{eq_APX_Tx}
  \operatorname{APX}_{T_x}(\mathbf{t'};r)= \int_r^{\bar{v}}\Big [r  G_x(r)+\int_r^{v_x} y g_x(y) d y \Big]  h_x(v_x) d v_x.
\end{equation}  }}  

We call it the \textit{sub-tree revenue of $T_x$}. There are two terms in the outer integral: the first term means that the sub-tree revenue is $r$ if the second highest sub-tree value $y$ is lower than $r$, which occurs  with probability $G_x(r)$. The second term is the expected revenue when $y$ meets $r$. 
According to eq.\eqref{eq_APX_Tx}, the sub-tree revenue from $T_x$ is computed as follows.
{ \small{
\begin{displaymath}
\begin{aligned}
&\operatorname{APX}_{T_x}(\mathbf{t'};r)= \int_r^{\bar{v}}\left [r  G_x(r)+\int_r^{v_x} y g_x(y) d y \right]  h_x(v_x) d v_x\\
& =rG_x(r)(1-H(r))+\int_r^{\bar{v}}\Big(\int_r^{v_x} y dG_x(y)\Big) h_x(v_x) d v_x\\
& =rF^{n-k_x}(r)-rF^n(r)+\int_r^{\bar{v}}\left[v_x F^{n-k_x}(v_x)-rF^{n-k_x}(r)-\int_r^{v_x} F^{n-k_x}(y) d y\right]dH_x(v_x) \\
& =\frac{k_x}{n}(\bar{v}-rF^n(r))-\frac{k_x}{n}\int_r^{\bar{v}}F^n(v_x)dv_x-k_x \int_r^{\bar{v}}\Big(\int_r^{v_x} F^{n-k_x}(y) d y\Big) \cdot F^{k_x-1}(v_x) \cdot f(v_x) d v_x.\\
\end{aligned}
\end{displaymath}}}
Exchanging the order of integration of the second integral, we get:
{ {
\begin{displaymath}
\begin{aligned}
& k_x \int_r^{\bar{v}}\Big(\int_r^{v_x} F^{n-k_x}(y) d y\Big) \cdot F^{k_x-1}(v_x) \cdot f(v_x) d v_x \\
& =\int_r^{\bar{v}}\Big(\int_y^{\bar{v}} k_x F^{k_x-1}(v_x) \cdot f(v_x) d v_x\Big) \cdot F^{n-k_x}(y) d y \\
& =\int_r^{\bar{v}}\Big(\int_y^{\bar{v}} d F^{k_x}(v_x)\Big) \cdot F^{n-k_x}(y) d y \\
&=\int_r^{\bar{v}}\big(1-F^{k_x}(y)\big) \cdot F^{n-k_x}(y) d y.
\end{aligned}
\end{displaymath}}}
By using the same variable $v_x$ for integration, we have:
{\small {
\begin{equation}
\begin{aligned}\label{eq_APX_Tx_r}
& \operatorname{APX}_{T_x}(\mathbf{t'};r)=\frac{k_x}{n}(\bar{v}-rF^n(r))- \int_r^{\bar{v}} \left[ \frac{k_x}{n}F^n(v_x)-F^n(v_x)+F^{n-k_x}(v_x)\right]dv_x.
\end{aligned}
\end{equation}}}
Thus, we can get  $\operatorname{APX}(\mathbf{t'};r)$ by summing up the expected revenue  from all $T_x$:
\begin{equation}\label{eq_APX_r}
\begin{aligned}
& \operatorname{APX}(\mathbf{t'};r) =\sum_{x=1}^m  \operatorname{APX}_{T_x}(\mathbf{t'};r)\\
& =\bar{v}-rF^n(r)-\sum_{x=1}^m \int_r^{\bar{v}}\left[F^{n-k_x}(v_x)-\frac{n-k_x}{n} F^n(v_x)\right] d v_x.
\end{aligned}
\end{equation}
If $F$ is $U[0,\bar{v}]$, the expected revenue of \texttt{APX-R}  extracted from $T_x$ is
\begin{equation}\label{eq_uniform_APX_Tx_r}
\operatorname{APX}_{T_x}(\mathbf{t'};r) = \frac{\bar{v}(1+k_x)}{n+1}\left[1-\left(\frac{r}{\bar{v}}\right)^{n+1}\right]-\bar{v}\cdot\frac{1-\left(\frac{r}{\bar{v}}\right)^{n-k_x+1}}{n-k_x+1}.
\end{equation}
Accordingly,  summing up this sub-tree revenue over all sub-trees, 
the \textit{seller's  expected revenue} from mechanism \texttt{APX-R} is
{ {\begin{equation}\label{eq_revenue_graph}
\begin{aligned}
 & \operatorname{APX}(\mathbf{t'};r)=\sum_{x=1}^m \operatorname{APX}_{T_x}(\mathbf{t'};r)\\&=\frac{\bar{v}(n+m)}{n+1}\left[1-\left(\frac{r}{\bar{v}}\right)^{n+1}\right]-\bar{v} \cdot \sum_{x=1}^m \frac{1-\left(\frac{r}{\bar{v}}\right)^{n-k_x+1}}{n-k_x+1}.
\end{aligned}
\end{equation} }}

Specifically, when we set $r=0$, the approximation mechanism degenerates into the Information Diffusion Mechanism (IDM) \citep{li2022diffusion}, which is the pioneering auction mechanism in social networks. Consequently, the expected revenue of the IDM mechanism can be computed as follows:
\begin{equation}
\operatorname{IDM}=\bar{v}-\sum_{x=1}^m \int_0^{\bar{v}}\left[F^{n-k_x}(v_x)-\frac{n-k_x}{n} F^n(v_x)\right] d v_x.
\end{equation}

\subsection{Impossibility of Optimal Reserve Price}\label{section_opt}

Building upon the derivation of the expected revenue of \texttt{APX-R} in the previous section, when buyers' valuations are independently and identically drawn from a regular distribution $F$, as given in eq.\eqref{eq_APX_r}, the first-order necessary condition for maximizing the seller's expected  with respect to the reserve price $r$ is:
\begin{equation}
    \frac{\partial \operatorname{APX}(\mathbf{t}';r)}{\partial r} =F^n(r) \sum_{x=1}^m\left[\frac{-k_x r f(r)}{F(r)}+\frac{1-F^{k_x}(r)}{F^{k_x}(r)}\right]=0
\end{equation}
After rearranging the equation, the optimal reserve price $r_{opt}$ should satisfy the following condition:
\begin{equation}\label{eq_reserve_opt}
    \sum_{x=1}^m \left[{F^{k_x}(r_{opt})}\right]^{-1}=m+n \cdot \frac{r_{opt} \cdot f(r_{opt})}{F(r_{opt})}
\end{equation}
This condition can also be expressed in the following form:
\begin{equation}\label{eq_hazard}
    \sum_{x=1}^m \frac{h_x(r_{opt})\left[r_{opt}-\frac{1-H_x(r_{opt})}{h_x(r_{opt})}\right]}{H_x(r_{opt})}=0
\end{equation} 
$H_x(\cdot)$ is the cdf of the highest bid in the subtree $T_x$, and $h_x(\cdot)$ is its corresponding pdf. Notably, the term $\frac{1-H_x(\cdot)}{h_x(\cdot)}$ represents the reciprocal of the hazard rate associated with the distribution $H_x(\cdot)$ \citep{myerson1981optimal} can be interpreted as the unavoidable revenue loss incurred by the seller due to the uncertainty in the highest valuation 
$v_x$ within the subtree $T_x$. The hazard rate of the value distribution $v_x$ is given by $\operatorname{W}_x(\cdot)=\frac{h_x(\cdot)}{1-H_x(\cdot)}$. Notice that $H_x(r_{opt})$ represents the probability that the bids of all buyers in the subtree $T_x$ are no higher than $r_{opt}$. $\operatorname{Z}_x(r_{opt})=\frac{1-H_x(r_{opt})}{H_x(r_{opt})}$ is essentially the \textit{odds ratio} in probability theory, representing the probability that there exists a bid higher than $r_{opt}$ in the subtree $T_x$. Thus, eq.\eqref{eq_hazard} can be reformulated as:
\begin{equation}\label{eq_hazard_rewritten}
    \sum_{x=1}^m \operatorname{W}_x(r_{opt})\operatorname{Z}_x(r_{opt})\left[r_{opt}-\frac{1}{\operatorname{W}_x(r_{opt})}\right]=0
\end{equation} 

Without loss of generality, we adopt a uniform distribution over $U[0, \bar{v}]$ for computation, which is a common assumption in auction theory. To maximize the seller's  revenue in eq.\eqref{eq_revenue_graph}, the first-order necessary condition with respect to the optimal reserve price $r_{opt}$ is
{ {\begin{equation}\label{eq_reserve_opt_uniform_distribution}
  \sum_{x=1}^m \left(\frac{\bar{v}}{r_{opt}}\right)^{k_x}=m+n.
\end{equation}}}
Eq.\eqref{eq_reserve_opt_uniform_distribution} is a polynomial function. Each term within the summation is monotonic with respect to $r_{opt}$. Consequently, under  uniform distribution, the solution for $r_{opt}$ is unique.

Unluckily, $r_{opt}$ depends on $m$, $n$, and each $k_x$ which is jointly determined by the diffusion actions of all buyers. This makes \texttt{APX-R} with $r_{opt}$ not DSIC.
\begin{theorem}[Non-Truthfulness of $r_{opt}$]\label{theorem_r_opt_hard_and_no_IC}
\texttt{APX-R} with its optimal reserve price $r_{opt}$ is not DSIC. 
\end{theorem}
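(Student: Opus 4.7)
The plan is to prove the theorem by exhibiting a concrete counterexample in which a buyer strictly improves her utility by deviating from truthful diffusion. The starting observation is already implicit in eq.\eqref{eq_reserve_opt_uniform_distribution}: the optimal reserve $r_{opt}$ is determined by $n$, $m$, and the sub-tree sizes $\{k_x\}$, all of which are functions of the realized POT and hence of the action profile $\mathbf{t}'$. Consequently, any buyer whose invitation choice affects the POT can, in principle, steer $r_{opt}$, and this opens the door to violating DSIC on the payment side of \texttt{APX-R}.

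First, I would isolate the cleanest manipulation pathway. Consider a direct neighbor $i$ of the seller who also has at least one downstream child $j$. When $i$ is the highest bidder in the whole network she wins and, by line~\ref{alg_WinnerPayment} of Algorithm~\ref{alg_Mechanism_APX}, pays $\max\{v^*_{-d_i}, r_{opt}\}$. If all bids outside $d_i$ are sufficiently small, this payment collapses to $r_{opt}$. Hence any deviation $e'_i \subsetneq e_i$ that strictly decreases $r_{opt}$ while preserving $i$'s status as the winner is strictly profitable.

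Second, I would construct the explicit instance. Take a seller with two direct neighbors, one of whom ($i$) has a single potential invitee $j$. Under truthful diffusion the POT gives $n=3$, $m=2$, $k_1=2$, $k_2=1$, whereas under the deviation where $i$ sets $e'_i=\emptyset$ the POT gives $n=2$, $m=2$, $k_1=k_2=1$. Substituting these into eq.\eqref{eq_reserve_opt_uniform_distribution} with valuations uniform on $[0,\bar v]$ yields two distinct polynomial equations in $r_{opt}$, and a comparison of their positive roots shows that the reserve strictly decreases under the deviation. Choosing $v_i=\bar v$ with the other two valuations small enough makes $i$ the winner in both profiles, so only her payment moves; she strictly prefers to hide $j$, which contradicts Definition~\ref{definition_IC}.

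The main obstacle I anticipate is not the manipulation itself but making the counterexample airtight: \texttt{APX-R} has several allocation branches, and the reward on a critical diffusion node (eq.\eqref{eq_PPNpayment}) is a difference of clamped maxima that can itself shift when $r_{opt}$ moves. The two-subtree instance above is tailored so that the identity of the winner is invariant across the two profiles and the only quantity that actually changes in the revenue formula is the clamping reserve, making the sign of the utility change transparent without having to manipulate the sub-tree revenues $\operatorname{APX}_{T_x}$ in closed form.
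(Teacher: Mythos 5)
Your proposal is correct and follows essentially the same route as the paper: both exhibit a counterexample in which a buyer withholds an invitation, thereby shrinking a sub-tree in the POT, strictly lowering the solution of eq.~\eqref{eq_reserve_opt_uniform_distribution}, and reducing the winner's clamped payment $\max\{v^*_{-d_w}, r_{opt}\}$ while leaving the allocation unchanged. Your three-node instance (with roots $\bar v/r$ solving $t^2+t=5$ versus $2t=4$) is in fact a smaller and more explicitly verified version of the paper's six-node example, but it is not a different argument.
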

\begin{proof}
We prove it by giving a counterexample.
$r_{opt}$ is dependent on $m$, $n$, and each $k_x$, but buyers' actions have the potential to alter these values, thus undermining the DSIC property. See the following example.
\begin{figure}[ht]
 \centering
\subfigure[social network]{
        \centering
        \includegraphics[height=3cm]{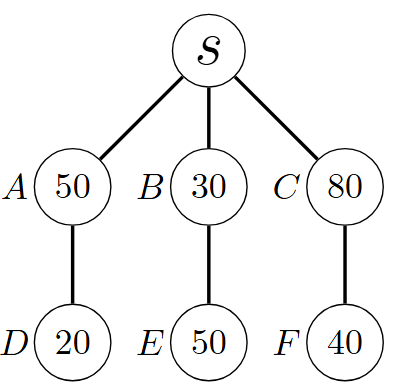}
        \label{fig_no_ic1}
}
\subfigure[truthful propagation]{
        \centering
        \includegraphics[height=3cm]{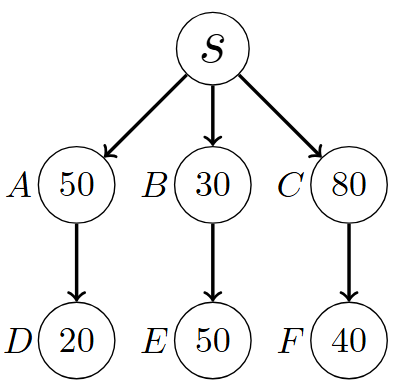}
        \label{fig_no_ic2}
}
\subfigure[no propagation of $C$]{
        \centering
        \includegraphics[height=3cm]{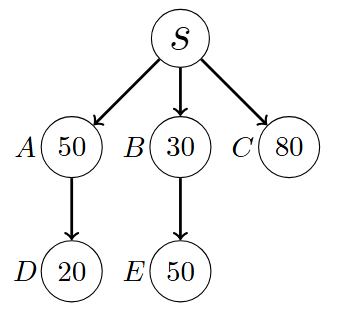}
        \label{fig_no_ic3}
}
\caption{DSIC fails under optimal reserve price.}
\label{counterexample_no_C}
\end{figure}

Suppose subfigure \ref{fig_no_ic2} is the corresponding POT when all buyers fully propagate sale info to all neighbors, while subfigure \ref{fig_no_ic3} is that when buyer C does not propagate. Buyers' $v_i$ are distributed in $U[0,1]$ and are shown in the circles.
In subfigure (a), $n=6, m=3, k_1=k_2=k_3=2$. By eq.\eqref{eq_reserve_opt_uniform_distribution}, the optimal reserve price is $r_{opt}=0.5773$. In subfigure (b), $n=5, m=3, k_1=k_2=2$ and $k_3=1$ which lead to  $r_{opt}=0.5664$. In both cases, the winner is C.  In this example, if C does not propagate, she would end up paying  less amount and having increased utility.   
This means \texttt{APX-R}  with the optimal reserve price $r_{opt}$ fails to be DSIC. 
\end{proof}

Guided by theorem \ref{theorem_r_opt_hard_and_no_IC}, we should abandon the use of the optimal reserve price for \texttt{APX-R}. A more practical approach is to relax the revenue optimization requirement and instead find a simple reserve price that approximate the optimal reserve price yet still guarantees DSIC.  We emphasize that there is a very large search space for the near-optimal $\gamma$.  Our goal is to find a simple but effective way. Therefore, we focus on reserve prices $r$
that are independent of the type of agent. Within such $r$, the DSIC property can be preserved, but the difficulty lies in revenue optimization over such reserve prices $r$.

\begin{proposition}\label{proposition_APX_DSIC}
The \texttt{APX-R} mechanism whose reserve price function 
$r$ is independent of both buyers' bidding strategy and propagation strategy is DSIC.
\end{proposition}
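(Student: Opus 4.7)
The plan is to reduce the claim to the known DSIC analysis of \texttt{IDM} by exploiting the hypothesis that $r$ does not depend on any buyer's action, so from the perspective of any fixed buyer $i$, the reserve price is an exogenous constant. I would fix an arbitrary buyer $i$ and an arbitrary action profile $\mathbf{t}'_{-i}$ of the remaining buyers, and prove the two DSIC dimensions, bidding and propagation, separately, by a case analysis on the role that $i$ plays in the output of Algorithm~\ref{alg_Mechanism_APX}: (i)~winner; (ii)~diffusion critical node on the winner's DCS; (iii)~neither. Note that whether $i$ belongs to the seller's reachable component, her DDG $d_i$, and her DCS are all determined by $\mathbf{t}'_{-i}$ together with $e_i'$, but are invariant in $v_i'$; and the highest bid outside $d_i$, denoted $v^*_{-d_i}$, is invariant in both $v_i'$ and $e_i'$, since $e_i'$ only adds buyers into $d_i$.

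\textbf{Bidding dimension (fix $e_i' = e_i$).} I would show that conditional on $i$ winning, her payment equals $\max\{v^*_{-d_i},r\}$ and is independent of $v_i'$; moreover, by the selection rule in lines 7--11 of Algorithm~\ref{alg_Mechanism_APX}, $i$ wins under $v_i'$ if and only if $v_i' \ge \max\{v^*_{-d_i},r\}$ (combining the reserve-price check with the ``highest bid after removing $d_{i+1}$'' condition; the equivalence follows from tracing the DCS and observing that the two criteria together pin down the first node whose bid beats both $r$ and the best offer from outside its downstream). Setting $\tau := \max\{v^*_{-d_i},r\}$ gives a standard threshold rule with payment $\tau$ upon winning, from which the usual Myerson-style argument shows that bidding $v_i' = v_i$ weakly dominates any misreport: if $v_i \ge \tau$ the truthful bid secures utility $v_i-\tau \ge 0$, and if $v_i < \tau$ it secures utility $0$, and no alternative bid can do strictly better. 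I would also handle the case where $i$ is a critical node on the DCS: her reward $\max\{v^*_{-d_i},r\}-\max\{v^*_{-d_{i+1}},r\}$ is fully determined by $\mathbf{t}'_{-i}$ and $e_i'$, so $v_i'$ only affects whether $i$ is displaced into the winner slot, which by the analysis above is never profitable relative to truthful bidding.

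\textbf{Propagation dimension (fix $v_i' = v_i$).} Here I would argue that enlarging or shrinking $e_i'$ only affects the downstream set $d_i$ and the identities of DCS members below $i$. As noted above, $v^*_{-d_i}$ is unaffected by $e_i'$, so the winner payment $\max\{v^*_{-d_i},r\}$ that $i$ would face is invariant in $e_i'$. For the critical-node reward, I would use a telescoping argument identical in spirit to the one used for \texttt{IDM}: along the DCS from $i$ downward, consecutive pairs of payments cancel, so withholding propagation to a neighbor $k$ either (a)~keeps $i$'s role and reward unchanged, or (b)~promotes $i$ from critical node to winner (when $k$ was upstream of a previously-winning competitor), and a direct comparison shows the resulting utility $v_i-\max\{v^*_{-d_i},r\}$ is no larger than $\max\{v^*_{-d_i},r\}-\max\{v^*_{-d_{i+1}^{\text{new}}},r\}$ because $d_{i+1}^{\text{new}} \subseteq d_{i+1}^{\text{old}}$ so $\max\{v^*_{-d_{i+1}^{\text{new}}},r\} \ge \max\{v^*_{-d_{i+1}^{\text{old}}},r\}$. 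In all cases truthful propagation $e_i' = e_i$ is weakly optimal.

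\textbf{Main obstacle.} The delicate step is the propagation analysis when the reserve price $r$ is binding, i.e., when $v^*_{-d_{i+1}} < r \le v^*_{-d_i}$ for some splits along the DCS. Here the telescoping structure of \texttt{IDM}-style payments is preserved only because $\max\{\cdot,r\}$ is monotone in its first argument, and because $r$ is identical in every term of the sum; I would state and verify this preservation as a short lemma (that $\max\{a,r\}-\max\{b,r\} \ge 0$ whenever $a \ge b$, with telescoping cancellation on a chain $v^*_{-d_w} \ge v^*_{-d_{w-1}} \ge \cdots$) and then apply it uniformly across the DCS. The key reason the argument succeeds is precisely the hypothesis of the proposition: because $r$ does not vary with any buyer's $(v_i',e_i')$, the reserve is a common constant across all counterfactual branches compared in the incentive analysis, and no buyer can manipulate $r$ to shift the threshold in her favor.
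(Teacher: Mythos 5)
Your proposal follows essentially the same route as the paper: the paper proves the proposition via two lemmas (truthful bidding is optimal for any fixed propagation action, and full propagation is optimal for any fixed bid), each by the same three-way case analysis on whether the agent is the winner, a critical node on the winner's DCS, or neither, using exactly the threshold/critical-winning-bid structure and the monotonicity of $\max\{\cdot,r\}$ that you describe. One small slip to fix when writing it out: in your propagation case (b) the comparison should be between the deviation utility $v_i-\max\{v^*_{-d_i},r\}$ and the truthful critical-node reward $\max\{v^*_{-d_{i+1}},r\}-\max\{v^*_{-d_i},r\}$ (which dominates because $v_i\le v^*_{-d_{i+1}}$), and you should also cover the sub-cases where $i$ stays a critical node but with a strictly smaller reward, or falls off the DCS entirely and gets zero.
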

The proof technique is similar to that for the original information diffusion mechanism (\texttt{IDM}) \citep{li2022diffusion}. The process of our proof is divided into two steps, we keep one parameter in the action $(v_j', e_j')$ fixed while varying the other one. The first step is to demonstrate  when the propagation action $e'_j$ is fixed, the payment rule outlined in \texttt{APX-R} ensures that for any buyer $j$, submitting her true valuation $v_j$ maximizes their utility.
\begin{lemma}\label{lemma_IC_fixed_ei}
Under the mechanism \texttt{APX-R}, when applying type-independent reserve price function $r$, for any given propagation action, truthful bidding results in the highest utility for every buyer. 
\end{lemma}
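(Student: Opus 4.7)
The plan is to fix buyer $j$'s propagation action $e_j'$ and every other agent's action so that the POT and the bids of all buyers other than $j$ are determined, and then compare $j$'s utility across all deviations in her bid $v_j'$. Under APX-R, exactly one of three things happens to $j$: she is the winner, a DCN on the winner's DCS, or outside the winning DCS. I would handle all of them within one unified threshold argument, exploiting that the reserve price $r$ is constant in the agents' actions.

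First I would isolate the quantities that do not move with $v_j'$. Writing $M := v^*_{-d_j}$, since $j \in d_j$ the value $M$ is independent of $v_j'$. Letting $j+1$ denote the next DCN below $j$ on the DCS toward whichever buyer holds the overall highest bid, $j \in d_{j+1}$ as well, and therefore $M^+ := \max\{v_i' : i \in N \setminus (d_{j+1} \cup \{j\})\}$ is likewise independent of $v_j'$. Moreover, for any strict ancestor $a$ of $j$ the qualifying test $v_a' = v^*_{-d_{a+1}}$ uses bids drawn from $N \setminus d_{a+1}$, and since $j \in d_{a+1}$ this test too is independent of $v_j'$. Consequently the event ``some strict ancestor of $j$ is selected as the winner before the algorithm reaches $j$'' depends only on other agents' actions.

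Conditional on that event, an ancestor wins, $j$ sits strictly below the winning DCS, and $u_j = 0$ regardless of $v_j'$, so truthful bidding is trivially optimal. Conditional on its complement, I would establish that a single threshold $\tau_j := \max\{M^+, r\}$ separates winning from non-winning: if $v_j' \geq \tau_j$ then $j$ satisfies both APX-R winning tests ($v_j' \geq r$ and $v_j' = v^*_{-d_{j+1}} = \max\{M^+, v_j'\}$) and therefore wins at payment $\max\{M, r\}$, which is independent of $v_j'$; if $v_j' < \tau_j$ then either $j$ is a DCN receiving reward $\max\{v^*_{-d_{j+1}}, r\} - \max\{M, r\} = \max\{M^+, r\} - \max\{M, r\}$ (using $v^*_{-d_{j+1}} = \max\{M^+, v_j'\} = M^+$ in this regime), or the overall highest bid falls below $r$ and the sale aborts with $u_j = 0$. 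Either way the non-winning payoff is constant in $v_j'$ and equals $\tau_j - \max\{M, r\}$ at the threshold.

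The DSIC conclusion then follows by the standard two-case deviation argument. If $v_j \geq \tau_j$, truthful bidding earns $v_j - \max\{M, r\}$, while any downward deviation either keeps $j$ as winner at the same payment or demotes $j$ to a non-winning role with payoff at most $\tau_j - \max\{M, r\} \leq v_j - \max\{M, r\}$. If $v_j < \tau_j$, truthful bidding yields the non-winning payoff $\tau_j - \max\{M, r\}$ (or zero), while any upward deviation wins at payoff $v_j - \max\{M, r\} < \tau_j - \max\{M, r\}$. The main obstacle I anticipate is the boundary where a change in $v_j'$ flips $j$ between being the overall highest bidder (the DCS terminates at $j$) and being a strict ancestor of the highest bidder (the DCS extends to some $h \in d_j \setminus \{j\}$); the worry is that $j+1$ and $M^+$ could silently change with $v_j'$. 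I would resolve this by observing that the child of $j$ on the path toward any bidder in $d_j \setminus \{j\}$ is determined by the fixed propagation structure of $j$'s subtree, so $j+1$ and $M^+$ are well-defined from topology alone and the same threshold $\tau_j$ and payment formula apply on both sides of the flip, closing the case analysis.
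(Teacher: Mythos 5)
Your unified threshold argument is a genuinely different route from the paper's proof (which splits into three cases according to $j$'s realized role: winner, member of $c_w\setminus\{w\}$, or neither), but it rests on a claim that is false: that the event ``some strict ancestor of $j$ qualifies and is selected as winner before the algorithm reaches $j$'' depends only on the other agents' actions. An ancestor's qualifying test is run only if that ancestor lies on the DCS of the \emph{overall highest bidder} $h$, and the identity of $h$ — hence which ancestors are tested, and which sets $d_{a+1}$ their tests refer to — changes with $v_j'$. Concretely, let $h_0$ be the top bidder in $N\setminus\{j\}$ and suppose $h_0$ is itself a diffusion-critical ancestor of $j$ (so $j\in d_{h_0}$, $h_0\notin d_j$), with $v'_{h_0}\ge r$. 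For $v_j'<v'_{h_0}$ the examined DCS is $c_{h_0}$, which does not contain $j$, and $u_j=0$. For $v_j'>v'_{h_0}$ the examined DCS becomes $c_j$; now $h_0$ is tested, it satisfies $v'_{h_0}=v^*_{-d_{h_0+1}}$ because it is the top bid outside $d_{h_0+1}\ni j$, and it intercepts the win. So $j$ does \emph{not} win even though $v_j'\ge\tau_j$, the truthful payoff when $v_j\ge\tau_j$ is $0$ rather than $v_j-\max\{M,r\}$, and the ``ancestor wins'' event is triggered precisely by $j$'s own bid. The boundary issue you flagged (the identity of $j+1$) is not the real obstacle; the obstacle is that raising $v_j'$ can move previously untested ancestors onto the examined DCS, and, when $j$ and $h$ sit in different branches below a common DCN $a$, can change the set $d_{a+1}$ appearing in $a$'s test so that the test itself is no longer independent of $v_j'$.

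The lemma still holds in these configurations — in each of them $j$'s utility is $0$ for every bid — but establishing that is exactly the content of the paper's sub-case ``$j\notin c_w$ and $j\in d_w$: utility is $0$ regardless of her bid,'' i.e., it requires an analysis organized around $j$'s role under the realized profile rather than a single bid threshold. To salvage your framing you would need to first prove that, as $v_j'$ varies with everything else fixed, the winner is always either $j$ itself or a node $w_0$ whose identity and relation to $j$ ($j\in d_{w_0}$ versus $j\notin d_{w_0}\cup c_{w_0}$) make $j$'s non-winning payoff constant; as written, you condition on an event that is not determined by the other agents' actions alone, and the threshold characterization of winning fails.
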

 
\begin{proof}	 
Under true bidding, there are only three case of $j$'s categories. We prove this lemma holds for each category of $j$.

Case $1$: $j \notin c_w$. Under true bidding $v_j'=v_j$,  her payment is $0$. Now consider false bidding. If $j$ is in the winner's PDG, meaning $j \notin c_w$ and $ j \in d_w$, her utility is always $0$ regardless of her bid. If $j \notin c_w$ and $j \notin d_w$, overbidding with $v'_j > v_w$ will make her the winner with a payment of $v_w$, resulting in a utility of $v_j - v_w \leq 0$.

Case $2$: $j\in c_w \backslash \{w\}$. Under bidding $v_j'=v_j$, her utility is $\max \{v^*_{-d_{j+1}}, r\}-\max \{v^*_{-d_j}, r\}$ which is constrained by $v_j \leq \max \{v^*_{-d_{j+1}},r\}$. Overbidding could make her win, resulting in a utility of $v_j - \max \{v^*_{-d_j}, r\}$, which is less than her true bidding utility. Additionally, lower bidding does not change  utility. 

Case $3$: $j=w$. Her utility is $ v_j-\max \{v^*_{-d_j}, r\}$. If  overbids, she still wins with no utility change. If she lower bids and becomes a PPN of the winner, her utility becomes $\max \{v^*_{-\left(d_{j+1}\cup\{j\}\right)}, r\} - \max \{v^*_{-d_j}, r\}$, where the minuend represents the second-highest bid in $-d_{j+1}$, leading to a lower utility than  true bidding utility. If she lower    bids and becomes a non-PPN of the new winner, her utility becomes $0$.

For any buyer, given any  propagation action, bidding her true valuation  maximizes her utility.
\end{proof}

The second step parallels with Lemma \ref{lemma_IC_fixed_ei}. We analyze the utilities resulting from full propagation by buyers in each category. It shows that, for any buyer $j$,  while keeping her bid $v'_j$ fixed, full propagation maximizes her utility. The proof follows a similar way to that of Lemma \ref{lemma_IC_fixed_ei}. 
\begin{lemma}\label{lemma_IC_fixed_vi}
Under the approximation mechanism with type-independent reserve price function $r$, for any fixed bid, fully propagating the sale info to all neighbours results in the highest utility for every buyer. 
\end{lemma}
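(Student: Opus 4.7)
The plan is to mirror the three-case case analysis used in Lemma \ref{lemma_IC_fixed_ei}, but now holding the bid $v_j'$ fixed and comparing the utility obtained under full propagation $e_j' = e_j$ against any strict subset $e_j' \subsetneq e_j$. I will argue in each case that restricting propagation weakly decreases $j$'s utility, using the key structural fact that $j$'s propagation choice only affects the set of agents that lie strictly downstream of $j$ in the POT (i.e., within $d_j$), and never changes bids or reachability above $j$. Because the reserve price function is assumed type-independent by hypothesis, $r$ itself does not respond to $j$'s propagation choice, so the payment expressions $\max\{v^*_{-d_j}, r\}$ and $\max\{v^*_{-d_{j+1}}, r\}$ vary only through the sub-market maxima.

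Let $w$ and $c_w$ denote the winner and her DCS under full propagation by $j$. In \textbf{Case 1} ($j \notin c_w$), full propagation gives $j$ utility $0$. Restricting $e_j'$ only removes nodes in $d_j$ from the market; I will check that either $j$ remains off the winning path with utility $0$, or $j$ becomes a critical node / winner in the reduced market, in which case the allocation criterion forces her payment to exceed her bid and her utility to be non-positive. In \textbf{Case 2} ($j \in c_w \setminus \{w\}$), the subtrahend $\max\{v^*_{-d_j}, r\}$ is invariant under $e_j'$ since it depends only on bids strictly outside $d_j$. The minuend $\max\{v^*_{-d_{j+1}}, r\}$ can only weakly decrease when $j$ prunes her subtree, because pruning removes candidate bids from the set $-d_{j+1}$. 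Thus the utility gap shrinks weakly. In \textbf{Case 3} ($j = w$), the payment $\max\{v^*_{-d_j}, r\}$ is again invariant under $e_j'$; if $j$ retains the winning position, her utility is unchanged, and I will show that if pruning drops $j$ out of the winning position she falls into Case 1 or Case 2 with utility no larger than her Case 3 utility.

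The main technical obstacle will be handling the transitions between categories induced by pruning in Cases 2 and 3. Specifically, in Case 2 the dangerous deviation is when $j$ severs the link to the next critical node on the path to $w$, disconnecting $w$ entirely and possibly promoting $j$ to the new winner. To rule this out, I will use the allocation criterion that $j$ failing to win under full propagation implies $v_j \leq \max\{v^*_{-d_{j+1}}, r\}$, which yields $v_j - \max\{v^*_{-d_j}, r\} \leq \max\{v^*_{-d_{j+1}}, r\} - \max\{v^*_{-d_j}, r\}$, i.e., the new winning utility does not exceed the original critical-node utility. A symmetric argument bounds the Case 3 transition: losing her winning status by pruning can only place $j$ as a critical node or off-path loser, and the same inequality chain shows her original winning utility dominates. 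Combining the three cases establishes that full propagation is weakly dominant for every fixed bid, completing the lemma.
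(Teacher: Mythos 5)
Your proposal follows essentially the same route as the paper's own proof: a case analysis on the buyer's role under full propagation (off the winning path, critical node in $c_w\setminus\{w\}$, or winner), exploiting that pruning only affects $d_j$ so that $\max\{v^*_{-d_j},r\}$ is invariant while $\max\{v^*_{-d_{j+1}},r\}$ can only weakly decrease, and using the allocation criterion $v_j\le\max\{v^*_{-d_{j+1}},r\}$ to bound the deviation in which a critical node promotes herself to winner. The argument is correct and at a level of detail comparable to (in Case 1, slightly more careful than) the paper's version.
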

\begin{proof}	 
We employ a similar  approach as used in Lemma 1: examining the utility of each buyer class.

Case 1: When $j\notin c_w$ by full propagation, it holds that $j\in N-c_w$ even when she only propagate the sale info  to $e'_j \subseteq e_j$.

Case $2$: $j\in c_w \backslash {w}$. Then with  $e_j'=e_j$, her utility is a difference $\max\{ v^*_{-d_{j+1}}, r\}-\max \{v^*_{-d_j}, r\} $ which is constained by $v_j \leq \max\{ v^*_{-d_{j+1}}, r\}$. If she strategically propagates to $e'_j\subseteq e_j$ and fails to be a PPN of the winner, then her utility reduces to $0$. If under this strategic $e'_j$ she is still in a PPN of some winner $w$, her utility remains in the form of a difference $\max\{v^*_{-d_{(j+1)'}}, r\}-\max \{v^*_{-d_j}, r\} $. However, now $-d_{(j+1)'}$ is only a subset of $-d_{j+1}$, indicating that this new difference is less than the previous one. If this strategic $e'_j$ leads her to win, then her utility as the winner is $v_j-\max \{v^*_{-d_j}, r\}$, which is also less than the utility achieved through full propagation.

Case $3$: $j=w$. Her utility by full propagation is $v_j-\max \{v^*_{-d_j}, r\} $. According the allocation rule in \texttt{APX-R}, she is the highest bidder in $-d_{j+1}$ and ranks first in $c_h$. Since $e'_j$ can only change $d_{j+1}$, then with $e'_j \subseteq e_j$ she must still remain the first highest bidder in $-d_{j+1}$. Thus, her position remains unchanged as a winner, resulting in no impact on her utility.

In a conclusion, for any fixed bid, propagating the sale info to all neighbours results in the highest utility for every buyer. 
\end{proof}

Lastly, in the scenario where no buyer's valuation meets $r$, if buyer $j$ overbids and surpasses $r$, her utility becomes negative as she should pay $r$. Thus there's no advantage for her in misreporting her valuation. 
Truthfully bidding and fully propagating still optimize her utility in this scenario. Combining Lemmas \ref{lemma_IC_fixed_ei} and \ref{lemma_IC_fixed_vi} with the aforementioned analysis, we can conclude that \texttt{APX-R} with reserve price function $r$ which is independent of buyers' types is DSIC. Proposition \ref{proposition_APX_DSIC} is proved.

\subsection{Secure Revenue Constraint}
We   first guarantee that the  reserve price function $\gamma$ does not result in lower revenue for the seller compared to the scenario without any reserve price. That is, the secure revenue constraint is
\begin{equation}
    \operatorname{APX}(\mathbf{t'};\gamma) \geq \operatorname{APX}(\mathbf{t'};0).
\label{eq_rc_better_than_0}
\end{equation}
According to the definition of function $\operatorname{APX}(\mathbf{t'};r)$ in inequality \eqref{eq_revenue_graph},
the above constraint is equivalent to 
$
\sum_{x=1}^m \operatorname{APX}_{T_x}(\mathbf{t'};\gamma)-\operatorname{APX}_{T_x}(\mathbf{t'};0)\geq 0. 
$
This inequality holds if each subtraction within the summation is non-negative, that is,
$$ 
\forall T_x, \operatorname{APX}_{T_x}(\mathbf{t'};\gamma)-\operatorname{APX}_{T_x}(\mathbf{t'};0) \geq 0. 
$$
Using  eq.\eqref{eq_APX_Tx} under uniform distribution, 
the condition eq.\eqref{eq_rc_better_than_0}  holds via the following proposition.
\begin{proposition}[Sufficient Condition for Secure Revenue]\label{proposition_gamma_every_subtree}
 For any network, under \texttt{APX-R}, a sufficient condition for a secure revenue is
{ { \begin{equation}\label{eq_condition1_of_rc}
\forall T_x, \gamma_x\leq \bar{v}\cdot \sqrt[\leftroot{0}\uproot{10} k_x]{\frac{(n+1)}{(k_x+1)(n-k_x+1)}},
\end{equation}}}
where $\gamma_x$ is the  reserve price for $T_x$. 
\end{proposition}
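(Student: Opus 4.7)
The plan is to reduce the global secure-revenue inequality to a per-subtree algebraic inequality and solve it in closed form. Since eq.~\eqref{eq_revenue_graph} expresses $\operatorname{APX}(\mathbf{t}';r)$ as a sum $\sum_x \operatorname{APX}_{T_x}(\mathbf{t}';r)$, if each summand satisfies $\operatorname{APX}_{T_x}(\mathbf{t}';\gamma_x) \geq \operatorname{APX}_{T_x}(\mathbf{t}';0)$, then the global constraint \eqref{eq_rc_better_than_0} follows. So I would work entirely at the subtree level.

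Next I would invoke the closed form in eq.~\eqref{eq_uniform_APX_Tx_r} for the uniform distribution, evaluate it at both $r=\gamma_x$ and $r=0$, and subtract. The two constant terms (the factors in front of the bracketed expressions) cancel, so the difference collapses to
\begin{equation*}
\operatorname{APX}_{T_x}(\mathbf{t}';\gamma_x)-\operatorname{APX}_{T_x}(\mathbf{t}';0)
= \frac{\bar v}{n-k_x+1}\left(\frac{\gamma_x}{\bar v}\right)^{n-k_x+1}
- \frac{\bar v(k_x+1)}{n+1}\left(\frac{\gamma_x}{\bar v}\right)^{n+1}.
\end{equation*}
Setting $u=\gamma_x/\bar v$ and dividing through by the common factor $\bar v \cdot u^{n-k_x+1}$ (the case $\gamma_x=0$ being trivial), the non-negativity requirement becomes $u^{k_x}\leq \frac{n+1}{(k_x+1)(n-k_x+1)}$, which is exactly the bound in eq.~\eqref{eq_condition1_of_rc} after taking $k_x$-th roots and multiplying by $\bar v$.

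The computation is elementary, so the main thing to verify carefully is feasibility: the derived upper bound must actually lie in $[0,\bar v]$ so that a nontrivial reserve price is permissible. For this I would note the identity $(k_x+1)(n-k_x+1) = (n+1) + k_x(n-k_x)$, which is $\geq n+1$ whenever $1\leq k_x \leq n$, with equality only at $k_x=n$. Hence $\tfrac{n+1}{(k_x+1)(n-k_x+1)}\leq 1$, so the bound carves out a meaningful, non-vacuous range of admissible reserve prices in every subtree, completing the argument.
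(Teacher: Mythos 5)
Your proposal is correct and follows essentially the same route the paper sketches in the text preceding the proposition: decompose the secure-revenue constraint subtree by subtree, evaluate the uniform-distribution closed form eq.~\eqref{eq_uniform_APX_Tx_r} at $r=\gamma_x$ and $r=0$, and solve the resulting per-subtree inequality for $\gamma_x$. Your added feasibility check via the identity $(k_x+1)(n-k_x+1)=(n+1)+k_x(n-k_x)$ is a small but worthwhile supplement that the paper omits.
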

As an interpretation, $n+1$ is the size of the entire networked market, including the seller and all bidders. $k_x+1$ is the size of the $x$-th submarket, specifically referring to  $T_x$ and the seller.  $n-k_x+1$ is the size of the remaining market after removing $T_x$ from the networked market.
Inequality \eqref{eq_condition1_of_rc} shows that different submarkets give different constraints on the secure revenue. 
Since there are $m$ subtrees in the whole partial order tree, the secure revenue requires a total of $m$ inequality constraints.

Given that the search space of reserve prices for the networked market could be very large, can we find a simple reserve price that still satisfies the inequality \eqref{eq_condition1_of_rc} and still achieves near-optimal revenue? In the following, we focus on a global reserve price $\gamma$ for all agents.
Denote the upper-bound of $\gamma_x$  by $\operatorname{\sup} \gamma_x$.
Using differential analysis shows that the monotonicity  of $\operatorname{\sup}\gamma_x$.
\begin{proposition}[Upper-Bounds of Reserve Prices]\label{proposition:monotone_upperbound_with_kx}
The upper-bound $\operatorname{\sup}\gamma_x$ of the reserve price set for each sub-market  is  increasing on $k_x$.
\end{proposition}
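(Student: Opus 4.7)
The plan is to treat $k_x$ as a continuous variable and reduce monotonicity of $\sup \gamma_x = \bar{v}\bigl(\tfrac{n+1}{(k_x+1)(n-k_x+1)}\bigr)^{1/k_x}$ to a clean convexity statement. Taking logarithms, monotonicity of $\sup \gamma_x$ in $k_x$ is equivalent to monotonicity of
$$\psi(k) \;:=\; \frac{A(k)}{k}, \qquad A(k) \;:=\; \ln(n+1) - \ln(k+1) - \ln(n-k+1),$$
for $k \in (0,n]$. A direct attack via $\psi'(k) = (k A'(k) - A(k))/k^{2}$ is awkward because $A'(k) = -\tfrac{1}{k+1}+\tfrac{1}{n-k+1}$ changes sign at $k = n/2$, so $\psi$ is neither handled by a uniform sign argument on $A'$ nor by a simple AM--GM bound.

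The key observation that makes the proof short is that $A(0) = \ln(n+1) - \ln 1 - \ln(n+1) = 0$. Hence $\psi(k) = (A(k)-A(0))/(k-0)$ is exactly the secant slope of $A$ from the origin to $(k, A(k))$. The first step will therefore be to compute
$$A''(k) \;=\; \frac{1}{(k+1)^{2}} + \frac{1}{(n-k+1)^{2}} \;>\; 0,$$
which establishes strict convexity of $A$ on $[0,n]$.

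The second step is to invoke the standard fact that if a function is convex on an interval containing $0$ and vanishes at $0$, then its secant slope from the origin is nondecreasing. I would include a one-line verification: for $0 < k_{1} < k_{2}$, write $k_{1} = \tfrac{k_{1}}{k_{2}}\, k_{2} + \bigl(1 - \tfrac{k_{1}}{k_{2}}\bigr)\cdot 0$ and apply convexity to obtain $A(k_{1}) \le \tfrac{k_{1}}{k_{2}}\, A(k_{2})$, i.e.\ $\psi(k_{1}) \le \psi(k_{2})$. Combined with strict convexity of $A$, this gives strict monotonicity of $\psi$, hence of $\sup \gamma_{x}$, in $k_{x}$; restricting to integer values $k_{x} \in \{1,\dots,n\}$ yields the claim.

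The only real obstacle is noticing that $A(0)=0$ and recasting the problem as a secant-slope question; once that is in hand, the remaining work is the trivial second-derivative computation and the convexity lemma. I would also briefly record the two boundary values as a sanity check: $\sup \gamma_{x}|_{k_{x}=1} = \bar{v}(n+1)/(2n)$, which recovers the Myerson reserve $\bar{v}/2$ in the limit $n\to\infty$, and $\sup \gamma_{x}|_{k_{x}=n} = \bar{v}$, confirming that the upper bound grows from roughly the Myerson reserve toward the maximum possible valuation as the sub-market absorbs more of the network.
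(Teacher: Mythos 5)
Your proof is correct, and it takes a genuinely different and cleaner route than the paper's. The paper differentiates $\sup\gamma_x$ directly, obtaining the bracket $\frac{k_x(2k_x-n)}{(k_x+1)(n-k_x+1)}+\ln\frac{(k_x+1)(n-k_x+1)}{n+1}$, and then argues its sign by a three-case analysis with ad hoc bounds. Your reformulation --- that $\ln(\sup\gamma_x/\bar v)=A(k)/k$ with $A(k)=\ln(n+1)-\ln(k+1)-\ln(n-k+1)$, $A(0)=0$, and $A''(k)=\frac{1}{(k+1)^2}+\frac{1}{(n-k+1)^2}>0$ --- replaces all of that with the single standard fact that the secant slope of a convex function through a point where it vanishes is nondecreasing; your one-line verification via $A(k_1)\le \frac{k_1}{k_2}A(k_2)$ is exactly right, and strict convexity upgrades it to strict monotonicity. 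What your approach buys is not just brevity but soundness: the paper's Case~2 leans on the claim that $\ln z-\frac1z>0$ for all $z>1$, which is false for $z$ near $1$ (e.g.\ $z=\frac{k_x+1}{k_x}$ with $k_x\ge 2$ gives $\ln z<\frac1z$), and its Case~3 asserts the derivative vanishes at $k_x=n$, whereas the bracket there equals $\frac{n^2}{n+1}>0$ (consistent with your $\psi'(n)=\frac{1}{n+1}>0$). Your argument sidesteps both issues entirely and also yields slightly more (strict increase on all of $(0,n]$). Your boundary sanity checks $\sup\gamma_x|_{k_x=1}=\bar v\frac{n+1}{2n}$ and $\sup\gamma_x|_{k_x=n}=\bar v$ are both correct.
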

\begin{proof}
Recall the upper-bound of every $\gamma_x$ is $\operatorname{\sup}\gamma_x=\bar{v}\cdot \sqrt[\leftroot{0}\uproot{10} k_x]{\frac{(n+1)}{(k_x+1)(n-k_x+1)}}$.
To investigate the monotonicity of $\operatorname{\sup} \gamma_x$ with respect to $k_x$, we take the first-order partial derivative with respect to $k_x$ of it:
\begin{displaymath}
\frac{\partial{(\operatorname{\sup}\gamma_x)}}{\partial{k_x}}=\frac{\operatorname{\sup}\gamma_x}{k_x^2}\left[\frac{k_x(2k_x-n)}{(k_x+1)(n-k_x+1)}+\ln{\frac{(k_x+1)(n-k_x+1)}{n+1}}\right].
\end{displaymath}
We discuss whether $\frac{\partial{(\operatorname{\sup}\gamma_x)}}{\partial{k_x}}$ is non-negative in the following three cases:

Case 1: $k_x=1$,
{ \small{
\begin{displaymath}
\frac{\partial{(\operatorname{\sup}\gamma_x)}}{\partial{k_x}}=\gamma(1)\cdot\left(\ln{\frac{2n}{n+1}}-\frac{n-2}{2n}\right)=\gamma(1)\cdot\left(\ln{\frac{2n}{n+1}}-\frac{n+1}{2n}+\frac{3}{2n}\right)>0.
\end{displaymath} }}

Case 2: $1<k_x<n$, thus the following inequality holds:
{ {
 \begin{displaymath}
     \frac{(k_x+1)(n-k_x+1)}{n+1}\geq \frac{k_x+1}{k_x}.
 \end{displaymath}}}
 Moreover, due to that $\frac{(2k_x-n)}{(n-k_x+1)} >-1$,  we can get:
 { {
 \begin{displaymath}
 \frac{k_x(2k_x-n)}{(k_x+1)(n-k_x+1)} > -\frac{k_x}{k_x+1}.
 \end{displaymath}}}
 From these two inequalities above, we can derive:
 { {
\begin{displaymath}
\frac{k_x(2k_x-n)}{(k_x+1)(n-k_x+1)}+\ln{\frac{(k_x+1)(n-k_x+1)}{n+1}}> \ln{\frac{k_x+1}{k_x}}-\frac{k_x}{k_x+1},
\end{displaymath} }}
where the RHS is positive due to that the basic inequality in mathematics
{ {$\left(\ln{z}-\frac{1}{z}\right)> 0$}} holds for $\forall z> 1$.
Then each term of the product in the $\frac{\partial{(\operatorname{\sup}\gamma_x)}}{\partial{k_x}}$ is positive for $k_x \in (1,n)$, which indicates that $\frac{\partial{(\operatorname{\sup}\gamma_x)}}{\partial{k_x}}>0$ holds for $k_x \in (1,n)$.

Case $3$: $k_x=n$, $\frac{\partial{(\operatorname{\sup}\gamma_x)}}{\partial{k_x}}=0$ holds.

Taking into account these three cases above, we can conclude that
\begin{displaymath}
\operatorname{\sup}\gamma_x=\bar{v}\cdot \sqrt[\leftroot{0}\uproot{10}  k_x]{\frac{(n+1)}{(k_x+1)(n-k_x+1)}}
\end{displaymath}
is monotonically increasing on $k_x \in [1,n]$.   
\end{proof}


Now  based on this monotonicity, a sufficient condition for a global reserve price $\gamma$ to generate higher revenue than no reserve price is that, it should not surpass the sub-market reserve price upper-bounds $\operatorname{\sup}\gamma_x$ of every sub-market $T_x$.  Moreover, by Proposition \ref{proposition:monotone_upperbound_with_kx}, a smaller sub-tree identifies a smaller $\operatorname{\sup}\gamma_x$. Thus, our focus should be on the tight upper-bound induced by the smallest sub-tree.

Denote $\underline{k}$  the size of the minimum possible sub-tree in POT generating from the underlying network in historical auctions.  Note that it is a prior estimation based on historical auction data for this network and is independent from the current auction process, thus not affected by any agent's diffusion behavior. If this network has not conducted an auction before and the seller lacks sufficient prior knowledge about $\underline{k}$, then she can set the reserve price for the auction by making a conservative estimate of $\underline{k}$ (e.g., use Myerson reserve price by setting $\underline{k}=1$) and conduct multiple auctions to learn more about $\underline{k}$ accurately. In the following section of approximation analysis, we will show in eq.(\ref{eq_APX_SPA}) that even using $\underline{k} = 1$ is preferable to having no reserve price.
Due to DSIC, the network structure can be discovered, and the seller can eventually obtain the accurate $\underline{k}$ for a better revenue.

Substituting $\underline{k}$ for every $k_x$ in equation \eqref{eq_condition1_of_rc}, we get the global constraint for all $\gamma_x$. This is a tight upper-bound and is a refined sufficient condition for the secure revenue.



\begin{corollary}[Bounded Global Reserve Price]\label{corollary_secure_revenue}
A  reserve price function
 $\gamma$ for any agent can induce a higher revenue than no reserve price, if 
{ {\begin{equation}\label{eq_condition2_of_rc}
    \gamma \leq \bar{v} \cdot \sqrt[\leftroot{0}\uproot{10} \underline k]{\frac{(n+1)}{(\underline{k}+1)(n-\underline{k}+1)}} 
\end{equation}}}

\end{corollary}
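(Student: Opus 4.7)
The plan is to obtain the corollary as a direct consequence of the two preceding propositions, treating it as the combination of a per-subtree sufficient condition with a monotonicity argument that selects the tightest such condition. Specifically, from the Sufficient Condition for Secure Revenue (Proposition on $\gamma_x$) we already know that a reserve price scheme preserves or improves the IDM revenue whenever, for every sub-tree $T_x$ in the POT, $\gamma_x \leq \bar{v}\cdot\sqrt[k_x]{(n+1)/[(k_x+1)(n-k_x+1)]}$. If we restrict attention to a single global reserve price $\gamma$ that must serve as $\gamma_x$ for every $T_x$ simultaneously, the sufficient condition becomes $\gamma \leq \min_{x}\operatorname{\sup}\gamma_x$, so the task reduces to identifying where this minimum is attained.

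For this, I would invoke Proposition on the monotonicity of $\operatorname{\sup}\gamma_x$, which establishes that $\operatorname{\sup}\gamma_x$ is monotonically increasing in $k_x$ on $[1,n]$. Consequently, the minimum of $\operatorname{\sup}\gamma_x$ over all sub-trees that may arise is attained at the smallest sub-tree size. By the definition of $\underline{k}$ as the size of the smallest possible sub-tree that the POT generated from the underlying network can exhibit (estimated a priori from historical auctions), this minimum equals
\begin{equation*}
\operatorname{\sup}\gamma_{\underline{k}} = \bar{v}\cdot \sqrt[\leftroot{0}\uproot{10} \underline{k}]{\frac{n+1}{(\underline{k}+1)(n-\underline{k}+1)}}.
\end{equation*}
Therefore any $\gamma$ satisfying inequality \eqref{eq_condition2_of_rc} automatically satisfies inequality \eqref{eq_condition1_of_rc} for every $T_x$, and the Sufficient Condition proposition delivers $\operatorname{APX}(\mathbf{t'};\gamma)\geq \operatorname{APX}(\mathbf{t'};0)$.

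The main subtle point, rather than a technical obstacle, is conceptual: one must justify that using $\underline{k}$ as a prior parameter independent of the current auction is legitimate. If $\underline{k}$ were allowed to depend on the realized diffusion action profile $\mathbf{t}'$, then $\gamma$ would inherit this dependence and the DSIC guarantee of Proposition on type-independent reserve prices could be lost. I would therefore emphasize that $\underline{k}$ is a historical-data-based estimate that is a priori fixed before the current round, so the resulting $\gamma$ remains independent of any agent's bid or propagation action, keeping both the secure-revenue guarantee and DSIC intact. The rest is a straightforward substitution of $\underline{k}$ for $k_x$ in the per-subtree bound.
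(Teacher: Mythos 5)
Your proposal is correct and follows essentially the same route as the paper: apply the per-subtree sufficient condition of Proposition~\ref{proposition_gamma_every_subtree}, use the monotonicity of $\operatorname{\sup}\gamma_x$ in $k_x$ from Proposition~\ref{proposition:monotone_upperbound_with_kx} to identify the binding constraint as the one from the smallest sub-tree, and substitute $\underline{k}$ to get the global bound. Your added remark that $\underline{k}$ must be a prior, type-independent estimate to preserve DSIC matches the paper's own discussion surrounding the corollary.
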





Any reserve price function $\gamma$ should be bounded by the RHS of this inequality.  
Note that $n$, $\bar{v}$ and $\underline{k}$  are all known constants. 
The seller does not need to know any network strcuture in advance. 




\subsection{Near-Optimal Reserve Price}
Moreover, we want the  reserve price $\gamma$ to yield a  high revenue and approximate the theoretical upperbound $\operatorname{OPT}(N) $ as much as possible. To optimize it, revisit the  sub-tree revenue $\operatorname{APX}_{T_x}(\mathbf{t'};r)$ in eq.\eqref{eq_APX_Tx}.
For analytical convenience, we utilize the setting of uniform distribution of the valuations, but the method can be extended to other distributions. 
Taking the derivative of $\operatorname{APX}_{T_x}(\mathbf{t'};r)$  with respect to $r$ and setting it equal to zero leads to the optimal value of $r$, which leads to the the following result.
\begin{lemma}[Revenue Monotonicity]\label{lemma_monotonicity_max}
The submarket revenue $\operatorname{APX}_{T_x}(\mathbf{t'};r)$ is increasing on $r \in (0, \frac{\bar{v}}{\sqrt[k_x]{k_x+1}})$ and decreasing on $ r \in (\frac{\bar{v}}{\sqrt[k_x]{k_x+1}}, \bar{v})$ . It is maximized  at $r^*_x=\frac{\bar{v}}{\sqrt[k_x]{k_x+1}}$.
\end{lemma}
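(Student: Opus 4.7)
The plan is to work directly with the closed-form expression for $\operatorname{APX}_{T_x}(\mathbf{t}';r)$ under the uniform distribution $U[0,\bar v]$, namely eq.\eqref{eq_uniform_APX_Tx_r}, and carry out a straightforward single-variable calculus argument in $r$. Since the submarket size $k_x$, the total number of buyers $n$, and the upper support $\bar v$ are all treated as fixed constants, the quantity is just a polynomial-like function of $r$ on $(0,\bar v)$, so the monotonicity claim reduces to a sign analysis of $\partial \operatorname{APX}_{T_x}/\partial r$.

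First I would differentiate eq.\eqref{eq_uniform_APX_Tx_r} term by term. The first term contributes $-\,(1+k_x)(r/\bar v)^{n}$ (after the $(n+1)$'s cancel against the exponent coming down), and the second term contributes $+\,(r/\bar v)^{n-k_x}$ (after the $(n-k_x+1)$'s cancel). Collecting and factoring out the common positive factor $(r/\bar v)^{n-k_x}$ gives the clean expression
\begin{equation*}
\frac{\partial \operatorname{APX}_{T_x}(\mathbf{t}';r)}{\partial r}
= \left(\frac{r}{\bar v}\right)^{n-k_x}\!\left[\,1 - (k_x+1)\left(\frac{r}{\bar v}\right)^{k_x}\right].
\end{equation*}
For $r\in(0,\bar v)$ the prefactor is strictly positive, so the sign is controlled entirely by the bracket.

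Second, I would solve $1-(k_x+1)(r/\bar v)^{k_x}=0$, which yields the unique positive root $r^*_x=\bar v/\sqrt[k_x]{k_x+1}$. Because $(r/\bar v)^{k_x}$ is strictly increasing in $r$, the bracket is strictly positive on $(0,r^*_x)$ and strictly negative on $(r^*_x,\bar v)$. Consequently $\operatorname{APX}_{T_x}(\mathbf{t}';r)$ is strictly increasing on $(0,r^*_x)$ and strictly decreasing on $(r^*_x,\bar v)$, so $r^*_x$ is indeed the unique global maximizer on $(0,\bar v)$, establishing the lemma.

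Honestly there is no real obstacle here: the whole argument is essentially one differentiation plus a factoring that exposes the sign structure. The only thing to be careful about is bookkeeping of exponents when differentiating the two $(r/\bar v)$ powers, and noting that $k_x\ge 1$ so the factored prefactor $(r/\bar v)^{n-k_x}$ is well defined and nonnegative on $(0,\bar v)$. One might also want to remark that the same approach works for any regular $F$ by differentiating eq.\eqref{eq_APX_Tx_r} instead, yielding a first-order condition on a virtual-value-like quantity, but for the statement at hand the uniform closed form suffices.
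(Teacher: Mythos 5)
Your proposal is correct and follows essentially the same route as the paper, which also obtains $r^*_x$ by differentiating the uniform-distribution closed form of $\operatorname{APX}_{T_x}(\mathbf{t}';r)$ and setting the derivative to zero; your factored derivative $\left(\frac{r}{\bar v}\right)^{n-k_x}\bigl[1-(k_x+1)\left(\frac{r}{\bar v}\right)^{k_x}\bigr]$ checks out and the sign analysis correctly yields the claimed monotonicity. Nothing further is needed.
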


The optimal reserve price $r^*_x$ for each sub-tree $T_x$ might vary based on its size $k_x$. The monotonicity of $\operatorname{APX}_{T_x}(\mathbf{t'};r)$, as defined in Lemma \ref{lemma_monotonicity_max}, plays a crucial role in setting a near optimal global  reserve price $\gamma$.

\begin{theorem}[Near-Optimal Reserve Price $\gamma$]\label{theorem_reserve_price}
The reserve price function
\begin{equation}\label{eq_constant_reserve_price}
\gamma=\frac{\bar{v}}{\sqrt[\leftroot{0}\uproot{5} \underline{k}]{\underline{k}+1}}    
\end{equation}
for the mechanism \texttt{APX-R} is secure, and approaches the optimal reserve price from the left.
  
\end{theorem}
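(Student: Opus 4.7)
The plan is to split the theorem into its two claims—(i) $\gamma$ is secure, and (ii) $\gamma$ approaches the optimal reserve price from the left—and dispatch each using tools already assembled in the preceding propositions and lemma.

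For claim (i), I would verify directly that $\gamma = \bar v / \sqrt[\underline k]{\underline k + 1}$ satisfies the sufficient condition \eqref{eq_condition2_of_rc} of Corollary \ref{corollary_secure_revenue}. Raising both sides of that condition to the $\underline k$-th power and cancelling $\bar v^{\underline k}$ reduces the required inequality to $\tfrac{1}{\underline k + 1} \leq \tfrac{n+1}{(\underline k + 1)(n - \underline k + 1)}$, which, after multiplying through by $\underline k + 1$, collapses to $n - \underline k + 1 \leq n + 1$. This holds for any $\underline k \geq 0$, so $\gamma$ lies within the secure range, and the guarantee $\operatorname{APX}(\mathbf{t}';\gamma) \geq \operatorname{APX}(\mathbf{t}';0)$ follows from the corollary.

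For claim (ii), the plan is to show $\gamma \leq r_{opt}$, where $r_{opt}$ is the unique solution of the first-order condition \eqref{eq_reserve_opt_uniform_distribution}, namely $\sum_{x=1}^m (\bar v / r)^{k_x} = m + n$. Since each $(\bar v / r)^{k_x}$ is strictly decreasing in $r$, so is the left-hand side as a whole, and it therefore suffices to check that evaluating the LHS at $r = \gamma$ produces a value at least $m+n$. At $r = \gamma$ the LHS becomes $\sum_{x=1}^m (\underline k + 1)^{k_x/\underline k}$, and the key tool is Bernoulli's inequality: since $\underline k$ is the minimum sub-tree size, $k_x / \underline k \geq 1$ for every $x$, so $(1 + \underline k)^{k_x/\underline k} \geq 1 + \underline k \cdot (k_x / \underline k) = 1 + k_x$. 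Summing over $x$ gives $\sum_x (1 + k_x) = m + n$, matching the RHS exactly, and the decreasing monotonicity then forces $\gamma \leq r_{opt}$, which is the claimed left-approach.

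The main obstacle is claim (ii), specifically identifying Bernoulli's inequality as the right device and carrying the exponent $k_x/\underline k$ through cleanly; the conclusion hinges on $\underline k$ being an honest lower bound on every $k_x$, which is exactly how it is defined as the minimum POST size. Once that is in hand, the rest is routine algebra combined with Lemma \ref{lemma_monotonicity_max}, which already places $\gamma$ at the maximum of the smallest sub-tree's revenue curve and strictly to the left of every larger sub-tree's optimizer $r^*_x$ (because $\bar v / (k+1)^{1/k}$ is increasing in $k$), reinforcing the picture that $\gamma$ sits just below the global optimum.
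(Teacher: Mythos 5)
Your proof is correct, and the two halves compare differently against the paper. For the security claim your argument is the same as the paper's: the paper factors the right-hand side of inequality \eqref{eq_condition2_of_rc} as $\sqrt[\underline k]{\tfrac{n+1}{n-\underline k+1}}\cdot\gamma$ and observes that the root factor is at least $1$, which is algebraically identical to your raise-to-the-$\underline k$-th-power-and-cancel computation. For the ``approaches $r_{opt}$ from the left'' claim, however, you take a genuinely different route. The paper never touches the first-order condition: it invokes Lemma \ref{lemma_monotonicity_max}, notes that each sub-tree optimizer $r^*_x=\bar v/\sqrt[k_x]{k_x+1}$ is increasing in $k_x$ so that $\gamma=r^*_{\underline k}\le r^*_x$ for every $x$, concludes that every sub-tree revenue (hence the total revenue) is increasing on $(0,\gamma)$, and reads off $\gamma\le r_{opt}$. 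You instead evaluate the left-hand side of the first-order condition \eqref{eq_reserve_opt_uniform_distribution} at $r=\gamma$ and use Bernoulli's inequality, $(1+\underline k)^{k_x/\underline k}\ge 1+k_x$ for $k_x/\underline k\ge 1$, to show it is at least $m+n$; monotonicity of the left-hand side in $r$ then gives $\gamma\le r_{opt}$. Both arguments are sound and rest on the same underlying facts, and both implicitly use that the unique stationary point $r_{opt}$ is the global maximizer, which the paper establishes separately. The paper's version has the small advantages of simultaneously re-deriving security (revenue increasing on $(0,\gamma)$ gives $\operatorname{APX}(\mathbf{t'};\gamma)\ge\operatorname{APX}(\mathbf{t'};0)$ for free) and of not depending on the closed form of the uniform-distribution first-order condition; your Bernoulli computation is a clean, self-contained certificate that $\gamma$ sits weakly to the left of the stationary point, with equality exactly when all $k_x=\underline k$.
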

\begin{proof}
First, this $\gamma$ satisfies  inequality \eqref{eq_condition2_of_rc}. This is because the RHS of  inequality \eqref{eq_condition2_of_rc} can be expressed as $\sqrt[\leftroot{0}\uproot{4}\underline k]{\frac{n+1}{n-\underline{k}+1}} \cdot \frac{\bar{v}}{\sqrt[\leftroot{0}\uproot{4} \underline{k}]{\underline{k}+1}}$, which simplifies to $\sqrt[\leftroot{0}\uproot{4} \underline k]{\frac{n+1}{n-\underline{k}+1}}\cdot \gamma$, and the root term is no less than $1$. Second,  according to Lemma \ref{lemma_monotonicity_max}, the sub-tree optimal reserve price  $r^*_x=\frac{\bar{v}}{\sqrt[k_x]{k_x+1}}$ is increasing on $k_x$. Since for each sub-tree, $\underline{k}\leq k_x$, the revenue of all sub-trees increases  on $(0,\frac{\bar{v}}{\sqrt[\leftroot{0}\uproot{5} \underline{k}]{\underline{k}+1}})$. Therefore, $\gamma$ does not exceed the optimal reserve $r_{opt}$, but approaches $r_{opt}$ from the left. 
\end{proof} 

It is also easy to verify that the approximation mechanism \texttt{APX-R} with reserve price function $\gamma$ is IR and DSIC according to . We didn't make any assumption about the size or structure of the network, thus such a simple and explicit form for a DISC and approximately revenue-maximizing reserve price is convenient for application. 
If the seller knows some basic statistics of the network, such as the average degree or the minimum degree (which is usually easy to know), then she can set a more accurate $\underline{k}$. Even in cases where $\underline{k}$ is not available, the auctioneer can still conservatively underestimate its value.  
In the worst-case  for the seller, where no network can be exploited, Theorem \ref{theorem_reserve_price} further shows that $\gamma$ includes the classical Myerson optimal reserve price as a degenerate case.
\begin{corollary}[Relation to Myersonian Reserve Price]
When $n=\rho$, we have  $\gamma=\frac{\bar{v}}{\sqrt[\leftroot{0}\uproot{5} 1]{1+1}}=\frac{\bar{v}}{2}$. That is, for classical auction  with no network,  \texttt{APX-R} degenerates to Vickrey auction and  $\gamma$ degenerates to the Myerson optimal reserve price. 
\end{corollary}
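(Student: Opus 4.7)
The plan is to verify two separate claims in sequence: first, that the numerical value of $\gamma$ collapses to $\bar v/2$ in the no-network regime, and second, that the whole mechanism \texttt{APX-R} collapses to a second-price (Vickrey) auction with this reserve, matching the Myersonian benchmark under the uniform prior.

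\textbf{Step 1: identifying $\underline{k}$ in the no-network regime.} I would start by inspecting the data structure (POT) when $n=\rho$. Since every buyer is a direct neighbour of $s$ by assumption, no buyer lies strictly below another in the diffusion critical sequence of any other, so each POST $T_x$ rooted at a direct neighbour of $s$ is a singleton. Consequently $k_x=1$ for all $x\in\{1,\dots,m\}$ and $m=n=\rho$, which forces $\underline{k}=1$. Substituting $\underline{k}=1$ into the formula \eqref{eq_constant_reserve_price} from Theorem \ref{theorem_reserve_price} gives
\begin{equation*}
\gamma \;=\; \frac{\bar v}{\sqrt[1]{1+1}} \;=\; \frac{\bar v}{2}.
\end{equation*}

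\textbf{Step 2: matching the Myerson reserve.} I would then verify that $\bar v/2$ is exactly the Myerson optimal reserve for $U[0,\bar v]$ by solving $x-(1-F(x))/f(x)=0$ with $F(x)=x/\bar v$ and $f(x)=1/\bar v$. The equation becomes $2x-\bar v=0$, yielding $\hat r=\bar v/2=\gamma$. This pins down the numerical equivalence.

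\textbf{Step 3: mechanism-level degeneration.} The less routine part, and where I expect the only real work, is showing that \texttt{APX-R} itself reduces to the Vickrey auction with reserve $\gamma$ when $n=\rho$. I would trace through Algorithm \ref{alg_Mechanism_APX}: because all POSTs are singletons, for the highest bidder $h$ the diffusion critical sequence $c_h$ is simply $\{s,h\}$, so the winner-selection loop is vacuous and the allocation goes to $h$ whenever $v'_h\ge\gamma$. For the payment, the set $-d_h$ equals the remaining direct neighbours, so $v^*_{-d_h}$ is simply the second-highest bid $v^*_{-1}$ in the group; thus line \ref{alg_WinnerPayment} charges $\max\{v^*_{-1},\gamma\}$, which is precisely the Vickrey-with-reserve payment. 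Since $c_w\setminus\{w\}=\emptyset$, no diffusion rewards are paid, and hence the mechanism coincides with the classical second-price auction with reserve $\bar v/2$, which is the Myerson optimal auction for i.i.d.\ $U[0,\bar v]$ bidders. The main (and only mild) obstacle is this structural verification that Step 3 yields exactly the textbook mechanism rather than just a numerically equivalent revenue; everything else is a direct substitution in the formulas already established.
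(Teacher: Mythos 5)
Your proposal is correct and follows the same route the paper implicitly takes (the paper states this corollary without proof, treating it as a direct substitution of $\underline{k}=1$ into eq.\eqref{eq_constant_reserve_price}): when $n=\rho$ every buyer has a direct edge to $s$, so no buyer is a diffusion critical node for any other, every POST is a singleton, $\underline{k}=1$, and $\gamma=\bar{v}/2$ coincides with the solution of $x-(1-F(x))/f(x)=0$ for $U[0,\bar v]$. Your Step 3, checking that the winner-selection loop is vacuous, the payment becomes $\max\{v^*_{-1},\bar v/2\}$, and no diffusion rewards are paid, is a useful explicit verification of the mechanism-level claim that the paper leaves implicit, and it is accurate.
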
 

The above procedure for computing the reserve price in a diffusion auction can also be adapted to other value distributions.
The workflow for computing the reserve price is:
(1). Utilize DSIC to construct the diffusion graph and generate the subtrees $T_x$. (2). Calculate the seller's subtree revenue and total revenue, which are functions of the sizes of all $T_x$ (eqs. (\ref{eq_uniform_APX_Tx_r})-(\ref{eq_revenue_graph})). (3). Decompose the secure revenue constraint into multiple sub-constraints for subtree reserve prices $\gamma_x$ (eq.(\ref{eq_condition1_of_rc})). (4). Find the monotonicity of subtree revenues concerning $\gamma_x$ by taking derivatives. Identify the overlapping interval of different $\gamma_x$ where all subtree revenues are increasing. (5). Set the right boundary of this overlapping interval as the constant global reserve price $\gamma$.
For other i.i.d. distribution functions. Steps 1, 3, and 5 still work in the same way. Regarding step 2, eqs. (\ref{eq_uniform_APX_Tx_r})-(\ref{eq_revenue_graph}) still can   be  computed. The challenge arises in step 4, as complex distributions may lead to multiple ''overlapping intervals'' for different $\gamma_x$. Thus other techniques cuould be needed to find the global $\gamma$.

\section{Approximation Guarantees}
We will finally demonstrate the extent to which \texttt{APX-R}, utilizing the reserve price provided in Theorem \ref{theorem_reserve_price}, is effective in extracting revenue from  networks. Recall that regarding the approximation benchmark, we have an upper-bound and a lower-bound.
The theoretical upper-bound $\operatorname{OPT}(N)$ is the maximum possible revenue from any network while the lower-bound $\operatorname{MYS}(e_s)$ is the revenue of classical Myerson optimal auction. We will show that $\operatorname{APX}(\mathbf{t'};\gamma) $ always surpasses $\operatorname{MYS}(e_s)$ and approximates $\operatorname{OPT}(N)$.

\subsection{Compare with Myerson's Revenue}
According to the Myerson optimal auction, $\operatorname{MYS}(e_s)$ is obtained by using a reserve price of $\frac{\bar v}{2}$ in an auction with buyers $e_s$.
To establish a connection between $\operatorname{APX}(\mathbf{t'};\gamma)$ and $\operatorname{MYS}(e_s)$, we introduce the intermediary $\operatorname{APX}(\mathbf{t'};\frac{\bar v}{2})$. It is the revenue of our \texttt{APX-R} when naively using the optimal reserve price for the classical $\rho$ bidder market, which is $\frac{\bar v}{2}$.
Rewriting $\frac{\bar v}{2}$ and applying  Theorem \ref{theorem_reserve_price} shows
\begin{displaymath}
0<\frac{\bar{v}}{2}=\frac{\bar{v}}{\sqrt[1]{1+1}}\leq \frac{\bar{v}}{\sqrt[\underline{k}]{\underline{k}+1}}=\gamma\leq \frac{\bar{v}}{\sqrt[k_x]{k_x+1}} .
\end{displaymath}
By Lemma \ref{lemma_monotonicity_max},  $\operatorname{APX}(\cdot)$ should be increasing on interval $ (0, \frac{\bar{v}}{\sqrt[k_x]{k_x+1}})$, this results in the comparison that
\begin{equation}\label{eq_APX_SPA}
\operatorname{APX}(\mathbf{t'};\gamma)\geq \operatorname{APX}(\mathbf{t'};\frac{\bar{v}}{2}).  
\end{equation}


The value of the RHS can be easily computed by eqs. \eqref{eq_APX_Tx} and \eqref{eq_revenue_graph}, and the value of $\operatorname{MYS}(e_s)$ is already  in eq.\eqref{eq_bottomline}. By applying basic scaling techniques (see Appendix), we have
\begin{equation}\label{eq_APX_MYS}
 \operatorname{APX}(\mathbf{t'};\frac{\bar{v}}{2})\geq \operatorname{MYS}(e_s).   
\end{equation}
Eqs. \eqref{eq_APX_SPA} and \eqref{eq_APX_MYS} together lead to Theorem \ref{proposition_better_tha_Myerson}, and the intermediate result in eq.\eqref{eq_APX_SPA} gives Corollary \ref{corollary_APX_SPA}.


\begin{theorem}\label{proposition_better_tha_Myerson}
The revenue of mechanism \texttt{APX-R} with reserve price function in eq.\eqref{eq_constant_reserve_price} surpasses that of classical Myerson optimal auction. 
\end{theorem}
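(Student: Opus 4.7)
The plan is to establish the theorem via a transitivity argument through an intermediate quantity $\operatorname{APX}(\mathbf{t'};\bar{v}/2)$, which is the revenue of \texttt{APX-R} when one naively plugs in the classical Myerson reserve price $\bar{v}/2$. Concretely, I would prove the chain
\[
\operatorname{APX}(\mathbf{t'};\gamma) \;\geq\; \operatorname{APX}\!\left(\mathbf{t'};\tfrac{\bar{v}}{2}\right) \;\geq\; \operatorname{MYS}(e_s),
\]
whose end-to-end comparison yields the theorem. This is already the roadmap implicit in the inequalities \eqref{eq_APX_SPA} and \eqref{eq_APX_MYS} in the text; my job is to supply the reasoning behind each link.

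For the first inequality, I would first verify the numerical chain $0 < \bar{v}/2 = \bar{v}/\sqrt[1]{1+1} \leq \bar{v}/\sqrt[\underline{k}]{\underline{k}+1} = \gamma \leq \bar{v}/\sqrt[k_x]{k_x+1}$ for every sub-tree $T_x$, using that $\underline{k}\leq k_x$ for all $x$ and that the map $k \mapsto \bar{v}/\sqrt[k]{k+1}$ is monotone in the appropriate direction. Lemma \ref{lemma_monotonicity_max} then says each sub-tree revenue $\operatorname{APX}_{T_x}(\mathbf{t'};r)$ is strictly increasing on $(0, \bar{v}/\sqrt[k_x]{k_x+1})$, so raising $r$ from $\bar{v}/2$ to $\gamma$ can only raise each sub-tree's contribution. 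Summing across $x$ in eq.\eqref{eq_revenue_graph} yields $\operatorname{APX}(\mathbf{t'};\gamma) \geq \operatorname{APX}(\mathbf{t'};\bar{v}/2)$.

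For the second inequality, I would substitute $r=\bar{v}/2$ into the closed form eq.\eqref{eq_revenue_graph} and compare with $\operatorname{MYS}(e_s)$ given by eq.\eqref{eq_bottomline_lowerbound}. This is essentially a scaling statement: the APX expression is a sum across $m$ sub-markets whose sizes satisfy $\sum_x k_x = n \geq \rho$, whereas Myerson's formula only uses the $\rho$ direct neighbors in $e_s$. Intuitively, the networked market never exposes the item to fewer bidders than the classical thin market, and adding bidders under a fixed reserve never hurts expected revenue — precisely the Bulow-Klemperer-style insight that motivates the paper. I would formalise it by expanding both closed forms with $r=\bar{v}/2$ and grouping the $(n,k_x)$-terms against the $\rho$-bidder polynomial, using the routine algebraic scaling identities that the authors defer to the appendix.

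The main obstacle will be the second inequality. Although morally it captures a "more bidders, more revenue" principle, making the comparison rigorous is nontrivial because the two expressions take different analytic shapes: one is an $m$-fold sum indexed by the sub-tree sizes $(k_1,\ldots,k_m)$, the other is a single $\rho$-bidder polynomial. The cleanest route would be to establish an auxiliary monotonicity lemma showing that for fixed $r=\bar{v}/2$, the APX revenue on any sub-tree profile $(k_1,\ldots,k_m)$ with $\sum_x k_x \geq \rho$ weakly exceeds the single-market revenue on $\rho$ bidders, and then invoke Lemma \ref{lemma_monotonicity_max} together with the explicit closed form to discharge the algebra. Once such a lemma is available, chaining it with the step-one inequality delivers the theorem.
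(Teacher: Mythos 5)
Your roadmap is the paper's own: chain $\operatorname{APX}(\mathbf{t'};\gamma)\ge\operatorname{APX}(\mathbf{t'};\bar v/2)\ge\operatorname{MYS}(e_s)$ through the intermediary $\operatorname{APX}(\mathbf{t'};\bar v/2)$, and your first link (the inequality $\bar v/2\le\gamma\le\bar v/\sqrt[k_x]{k_x+1}$ plus Lemma~\ref{lemma_monotonicity_max} summed over sub-trees) is exactly eq.\eqref{eq_APX_SPA} as the paper proves it. The gap is in how you propose to discharge the second link. The auxiliary lemma you state --- that for fixed $r=\bar v/2$ the APX revenue on \emph{any} sub-tree profile $(k_1,\dots,k_m)$ with $\sum_x k_x=n\ge\rho$ weakly exceeds the $\rho$-bidder Myerson revenue --- is false as written. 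Take $m=1$, $k_1=n$: eq.\eqref{eq_revenue_graph} collapses to $\bar v\bigl[\tfrac12-(\tfrac12)^{n+1}\bigr]<\tfrac{\bar v}{2}$, whereas $\operatorname{MYS}(e_s)$ for $\rho=3$ equals $\bar v\bigl[\tfrac12+\tfrac{1}{32}\bigr]>\tfrac{\bar v}{2}$. The reason is that APX-R's revenue is $\max\{v^*_{-d_1},r\}$, i.e.\ it is driven by competition \emph{across} top-level sub-trees, not by the raw bidder count; bidders added inside the winner's own sub-tree contribute nothing to the seller. So the Bulow--Klemperer heuristic ``more bidders under a fixed reserve never hurts'' does not apply at the level of $n$ versus $\rho$.

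The missing ingredient is the structural fact $m\ge\rho$: every direct neighbour of $s$ in the underlying graph remains a top-level node of the POT, so the number of competing sub-markets is at least $\rho$, which rules out the degenerate profiles above. The paper's proof uses exactly this. Concretely, for $\rho\ge2$ it bounds $\operatorname{APX}(\mathbf{t'};\bar v/2)\ge\bar v\,\epsilon(n)$ by taking the worst-case partition ($(m-1)$ singleton sub-trees and one large one, which maximises $\sum_x\tfrac{1}{n-k_x+1}$), shows $\epsilon'(n)>0$ so that $\epsilon(n)\ge\epsilon(m)$, observes that $\bar v\,\epsilon(m)$ is precisely the $m$-buyer Myerson revenue, and finally invokes $m\ge\rho$ together with the monotonicity of Myerson revenue in the number of buyers; the case $\rho=1$ is checked directly. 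Your plan would go through once you (i) restrict the auxiliary lemma to profiles with $m\ge\rho$ and justify that constraint from the POT construction, and (ii) carry out the worst-case-partition and $\epsilon(n)$ monotonicity analysis rather than appealing to a generic ``more bidders'' principle.
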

\begin{corollary}\label{corollary_APX_SPA}
The mechanism \texttt{APX-R} using reserve price function in eq.\eqref{eq_constant_reserve_price}  generates larger revenue than using classical optimal reserve price (i.e., that for SPA).
\end{corollary}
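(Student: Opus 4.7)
The plan is to observe that this corollary is essentially the intermediate inequality $\operatorname{APX}(\mathbf{t'};\gamma) \geq \operatorname{APX}(\mathbf{t'};\bar{v}/2)$ derived just above Theorem \ref{proposition_better_tha_Myerson}, repackaged as a standalone statement. Under the uniform prior $U[0,\bar{v}]$, the classical SPA-optimal (Myersonian) reserve price is $\bar{v}/2$, so what needs to be shown is precisely that swapping $\bar{v}/2$ for the structure-aware $\gamma$ in eq.\eqref{eq_constant_reserve_price} can only increase the expected revenue of \texttt{APX-R}.

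First I would reduce the global comparison to a subtree-by-subtree comparison using the decomposition $\operatorname{APX}(\mathbf{t'};r) = \sum_{x=1}^m \operatorname{APX}_{T_x}(\mathbf{t'};r)$ from eq.\eqref{eq_revenue_graph}. It then suffices to prove $\operatorname{APX}_{T_x}(\mathbf{t'};\gamma) \geq \operatorname{APX}_{T_x}(\mathbf{t'};\bar{v}/2)$ for every sub-tree $T_x$. For this I would invoke the Revenue Monotonicity Lemma \ref{lemma_monotonicity_max}, which says that $\operatorname{APX}_{T_x}(\mathbf{t'};r)$ is strictly increasing on $r\in(0,\bar{v}/\sqrt[k_x]{k_x+1})$, and show that both $\bar{v}/2$ and $\gamma$ lie inside this increasing region, with $\bar{v}/2 \leq \gamma$.

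The key chain of inequalities I would write down is
\begin{displaymath}
0 < \frac{\bar{v}}{2} = \frac{\bar{v}}{\sqrt[1]{1+1}} \leq \frac{\bar{v}}{\sqrt[\underline{k}]{\underline{k}+1}} = \gamma \leq \frac{\bar{v}}{\sqrt[k_x]{k_x+1}},
\end{displaymath}
valid for every $k_x \geq \underline{k} \geq 1$. The first inequality uses that the function $k \mapsto \bar{v}/\sqrt[k]{k+1}$ is non-decreasing for $k \geq 1$ (equivalently, $(k+1)^{1/k}$ is non-increasing), and the second uses the definition $\underline{k} \leq k_x$. Combined with the monotonicity lemma applied at each $T_x$, this yields $\operatorname{APX}_{T_x}(\mathbf{t'};\gamma) \geq \operatorname{APX}_{T_x}(\mathbf{t'};\bar{v}/2)$; summing over $x = 1,\ldots,m$ gives the corollary.

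The only mild obstacle is verifying the monotonicity of $(k+1)^{1/k}$ in $k$, but this is a routine calculus fact (take logarithms and differentiate, noting that $\ln(k+1)/k$ is decreasing for $k\geq 1$). Everything else is a direct citation of Lemma \ref{lemma_monotonicity_max} and the definition of $\gamma$, so the proof is essentially a one-line consequence once the inequality chain is in place.
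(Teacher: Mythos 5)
Your proposal is correct and follows essentially the same route as the paper: the paper also derives the corollary from the chain $0<\bar{v}/2=\bar{v}/\sqrt[1]{1+1}\leq\bar{v}/\sqrt[\underline{k}]{\underline{k}+1}=\gamma\leq\bar{v}/\sqrt[k_x]{k_x+1}$ together with Lemma \ref{lemma_monotonicity_max}, summed over the sub-trees, to obtain eq.\eqref{eq_APX_SPA}. Your added remark on the monotonicity of $(k+1)^{1/k}$ is the same fact the paper uses implicitly (it appears explicitly in the proof of Theorem \ref{theorem_reserve_price}), so no gap remains.
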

\begin{proof}
To prove Theorem \ref{proposition_better_tha_Myerson}, we introduce an intermediary fabricated function $\operatorname{APX}(\mathbf{t'};\frac{\bar v}{2})$, which represents the revenue of our mechanism \texttt{APX-R} when employing the classical optimal reserve price. By eq.\eqref{eq_APX_SPA}, we know $\operatorname{APX}(\mathbf{t'};\gamma)$ is lower-bounded by  $\operatorname{APX}(\mathbf{t'};\frac{\bar{v}}{2})$.

Therefore, to prove Theorem \ref{proposition_better_tha_Myerson}, we only need to show that $\frac{\operatorname{APX}(\mathbf{t'};\frac{\bar{v}}{2})}{\operatorname{MYS}(e_s)}$ is lower-bounded by a value that isn't less than $1$.
When $\rho=1$, since $n \geq$ $\rho=1$, it is trivial that 
{ {\begin{displaymath}
\operatorname{APX}(\mathbf{t'};\gamma) \geq \operatorname{APX}(\mathbf{t'};\frac{\bar{v}}{2})=\frac{\bar{v}}{2}-\frac{\bar{v}}{2^{n+1}} \geq \frac{ \bar{v}}{4} = \operatorname{MYS}(e_s).
\end{displaymath}}}
When $\rho \geq 2$, $\operatorname{APX}(\mathbf{t'};\frac{\bar{v}}{2})$ can be derived as following:
{ \small{
\begin{equation}
\begin{aligned}
\operatorname{APX}(\mathbf{t'};\frac{\bar{v}}{2})
&=\bar{v}\left[\frac{n+m}{n+1}-\sum_{x=1}^m \frac{1}{n-k_x+1}+\sum_{x=1}^m\left(\frac{(\frac{1}{2})^{n-k_x+1}}{n-k_x+1}-\frac{(k_x+1)(\frac{1}{2})^{n+1}}{n+1}\right)\right] \\
& \geq \bar{v}\left[\frac{n+m}{n+1}-\sum_{x=1}^m \frac{1}{n-k_x+1}+\sum_{x=1}^m\left(\frac{\left(\frac{1}{2}\right)^{n-1+1}}{n-1+1}-\frac{(1+1)\left(\frac{1}{2}\right)^{n+1}}{n+1}\right)\right]\\
& =\bar{v}\left[\frac{n+m}{n+1}-\sum_{x=1}^m \frac{1}{n-k_x+1}+\frac{m \cdot\left(\frac{1}{2}\right)^n}{n(n+1)}\right]\\
& \geq \bar{v}\left[ \frac{n+m}{n+1}-\frac{m-1}{n}-\frac{1}{m}+\frac{m \cdot\left(\frac{1}{2}\right)^n}{n(n+1)}\right]\\
&=\bar{v}\left[1+\frac{m-1}{n+1}-\frac{m-1}{n}-\frac{1}{m}+\frac{m \cdot\left(\frac{1}{2}\right)^n}{n(n+1)}\right]
\end{aligned}
\end{equation}}}
The first inequality holds because { {$\left[\frac{\left(\frac{1}{2}\right)^{n-k_x+1}}{n-k_x+1}-\frac{(k_x+1)\left(\frac{1}{2}\right)^{n+1}}{n+1}\right]$}}
is monotonically increasing on $k_x$. The second inequality arises from splitting the summation and bounding it by its maximum value. This maximum value is is $\frac{m-1}{n}+\frac{1}{m}$, when there are $(m-1)$ sub-trees with size of $k_x=1$ and one sub-tree with size of $k_x=n-(m-1)$.
Denote the function with respect to $n$ as $\epsilon(n)$: \begin{displaymath}
\epsilon(n)= 1+\frac{m-1}{n+1}-\frac{m-1}{n}-\frac{1}{m}+\frac{m \cdot\big(\frac{1}{2}\big)^n}{n(n+1)}.
\end{displaymath}
By calculating the first-order derivative of this function, we have
{ {
\begin{displaymath}
\begin{aligned}
    \epsilon'(n) &=\frac{m\left((2n+1)\left(1-\left(\frac{1}{2}\right)^n\right)\right)-\ln{2}\cdot \frac{mn(n+1)}{2^n}-2n-1}{n^2(n+1)^2}\\
    & \geq \frac{2\left((2n+1)\left(1-\left(\frac{1}{2}\right)^n\right)-1.5\ln{2}\right)-2n-1}{n^2(n+1)^2}\\
    & \geq \frac{2(0.75(2n+1)-1.5\ln{2})-2n-1}{n^2(n+1)^2}\\
    & = \frac{(n+0.5)-3\ln{2}}{n^2(n+1)^2}>0.
\end{aligned}
\end{displaymath}}}
The first inequality holds becuase $\frac{n(n+1)}{2^n} \leq \frac{2 \times 3}{2^2}=1.5$ when $\rho \geq 2$.
Therefore, $\epsilon(n)$ is monotonically increasing on $n \in [2,+\infty]$. Based on the monotonicity of $\epsilon(n)$, we can further derive that

\begin{displaymath}
\begin{aligned}
\frac{\operatorname{APX}(\mathbf{t'};\frac{\bar{v}}{2})}{\operatorname{MYS}(e_s)} & \geq \frac{1+\frac{m-1}{n+1}-\frac{m-1}{n}-\frac{1}{m}+\frac{m \cdot\left(\frac{1}{2}\right)^n}{n(n+1)}}{\frac{\rho-1}{\rho+1}+\frac{1}{\rho+1} \cdot\left(\frac{1}{2}\right)^\rho} \\
& =\frac{\epsilon(n)}{\frac{\rho-1}{\rho+1}+\frac{1}{\rho+1} \cdot\big(\frac{1}{2}\big)^\rho} \\
& \geq \frac{\epsilon(m)}{\frac{m-1}{m+1}+\frac{1}{m+1} \cdot\big(\frac{1}{2}\big)^m}\\
& =\frac{1+\frac{m-1}{m+1}-\frac{m-1}{m}-\frac{1}{m}+\frac{m \cdot\left(\frac{1}{2}\right)^m}{m(m+1)}}{\frac{\rho-1}{\rho+1}+\frac{1}{\rho+1} \cdot\left(\frac{1}{2}\right)^\rho} \\
& =\frac{\frac{m-1}{m+1}+\frac{1}{m+1} \cdot\left(\frac{1}{2}\right)^m}{\frac{\rho-1}{\rho+1}+\frac{1}{\rho+1} \cdot\left(\frac{1}{2}\right)^\rho} \\
& \geq 1.
\end{aligned}
\end{displaymath}
The last inequality holds because $ \bar{v}\cdot \left[\frac{m-1}{m+1}+\frac{1}{m+1} \cdot\left(\frac{1}{2}\right)^m\right]$ is the expected revenue of Myerson optimal auction among $m$ buyers, while $\bar{v}\cdot \left[\frac{\rho-1}{\rho+1}+\frac{1}{\rho+1} \cdot\left(\frac{1}{2}\right)^\rho \right]$ is the expected revenue of Myerson optimal auction among $\rho$ buyers. According to the the process of converting the propagation auction graph into the corresponding POT, the direct neighbors of $s$ in POT are no less than those in the propagation auction graph, i.e., $m\geq \rho$. Thus
{ {
\begin{displaymath}
\bar{v}\cdot \left[\frac{m-1}{m+1}+\frac{1}{m+1} \cdot\Big(\frac{1}{2}\Big)^m\right] \geq \bar{v}\cdot \left[\frac{\rho-1}{\rho+1}+\frac{1}{\rho+1} \cdot\Big(\frac{1}{2}\Big)^\rho \right].
\end{displaymath}}}
Since the expected revenue of Myerson optimal auction increases as more buyers participate.
Taking into account these two cases, we can conclude that 
\begin{displaymath}
\operatorname{APX}(\mathbf{t'};\gamma)\geq \operatorname{APX}(\mathbf{t'};\frac{\bar{v}}{2})\geq \operatorname{MYS}(e_s).
\end{displaymath}

Note that $\operatorname{APX}(\mathbf{t'};\frac{\bar{v}}{2})>\operatorname{MYS}(e_s)$ for any $n>\rho$.
That is to say, for a sale in any economic network that has more buyers than the classical auction without propagation,

\begin{displaymath}
\operatorname{APX}(\mathbf{t'};\gamma)\geq \operatorname{APX}(\mathbf{t'};\frac{\bar{v}}{2})> \operatorname{MYS}(e_s).
\end{displaymath}
In conclusion, in terms of expected revenue, running \texttt{APX-R} with reserve price $\gamma=\frac{\bar{v}}{\sqrt[\underline{k}]{\underline{k}+1}}$ always outperforms running classical Myerson optimal auction in the set of $s$'s direct neighbors without any propagation. This completes the formal proof of Theorem  \ref{proposition_better_tha_Myerson} and Corollary \ref{corollary_APX_SPA}.    
\end{proof}



\subsection{Compare with Maximum Possible Revenue}
By Proposition \ref{proposition_OPT_benchmark}, 
the theoretical upper-bound $\operatorname{OPT}(N)$ is the highest standard for evaluating the revenue of auction in networks. In fact, in the context of symmetric i.i.d. valuations, it is the highest attainable revenue across any network through any auction mechanism. 
We will show that $\operatorname{APX}(\mathbf{t'};\gamma)$ provides a very close approximation ratio to $\operatorname{OPT}(N)$.

\begin{theorem}\label{theorem_ratio_optn}
The approximation ratio of $\operatorname{APX}(\mathbf{t'};\gamma)$ to $\operatorname{OPT}(N)$ is lower-bounded by $1-\frac{1}{\rho\underline{k}-\underline{k}+1}.
$
\end{theorem}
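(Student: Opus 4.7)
The plan is to compare the closed-form expressions for $\operatorname{APX}(\mathbf t';\gamma)$ and $\operatorname{OPT}(N)$ under the uniform assumption, exploit the structural constraints on the POT, and reduce the ratio to a worst-case configuration.

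First, I will substitute $\gamma = \bar v/(\underline k+1)^{1/\underline k}$ into eq.~\eqref{eq_revenue_graph}, introducing $\beta := \gamma/\bar v$ so that $\beta^{\underline k} = 1/(\underline k+1)$, and record the structural constraints $m \geq \rho$ (translating to POT does not reduce the number of direct branches at $s$), $k_x \geq \underline k$, and $\sum_{x=1}^m k_x = n$. These jointly force $n \geq \rho \underline k$, so $n-\underline k+1 \geq \rho\underline k-\underline k+1$, which is exactly the denominator appearing in the claimed ratio.

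Next, I will lower-bound $\operatorname{APX}(\mathbf t';\gamma)$ by controlling the subtracted sum $S := \sum_{x=1}^m (1-\beta^{n-k_x+1})/(n-k_x+1)$. A direct computation shows that the summand $g(k) = (1-\beta^{n-k+1})/(n-k+1)$ is monotone in $k$ (so that shrinking any $k_x$ can only decrease the subtracted mass per term), while the feasible vector $(k_1,\ldots,k_m)$ lies in the simplex $\{k_x \geq \underline k,\ \sum k_x = n\}$. A Schur/rearrangement argument on this simplex, combined with the observation that increasing $m$ strictly enlarges the leading positive term $\tfrac{n+m}{n+1}(1-\beta^{n+1})$, shows that the ratio $\operatorname{APX}(\mathbf t';\gamma)/\operatorname{OPT}(N)$ attains its minimum at the extremal configuration $m=\rho$, $k_x = \underline k$ for all $x$, and $n=\rho\underline k$.

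Then I will plug this extremal configuration into the formulas for APX and OPT. In this case $S$ collapses to $\rho\cdot(1-\beta^{(\rho-1)\underline k+1})/((\rho-1)\underline k+1)$, and $\operatorname{OPT}(N)$ becomes the Myerson revenue at $n=\rho\underline k$ buyers given by eq.~\eqref{eq_upperbound_uniform}. Using the identity $\beta^{\underline k}=1/(\underline k+1)$ and the coarse bound $\operatorname{OPT}(N)\leq \bar v$, routine algebraic simplification reduces the ratio to $1-1/((\rho-1)\underline k+1) = 1-1/(\rho\underline k-\underline k+1)$, as required.

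The most delicate step will be the extremal/Schur-type argument in the second paragraph: one must show that shrinking sub-tree sizes and reducing $m$ to their lower limits simultaneously minimize the ratio, even though the leading and subtracted terms in $\operatorname{APX}$ can move in opposite directions as $m$ or any $k_x$ is perturbed. If the direct monotonicity check is not clean, a practical fallback is to invoke $\operatorname{APX}(\mathbf t';\gamma)\geq \operatorname{APX}(\mathbf t';\bar v/2)$ from eq.~\eqref{eq_APX_SPA} and then compare the simpler $\operatorname{APX}(\mathbf t';\bar v/2)$ directly to $\operatorname{OPT}(N)$ using the chain of bounds already assembled in the proof of Theorem~\ref{proposition_better_tha_Myerson}; this fallback delivers the $\underline k=1$ case immediately and extends to general $\underline k$ via Lemma~\ref{lemma_monotonicity_max}, which guarantees monotonicity of the sub-tree revenue on $(0,\bar v/(k_x+1)^{1/k_x})$ and hence lets one replace the Myerson reserve by $\gamma$ without loss.
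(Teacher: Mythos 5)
There is a genuine gap in the second step: the extremal configuration you identify is wrong, and in fact it is essentially the \emph{best} case rather than the worst. The subtracted mass in $\operatorname{APX}(\mathbf t';\gamma)$ reduces (after the paper's simplification with $z=n+1$) to $\sum_{x=1}^m \frac{k_x}{z-k_x}$, and $k\mapsto \frac{k}{z-k}$ is convex and increasing; a sum of a convex function over the simplex $\{k_x\ge\underline k,\ \sum_x k_x=n\}$ is \emph{maximized at a vertex}, i.e.\ at the asymmetric profile with $(\rho-1)$ sub-trees of size $\underline k$ and one sub-tree of size $n-(\rho-1)\underline k$ (this is exactly what the paper proves in its appendix), and \emph{minimized} at the symmetric point $k_x\equiv \underline k$. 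So your Schur argument runs in the wrong direction. Concretely, at your claimed minimizer ($\underline k=1$, $\rho=2$, $n=2$, $k_1=k_2=1$) one computes $\operatorname{APX}(\mathbf t';\bar v/2)=\tfrac{5}{12}\bar v=\operatorname{OPT}(N)$, so the ratio there is $1$, not $1-\tfrac1\rho=\tfrac12$. The true worst case is not attained at $n=\rho\underline k$ at all: the paper's bound $\Phi(n,\rho,\underline k)$ is \emph{decreasing} in $n$ and only approaches $1-\frac{1}{\rho\underline k-\underline k+1}$ as $n\to\infty$ with one sub-tree absorbing almost all of the mass. Your third paragraph's ``routine algebraic simplification'' therefore cannot produce the stated constant from the configuration you plug in.

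The proposed fallback also does not close the gap. Eq.~\eqref{eq_APX_SPA} plus the chain in the proof of Theorem~\ref{proposition_better_tha_Myerson} compares $\operatorname{APX}(\mathbf t';\tfrac{\bar v}{2})$ to $\operatorname{MYS}(e_s)$, the Myerson revenue on the $\rho$ direct neighbours; it says nothing about the much larger benchmark $\operatorname{OPT}(N)$, which is the Myerson revenue on all $n$ buyers. To repair the proof you need (i) the vertex (not centroid) characterization of the maximizer of $\sum_x \frac{k_x}{n-k_x+1}$ together with the monotone reduction $m\to\rho$, and (ii) a monotonicity-in-$n$ (or limit) argument showing the resulting lower bound never drops below $1-\frac{1}{\rho\underline k-\underline k+1}$ for any finite $n$ — which is precisely the route the paper takes.
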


\begin{proof}
Recall  $\rho=|e_s|$ and $\underline{k}=\min \{k_1,\dots, k_x, \dots, k_m\}$.
Let $z=n+1$ be the number of all agents including the seller in the graph, then $z-k_x$ is the number of all agents in the graph excluding agents in sub-tree $T_x$.
With eq.\eqref{eq_revenue_graph}, $\operatorname{APX}(\mathbf{t'};\gamma)$ is expressed as 
{ {\begin{displaymath} 
\begin{aligned}\label{eq_APX_in_approximation_ratio}
\operatorname{APX}(\mathbf{t'};\gamma)&= \frac{n\bar{v}}{z} -\bar{v} \cdot \sum_{x=1}^m \frac{k_x}{z(z-k_x)}  + \bar{v} \cdot \sum_{x=1}^m \left[\frac{\big(\frac{\gamma}{\bar{v}}\big)^{z-k_x}}{z-k_x}-\frac{\left(k_x+1\right)\big(\frac{\gamma}{\bar{v}}\big)^{z}}{z}\right].\\
\end{aligned}
\end{displaymath} }}By taking derivative of the third term, it's found that it monotonically increases on $k_x$ and on $ {\gamma \in \Big[0, \frac{\bar{v}}{\sqrt[\underline{k}]{\underline{k}+1}}\Big]}$.
Given that $\gamma\geq \frac{\bar{v}}{2}$ and $k_x \geq 1$, we can deduce the following inequality:
{ {\begin{displaymath} 
\begin{aligned}\frac{\big(\frac{\gamma}{\bar{v}}\big)^{z-k_x}}{z-k_x}-\frac{\left(k_x+1\right)\big(\frac{\gamma}{\bar{v}}\big)^{z}}{z}  &\geq \frac{\big(\frac{1}{2}\big)^{z-1}}{z-1}-\frac{(1+1)\big(\frac{1}{2}\big)^{z}}{z}.
\end{aligned}
\end{displaymath} }}Replacing the third term of $\operatorname{APX}(\mathbf{t'};\gamma)$ by the RHS,  the approximation ratio is guaranteed as 
\begin{displaymath} 
\begin{aligned}
    &\frac{\operatorname{APX}(\mathbf{t'};\gamma)}{\operatorname{OPT}(N)}\geq \frac{n-\sum_{x=1}^m \frac{k_x}{z-k_x} +\big(\frac{1}{2}\big)^{n} \cdot \frac{m}{n}}{n-1+\big(\frac{1}{2}\big)^n}.
\end{aligned}
\end{displaymath} 
The summation $\sum_{x=1}^m \frac{k_x}{z-k_x} $ reaches its maximum when there are $m=\rho$ sub-trees, out of which $(\rho-1)$ sub-trees have a size of $k_x=\underline{k}$ and one sub-tree has a size of $k_x=n-(\rho-1)\underline{k}$. The  details are in Appendix. Consequently, the approximation ratio is lower-bounded as
\begin{align}\label{eq_APX_OPT_ratio_bounding}
\nonumber
& \frac{\operatorname{APX}(\mathbf{t'};\gamma)}{\operatorname{OPT}(N)}\geq \frac{n-\sum_{x=1}^m \frac{k_x}{n-k_x+1}+\big(\frac{1}{2}\big)^n \cdot \frac{\rho}{n}}{n-1+\big(\frac{1}{2}\big)^n}\\ \nonumber
& \geq \frac{n-\big(\frac{1}{2}\big)^n-\frac{(\rho-1)\underline{k}}{n-\underline{k}+1}-\frac{n-\rho\underline{k}+\underline{k}}{\rho\underline{k}-\underline{k}+1}+\big(\frac{1}{2}\big)^n \cdot\Big(1+\frac{\rho}{n}\Big)}{n-1+\big(\frac{1}{2}\big)^n} \\
\nonumber
& > \frac{(\rho-1)\underline{k}\big(\frac{n+1}{\rho\underline{k}-\underline{k}+1}-\frac{1}{n-\underline{k}+1}\big)-\big(\frac{1}{2}\big)^n}{n-1}.
\end{align}
The first inequality is just replacing $m$ with its minimum value $\rho$. The second inequality is by splitting the summation and bounding it by its maximum value.
The third inequality is based on the fact that if $a<c$ and $b>d$ then $\frac{a+b}{c+d} > \frac{a}{c}$.

Let $\Phi(n,\rho,k)$ denote the last line, i.e., the performance guarantee.
Give a limit analysis of $\Phi(n,\rho,k)$ as follows. 
\begin{displaymath}
\begin{aligned}
&\lim\limits_{n \to +\infty}\frac{(\rho-1)\underline{k}\big(\frac{n+1}{\rho\underline{k}-\underline{k}+1}-\frac{1}{n-\underline{k}+1}\big)-\big(\frac{1}{2}\big)^n}{n-1}\\
&=\frac{(\rho-1)\underline{k}}{\rho\underline{k}-\underline{k}+1}\lim\limits_{n \to +\infty}\frac{\big(n+1-\frac{\rho\underline{k}-\underline{k}+1}{n-\underline{k}+1}-\frac{\rho\underline{k}-\underline{k}+1}{2^n(\rho-1)\underline{k}}\big)}{n-1}\\
&=\frac{(\rho-1)\underline{k}}{\rho\underline{k}-\underline{k}+1}\\&=1-\frac{1}{\rho\underline{k}-\underline{k}+1},
\end{aligned}
\end{displaymath} 
which is equivalent to $\lim\limits_{n \to +\infty}\Phi(n,\rho,k)=1-\frac{1}{\rho\underline{k}-\underline{k}+1}$.
Since $\Phi(n,\rho,k)$ deceases on $n$, this finally leads to that the approximation ratio is lower-bounded by  $1-\frac{1}{\rho\underline{k}-\underline{k}+1}$.
\end{proof}



The approximation ratio increases with both $\rho$ and $\underline{k}$, and these two parameters are uncorrelated. 
Therefore, in networks where $\rho$ and $\underline{k}$ are sufficiently large, the revenue collected by \texttt{APX-R} with $\gamma$ could closely approach the theoretical upper bound. 
We can further simplify this ratio by investigating the worst-case scenario where some of the seller's direct neighbors do not have any social contact (i.e., $\underline{k}=1$).

\begin{corollary}\label{corollary_ratio}
The approximation ratio of $\operatorname{APX}(\mathbf{t'};\gamma)$ to $\operatorname{OPT}(N)$ is  is lower-bounded by $1-{1\over\rho}$.  
\end{corollary}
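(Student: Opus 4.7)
The plan is to obtain Corollary \ref{corollary_ratio} as a direct specialization of Theorem \ref{theorem_ratio_optn} by pushing the network parameter $\underline{k}$ to its worst possible value. First I would recall that Theorem \ref{theorem_ratio_optn} already establishes
\[
\frac{\operatorname{APX}(\mathbf{t'};\gamma)}{\operatorname{OPT}(N)} \;\geq\; 1 - \frac{1}{\rho\underline{k} - \underline{k} + 1} \;=\; 1 - \frac{1}{(\rho-1)\underline{k} + 1},
\]
so the entire dependence on the network structure is absorbed into the single quantity $\underline{k}$, the minimum possible sub-tree size in the POT, and into $\rho = |e_s|$, which is taken as given.

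Next I would argue that the right-hand side is a non-decreasing function of $\underline{k}$ for every fixed $\rho \geq 1$. This is immediate: the denominator $(\rho-1)\underline{k}+1$ is affine and non-decreasing in $\underline{k}$ (strictly increasing when $\rho \geq 2$), so $1/\bigl[(\rho-1)\underline{k}+1\bigr]$ is non-increasing and the bound itself is non-decreasing. Therefore the tightest (smallest) value of the lower bound is attained at the smallest admissible value of $\underline{k}$. Since $\underline{k}$ counts the size of a POST rooted at a direct neighbor of $s$ and such a neighbor can in principle be an isolated leaf, the smallest admissible value is $\underline{k} = 1$, corresponding to the worst-case scenario in which some direct neighbor of the seller has no further downstream contacts.

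Finally, I would substitute $\underline{k} = 1$ into the bound to conclude
\[
\frac{\operatorname{APX}(\mathbf{t'};\gamma)}{\operatorname{OPT}(N)} \;\geq\; 1 - \frac{1}{(\rho-1)\cdot 1 + 1} \;=\; 1 - \frac{1}{\rho},
\]
which is the claimed inequality, and since this is the value at the worst $\underline{k}$, the bound $1-1/\rho$ holds uniformly for every network regardless of $\underline{k}$. There is essentially no obstacle here: the proof is a one-line monotonicity observation followed by substitution. The only point worth flagging in writing is the justification that $\underline{k}=1$ is realizable, which is why the phrasing of Theorem \ref{theorem_reserve_price} explicitly allows the seller to conservatively set $\underline{k}=1$ when no better prior information is available; this same conservative choice is precisely what drives the corollary.
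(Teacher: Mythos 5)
Your proposal is correct and matches the paper's own treatment: the corollary is obtained by specializing Theorem \ref{theorem_ratio_optn} to the worst case $\underline{k}=1$ (a direct neighbor of the seller with no further contacts), and the monotonicity of $1-\frac{1}{(\rho-1)\underline{k}+1}$ in $\underline{k}$ ensures the bound $1-\frac{1}{\rho}$ holds uniformly. The paper states this specialization without further argument, so your brief monotonicity justification is, if anything, slightly more explicit than the original.
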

From this ratio, we can see that \texttt{APX-R} in combination with the reserve price in Theorem \ref{theorem_reserve_price} is capable of extracting very high revenue. It works even better for sellers who originally possess  a substantial  audience on their own. In this case, it will surpass Myerson optimal auction even more. 


Finally, under the symmetric i.i.d. assumption, we list the revenue extracted by all the aforementioned mechanisms in the context of selling an item within networks.
\begin{theorem}
For a sale in any network that has   more agents than a classical auction (i.e., $n\neq \rho$), the different revenues extracted from the network follow that
 \begin{displaymath}
   \operatorname{MYS}(e_s) 
 <
 \operatorname{APX}(\mathbf{t'};\frac{\bar{v}}{2}) 
 \leq  
 \operatorname{APX}(\mathbf{t'};\gamma)   
 \lessapprox  
 \operatorname{OPT}(N)
\end{displaymath}   
\end{theorem}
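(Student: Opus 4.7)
The plan is to assemble the four-term chain directly from results already established in the paper, so the proof is essentially a bookkeeping exercise. I will treat each of the three inequalities separately and then concatenate them, taking care to track where strictness holds.

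For the leftmost strict inequality $\operatorname{MYS}(e_s) < \operatorname{APX}(\mathbf{t'};\tfrac{\bar v}{2})$, I would invoke the computation inside the proof of Theorem \ref{proposition_better_tha_Myerson}, which already establishes $\operatorname{APX}(\mathbf{t'};\tfrac{\bar v}{2}) \geq \operatorname{MYS}(e_s)$ by lower-bounding $\operatorname{APX}(\mathbf{t'};\tfrac{\bar v}{2})$ by $\bar v\cdot \epsilon(n)$ and then comparing to the Myerson revenue, using the fact that the Myerson benchmark is monotone in the number of participants together with $m \geq \rho$. The excerpt explicitly notes that this inequality is strict whenever $n > \rho$, because at least one inviteé lies outside the seller's direct neighborhood and so the POT has strictly more activity than the classical thin market. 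The hypothesis $n \neq \rho$ of the theorem (combined with $n \geq \rho$) is exactly what is needed to promote the $\geq$ to a $<$.

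For the middle inequality $\operatorname{APX}(\mathbf{t'};\tfrac{\bar v}{2}) \leq \operatorname{APX}(\mathbf{t'};\gamma)$, I would apply Lemma \ref{lemma_monotonicity_max} sub-tree by sub-tree. That lemma tells us each $\operatorname{APX}_{T_x}(\mathbf{t'};r)$ is increasing on the interval $(0,\,\bar v/\sqrt[k_x]{k_x+1})$. By Theorem \ref{theorem_reserve_price} we have $\tfrac{\bar v}{2} \leq \gamma = \bar v/\sqrt[\underline k]{\underline k+1} \leq \bar v/\sqrt[k_x]{k_x+1}$ for every $T_x$ (the last step using $\underline k \leq k_x$ and the monotonicity of the map $k\mapsto \bar v/\sqrt[k]{k+1}$ already used in Theorem \ref{theorem_reserve_price}), so that $\operatorname{APX}_{T_x}(\mathbf{t'};\tfrac{\bar v}{2}) \leq \operatorname{APX}_{T_x}(\mathbf{t'};\gamma)$ sub-tree-wise. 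Summing over $x=1,\ldots,m$ via eq.\eqref{eq_revenue_graph} delivers the total-revenue comparison.

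For the rightmost relation $\operatorname{APX}(\mathbf{t'};\gamma) \lessapprox \operatorname{OPT}(N)$, I would simply cite Theorem \ref{theorem_ratio_optn}, or its simpler form in Corollary \ref{corollary_ratio}, which already guarantees an approximation ratio of at least $1-\tfrac{1}{\rho\underline k -\underline k+1}$. The symbol $\lessapprox$ is interpreted here as conveying exactly this near-optimality: $\operatorname{APX}(\mathbf{t'};\gamma) \leq \operatorname{OPT}(N)$ holds trivially since $\operatorname{OPT}(N)$ is derived from a strictly more favorable market than any realizable diffusion graph, and the gap shrinks toward zero as $\rho$ and $\underline k$ grow.

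There is no genuinely hard step in this assembly; the only subtlety worth flagging is the strictness at the leftmost link. Under $n \neq \rho$ one actually has $n > \rho$ (the seller cannot invite agents beyond her own neighborhood without them being counted in $n$), and it is precisely this excess participation that both makes the Myerson benchmark strictly larger than $\operatorname{MYS}(e_s)$ and prevents the chain from collapsing. Once that is pinned down, the statement follows by concatenating the three inequalities above.
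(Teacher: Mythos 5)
Your proposal is correct and matches exactly how the paper intends this theorem to be established: the paper gives no separate proof, since the chain is a direct concatenation of eq.~\eqref{eq_APX_MYS} (with the strict version $\operatorname{APX}(\mathbf{t'};\frac{\bar{v}}{2})>\operatorname{MYS}(e_s)$ for $n>\rho$ noted at the end of the proof of Theorem~\ref{proposition_better_tha_Myerson}), eq.~\eqref{eq_APX_SPA} via Lemma~\ref{lemma_monotonicity_max}, and the approximation guarantee of Theorem~\ref{theorem_ratio_optn}. Your handling of the strictness at the leftmost link ($n\neq\rho$ together with $n\geq\rho$ forces $n>\rho$) is the same observation the paper relies on.
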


%


%


\section{Extentions to Other Regular Distributions}
In the previous section, the approximation mechanism \texttt{APX-R} was introduced, and under the assumption that all buyers' valuations are drawn form uniform distribution independently, a simple yet effective reserve price function was designed. This function ensures the incentive compatibility of the mechanism, compelling buyers to bid truthfully based on their true valuations while also incentivizing them to propagate the auction info to all of their neighbors, thereby expanding the market to its fullest potential. Through rigorous analysis, it was demonstrated that under this setting, the \texttt{APX-R} mechanism optimizes the seller’s expected revenue and showcases its theoretical superiority. However, real-world auction scenarios often deviate significantly from the idealized uniform distribution. Buyers’ valuations are influenced by various factors and may follow diverse probability distributions. Consequently, a key challenge is designing a reserve price function that accommodates a broader class of distribution types while preserving the incentive compatibility of the mechanism. This section delves into the applicability and performance of the \texttt{APX-R} mechanism under valuations of different regular distributions. Additionally, it explores how to refine the reserve price function to optimize the seller’s expected revenue while maintaining the mechanism’s incentive compatibility.  A regular distribution refers to any valuation distribution 
$F$ that ensures the Myerson virtual valuation function $\psi(v)=v-\frac{1-F(v)}{f(v)}$ is monotonically increasing with respect to valuation $v$ in traditional auctions. This section is based on the assumption that buyers' valuations are independently and identically distributed drawn from a regular distribution $F$.

\subsection{Derivation of Optimal Reserve Prices}

\subsubsection{Subtree-wise Optimal Reserve Prices}
Notice that if a subtree-wise reserve price is set, meaning that after all buyers report their types $\mathbf{t'}=(\mathbf{v'},\mathbf{e'})$ and the auction network is transformed into the POT tree, the same reserve price is applied to all buyers within each POST. The expected revenue from each subtree $T_x$ is given by eq.\eqref{eq_APX_Tx_r}, differentiating it with respect to $r$ and setting it to zero gives the optimal subtree-wise reserve price for subtree $T_x$.
\begin{equation}\label{eq_optimal_r_Tx}
    \frac{k_xf(r_x^*)\left[r_x^*-\frac{1-F_x^{k_x}(r_x^*)}{k_xf(r_x^*)F^{k_x-1}(r_x^*)}\right]}{F(r_x^*)}=0.\nonumber
\end{equation}
Notice that the cdf of the subtree value $v_x$ of $T_x$ is given by $H_x = F^{k_x}(v_x)$, with the corresponding pdf given by $h_x(v_x) = k_x f(v_x) F^{k_x-1}(v_x)$, where $k_x$ represents the size of the subtree $T_x$. Thus, the above equation can be rewritten as:
\begin{equation}\label{eq_optimal_r_Tx_1}
     \frac{h_x(r_x^*)\left[r_x^*-\frac{1-H_x(r_x^*)}{h_x(r_x^*)}\right]}{H_x(r_x^*)}=0. \nonumber
\end{equation}
The hazard rate of the distribution $H_x$ for the subtree value $v_x$ of  $T_x$ is given by $\lambda_x = \frac{h_x}{1 - H_x}$. Thus, the optimal tree-wise reserve price can be determined as the solution to the following equation:
\begin{equation}\label{eq_optimal_r_Tx_2}
     r_x^*-\frac{1}{\lambda_x(r^*_x)}=0. \nonumber
\end{equation}

\subsubsection{Subtree Critical Value}

For the subtree-wise reserve price, the optimal reserve price is determined as the solution to $r - \frac{1 - H_x(r)}{h_x(r)} = 0$, which follows exactly the same form as the personalized reserve price derived in Myerson's optimal auction \citep{myerson1981optimal}. This observation leads to further exploration of the \textit{virtual valuation} function for each subtree.
\begin{definition}\label{definition_subtree_critical_value}  For any POST within a POT, the \textit{subtree critical value} is defined as:  
\begin{equation}  
    \phi(v_x; k_x) = v_x - \frac{1 - H_x(v_x)}{h_x(v_x)} = v_x - \frac{1 - F^{k_x}(v_x)}{k_x f(v_x) F^{k_x-1}(v_x)} .  
\end{equation}  
where $v_x$ represents the subtree value of $T_x$, and $k_x$ denotes the size of the subtree $T_x$.  
\end{definition}
The subtree critical value is essentially an extension of the traditional \textit{virtual valuation function} \citep{myerson1981optimal} in the context of social network-based auctions. Within this framework, each POST is treated as a whole, and the seller should ideally allocate the item to the POST subtree with the highest subtree critical value. Although all buyers' valuations are i.i.d., the distribution of the subtree value $v_x$ varies due to differences in the sizes of subtrees. Consequently, the subtree-wise reserve price dynamically adjusts to different subtree value distributions, enabling a structure-aware personalized price discriminationmbased on the subtree topology.

The monotonicity of the subtree critical value function $\phi(v_x; k_x)$ with respect to the subtree value $v_x$ and subtree size $k_x$ is crucial for the subsequent selection of the reserve price function and the feasibility analysis of its solution. First, we prove that when $F$ is a regular distribution, the function  $\phi(v_x; k_x) = v_x - \frac{1 - H_x(v_x)}{h_x(v_x)}$
is monotonically increasing with respect to the subtree value $v_x$.
\begin{lemma}
    When the buyers' valuation distribution $F$ satisfies regularity, the distribution $H$ of subtree values also satisfies regularity. That is, the subtree critical value function $\phi(v_x;k_x) = v_x - \frac{1-H_x(v_x)}{h_x(v_x)}$ is monotonically increasing with respect to $v_x$.
\end{lemma}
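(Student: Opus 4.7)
The approach I would take begins with the factorization $1 - F^{k_x}(v) = (1-F(v))\sum_{j=0}^{k_x-1} F^j(v)$, which rewrites the inverse hazard term as
\[
\frac{1 - H_x(v_x)}{h_x(v_x)} \;=\; \frac{1 - F(v_x)}{f(v_x)} \cdot G\bigl(F(v_x)\bigr),
\qquad G(u) \;=\; \frac{1 + u + \dots + u^{k_x-1}}{k_x\, u^{k_x-1}} \;=\; \frac{1}{k_x}\sum_{j=0}^{k_x-1} u^{-j}.
\]
Thus $\phi(v_x; k_x) = v_x - \frac{1-F(v_x)}{f(v_x)}\, G\bigl(F(v_x)\bigr)$. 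This decomposition is the key simplification: it cleanly separates the dependence on the primitive distribution $F$ (carried by $(1-F)/f$) from a purely algebraic, size-dependent correction $G(u)$ that depends only on the subtree size $k_x$.

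Next, I would differentiate $\phi(\cdot; k_x)$ in $v_x$. Using the identity $\frac{d}{dv}\bigl[(1-F)/f\bigr] = 1 - \psi_F'(v)$, where $\psi_F(v) = v - (1-F(v))/f(v)$ is the standard Myerson virtual valuation of $F$, the product rule gives
\[
\phi'(v_x;k_x) \;=\; 1 \;-\; G(F) \;+\; \psi_F'(v_x)\, G(F) \;-\; (1-F)\, G'(F).
\]
Because $F$ is regular, $\psi_F'(v_x)\ge 0$, and since $G(F)\ge 0$ always, this yields the clean lower bound
\[
\phi'(v_x;k_x) \;\geq\; 1 \;-\; G\bigl(F(v_x)\bigr) \;-\; \bigl(1 - F(v_x)\bigr)\, G'\bigl(F(v_x)\bigr),
\]
eliminating all further dependence on $f$ and $f'$.

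Finally, it suffices to establish the one-variable inequality $\Lambda(u) := G(u) + (1-u)G'(u) - 1 \le 0$ on $u\in(0,1]$. I would verify $\Lambda(1) = G(1)-1 = 0$ (from $G(1)=1$) and then compute $\Lambda'(u) = (1-u)\,G''(u)$. From the closed form of $G$, one obtains $G''(u) = \frac{1}{k_x}\sum_{j=1}^{k_x-1} j(j+1)\, u^{-j-2} \ge 0$ for $u>0$, so $\Lambda$ is non-decreasing on $(0,1]$ with terminal value $\Lambda(1)=0$. Hence $\Lambda(u)\le 0$ there, and combining with the previous step gives $\phi'(v_x;k_x)\ge 0$, i.e., regularity of $H_x$.

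The main obstacle, in my view, is spotting the factorization that isolates $G$: once the inverse hazard is split as $[(1-F)/f]\cdot G(F)$, the argument breaks neatly into a ``distribution part'' handled by regularity of $F$ and a ``combinatorial part'' handled by a one-variable monotonicity check on $\Lambda$. An alternative route would work with the revenue curve $R_{H_x}(p) = p\,H_x^{-1}(1-p)$ and try to derive its concavity from that of $R_F$, but the chain-rule manipulations required to push concavity through the map $p \mapsto 1-(1-p)^{1/k_x}$ look at least as involved as the direct differentiation above.
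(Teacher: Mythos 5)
Your proof is correct, and it executes the argument differently from the paper's. The paper also differentiates $\phi$ and invokes regularity of $F$, but it writes $\partial\phi/\partial v_x = 2 + \frac{f'(1-F)}{f^2}\cdot[\text{algebraic factor}]$ and then splits into cases on the sign of $f'$, with a further subcase requiring the bound $\frac{f^2}{f'} < -\frac{1-F}{2}$ (extracted from $\psi'>0$) and a separate polynomial inequality $\eta(x)=(k_x-1)x^{k_x+1}-(k_x+1)x^{k_x}+(k_x+1)x-(k_x-1)\le 0$ on $[0,1]$, proved by yet another differentiation. Your factorization $\frac{1-H_x}{h_x}=\frac{1-F}{f}\,G(F)$ with $G(u)=\frac{1}{k_x}\sum_{j=0}^{k_x-1}u^{-j}$, combined with $\frac{d}{dv}\bigl[(1-F)/f\bigr]=1-\psi_F'$, collapses all of this into the single sign-free inequality $\Lambda(u)=G(u)+(1-u)G'(u)-1\le 0$, which your monotonicity check ($\Lambda'=(1-u)G''\ge 0$, $\Lambda(1)=0$) settles cleanly; I verified the identities $G(1)=1$ and $G''(u)=\frac{1}{k_x}\sum_{j=1}^{k_x-1}j(j+1)u^{-j-2}\ge 0$, and the resulting bound $\phi'\ge 1-G(F)-(1-F)G'(F)=-\Lambda(F)\ge 0$ is valid. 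What your route buys is the elimination of the case analysis on $f'$ and of any appearance of $f'$ in the final inequality; in fact, since $G(F)\ge 1$, your decomposition $\phi'=-\Lambda(F)+\psi'G(F)$ immediately yields the strict and quantitatively stronger conclusion $\phi'\ge\psi'>0$, which is the same endpoint the paper reaches only in its hardest subcase. The only cosmetic gap is that you state the argument in the conditional ("I would differentiate"), but every computation needed is present and correct, including the degenerate case $k_x=1$ where $G\equiv 1$ and $\phi=\psi$.
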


\begin{proof}
    In traditional optimal auction mechanism design, the buyers' valuation distribution satisfies regularity, which implies that the Myerson virtual valuation function $\psi(v) = v - \frac{1-F(v)}{f(v)}$ is monotonically increasing on $v$:
    {\small \begin{displaymath}
        \psi'(v) = 2 + \frac{f'(v)\big(1-F(v)\big)}{f^{2}(v)} > 0.
    \end{displaymath}}
    Taking the first-order partial derivative of the subtree critical value function $\phi(v_x;k_x)$ with respect to the subtree value $v_x$:
    {\small \begin{equation}\label{positive_derivative_DVL}
    \begin{aligned}
        &\frac{\partial \phi(v;k)}{\partial v_x}=2+\frac{\frac{\partial h_x}{\partial v_x} \cdot \left(1-H_x(v_x)\right)}{h_x^2(v_x)}\\
        =&2 + \frac{f'(v_x)\left(1-F(v_x)\right)}{f^{2}(v_x)} \times \frac{\left[F(v_x)+\frac{(k_x-1)}{f'(v_x)}f^{2}(v_x)\right]\cdot \left(1+F(v_x)+ ... + F^{k_x-1}(v_x)\right)}{k_x\cdot F^{k_x}(v_x)}. \\
    \end{aligned}
    \end{equation}} 
    Now, we classify the analysis based on the sign of $f'(v_x)$:
    \begin{enumerate}[leftmargin=*, labelsep=0.5em, label=\arabic*)]
    \item $f'(v_x) \geq 0$: Clearly, every term in the first-order derivative of $\phi(v_x;k_x)$ with respect to $v_x$ is non-negative, implying that $\phi(v_x;k_x)$ is monotonically increasing in $v_x$.
    \item $f'(v_x) < 0$: Then $\frac{f'(v_x)(1-F(v_x))}{f^{2}(v_x)} \leq 0$ holds.
    If $F(v_x)+\frac{(k_x-1)}{f'(v_x)}f^{2}(v_x) \leq 0$, then it is evident that the first-order derivative of $\phi(v_x;k_x)$ with respect to $v_x$ remains non-negative.
    If $F(v_x)+\frac{(k_x-1)}{f'(v_x)}f^{2}(v_x) > 0$, then given that $\psi(v) = v - \frac{1-F(v)}{f(v)}$ is a monotonically increasing function with respect to $v$, it follows from previous derivations that $\frac{f^{2}(v_x)}{f'(v_x)}<-\frac{1-F(x)}{2}$ holds, thus $F(v_x)>\frac{k_x-1}{k_x+1}$. Thus, we can derive as following:
        {\small \begin{displaymath}
         \begin{aligned}
             &\left[F(v_x)+\frac{(k_x-1)}{f'(v_x)}f^{2}(v_x)\right]\cdot \left[1+F(v_x)+ ... + F^{k_x-1}(v_x)\right]-k_x\cdot F^{k_x}(v_x) \\
             < & \left[F(v_x)-\frac{(k_x-1)(1-F(x))}{2}\right]\cdot \left[1+F(v_x)+ ... + F^{k_x-1}(v_x)\right]-k_x\cdot F^{k_x}(v_x) \\
             = & \frac{1}{2} \cdot \left[\frac{2\left(1-F^{k_x+1}(v_x)\right)}{1-F(v_x)}-(k_x+1)(1+F^{k_x}(v_x))\right] \\
             = & \frac{1}{2} \cdot [(k_x-1)F^{k_x+1}(v_x)-(k_x+1)F^{k_x}(v_x)+(k_x+1)F(v_x)-(k_x-1)] < 0.
         \end{aligned}
        \end{displaymath}}
        The final inequality follows from differentiating the function $\eta(x) = (k_x-1)x^{k_x+1}-(k_x+1)x^{k_x}+(k_x+1)x-(k_x-1)$ with respect to $x$, which is clearly monotonically increasing over $x \in [0,1]$. Therefore, for all subtree values $v_x$, we have $\eta(F(v_x)) \leq \eta(1) = 0$. Consequently, the second factor in the final line of eq.\eqref{positive_derivative_DVL} is at most $1$, allowing further simplification:
        {\small \begin{displaymath}
         \frac{\partial \phi(v;k)}{\partial v_x} \geq \psi'(v) >0
     \end{displaymath}}
    \end{enumerate}
    In conclusion, when the Myerson virtual valuation function $\psi(v)$ is monotonically increasing with respect to the valuation $v$ in traditional auctions, the subtree critical value function $\phi(v_x;k_x)$ is also monotonically increasing with respect to the subtree value $v_x$.
\end{proof}

This property of the subtree critical value can be intuitively understood as follows: when treating a POST subtree as a whole, the higher the highest bid among buyers within the subtree, the greater the probability that the item will be allocated to that subtree. This incentivizes buyers to bid truthfully and actively propagate the auction info to attract high-valuation buyers, thereby increasing the chances of their subtree winning the auction. Consequently, subtree with higher subtree value also has higher subtree critical value, making it more likely to win the auction.

\begin{corollary}\label{col_mono_v}  
    If the buyers' valuation distribution $F$ has a monotonically increasing hazard rate, the subtree critical value function $\phi(v_x;k_x) = v_x - \frac{1-H_x(v_x)}{h_x(v_x)}$ is monotonically increasing with respect to the subtree value $v_x$.  
\end{corollary}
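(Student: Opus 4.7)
The plan is to derive Corollary \ref{col_mono_v} as an immediate consequence of the preceding Lemma, together with the standard implication that the monotone hazard rate (MHR) property of $F$ implies regularity of $F$. Since the Lemma already establishes that regularity of $F$ transfers to $H_x=F^{k_x}$ and hence yields monotonicity of $\phi(v_x;k_x)$ in $v_x$, all that is needed is to verify the MHR $\Rightarrow$ regularity step and cite the Lemma.

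More precisely, I would first recall that the Myerson virtual valuation can be written as $\psi(v)=v-\tfrac{1}{\lambda(v)}$ where $\lambda(v)=f(v)/(1-F(v))$ is the hazard rate of $F$. If $\lambda$ is non-decreasing in $v$, then $1/\lambda(v)$ is non-increasing, so $\psi(v)=v-1/\lambda(v)$ is a sum of two non-decreasing functions and is therefore non-decreasing. This is exactly the regularity hypothesis used in the previous Lemma. Applying that Lemma to $F$ then gives the conclusion that $\phi(v_x;k_x)$ is monotonically increasing in $v_x$.

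As a sanity-check alternative, I would mention the direct route: show that the hazard rate $\lambda_x$ of $H_x=F^{k_x}$ itself is non-decreasing. Setting $y=F(v_x)$ and using the IHR inequality $f'(v_x)/f(v_x)\ge -f(v_x)/(1-F(v_x))$, the monotonicity of $\log\lambda_x$ reduces after clearing denominators to the purely algebraic inequality
\begin{equation*}
g(y):=y^{k_x}-k_x y+(k_x-1)\ge 0,\qquad y\in[0,1].
\end{equation*}
One checks $g(1)=0$ and $g'(y)=k_x(y^{k_x-1}-1)\le 0$ on $[0,1]$, so $g(y)\ge g(1)=0$. Hence $\lambda_x$ is non-decreasing, and $\phi(v_x;k_x)=v_x-1/\lambda_x(v_x)$ is non-decreasing in $v_x$.

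I do not expect any serious obstacle: the content of the corollary is strictly weaker than that of the Lemma it follows, since MHR is a stronger assumption than regularity. The only mild subtlety is being explicit about the direction of the implication (MHR $\Rightarrow$ regularity, not the converse), and I would flag this briefly in the proof so the reader sees why invoking the Lemma is legitimate.
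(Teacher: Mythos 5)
Your proposal is correct and matches the paper's (implicit) treatment: the corollary is stated without a separate proof precisely because MHR implies regularity, so it follows at once from the preceding lemma, which is exactly your primary route. Your direct verification that the hazard rate of $H_x=F^{k_x}$ inherits monotonicity, reducing to $g(y)=y^{k_x}-k_xy+(k_x-1)\ge 0$ on $[0,1]$, is also correct but is not needed beyond what the lemma already provides.
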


Next, we investigate the monotonicity of the subtree critical value function $\phi(v_x; k_x)$ with respect to the subtree size $k_x$.

\begin{lemma}
    The subtree critical value function $\phi(v_x;k_x) = v_x - \frac{1-H_x(v_x)}{h_x(v_x)}$ is monotonically decreasing with respect to the subtree size $k_x$.
\end{lemma}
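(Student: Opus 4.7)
The plan is to reduce the claim to a clean one-variable monotonicity fact about the function $g(k)=\frac{1-u^{k}}{k\,u^{k-1}}$ with $u:=F(v_x)\in(0,1)$, and then prove that $g$ is non-decreasing in $k$ by an elementary averaging argument.

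First I would substitute $H_x(v_x)=F^{k_x}(v_x)$ and $h_x(v_x)=k_x f(v_x)F^{k_x-1}(v_x)$ into $\phi(v_x;k_x)$ to write
\begin{equation*}
\phi(v_x;k_x)=v_x-\frac{1}{f(v_x)}\cdot\frac{1-F^{k_x}(v_x)}{k_x\,F^{k_x-1}(v_x)}
= v_x-\frac{g(k_x)}{f(v_x)},
\end{equation*}
where $g(k):=\frac{1-u^{k}}{k\,u^{k-1}}$ and $u=F(v_x)\in(0,1)$. Since $v_x$ and $f(v_x)>0$ do not depend on $k_x$, the monotonicity of $\phi$ in $k_x$ is equivalent to showing that $g$ is non-decreasing in $k_x$; equivalently (since $k_x$ is a positive integer) that $g(k+1)\ge g(k)$ for every $k\ge 1$.

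The key algebraic step is to factor $1-u^{k}=(1-u)\sum_{i=0}^{k-1}u^{i}$, which yields
\begin{equation*}
g(k)=\frac{(1-u)}{k\,u^{k-1}}\sum_{i=0}^{k-1}u^{i}=(1-u)\cdot A(k),\qquad A(k):=\frac{1}{k}\sum_{i=0}^{k-1}u^{-i}.
\end{equation*}
Since $1-u>0$, it suffices to show $A(k+1)\ge A(k)$. A direct computation gives
\begin{equation*}
A(k+1)-A(k)=\frac{k\,u^{-k}-\sum_{i=0}^{k-1}u^{-i}}{k(k+1)},
\end{equation*}
and because $0<u<1$ implies $u^{-i}$ is strictly increasing in $i$, each of the $k$ terms in $\sum_{i=0}^{k-1}u^{-i}$ is at most $u^{-(k-1)}<u^{-k}$, so $\sum_{i=0}^{k-1}u^{-i}<k\,u^{-k}$. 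Hence $A(k+1)>A(k)$, $g$ is strictly increasing in $k$, and therefore $\phi(v_x;k_x)$ is strictly decreasing in $k_x$, as required.

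The main obstacle, such as it is, is finding the right reformulation: the derivative of $\phi$ with respect to $k_x$ (treating $k_x$ as continuous) involves $\ln F(v_x)$ terms and is messy to sign directly. The factorization $1-u^{k}=(1-u)(1+u+\cdots+u^{k-1})$ bypasses this entirely and turns the claim into a transparent comparison of partial sums of the geometric sequence $u^{-i}$. I would also briefly remark on the economic meaning: a larger subtree $T_x$ has its maximum bid distribution $H_x=F^{k_x}$ concentrated closer to $\bar v$, tightening the hazard rate and driving down the subtree critical value, which is consistent with the derived monotonicity.
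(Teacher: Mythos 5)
Your proof is correct, and it takes a genuinely different route from the paper's. The paper treats $k_x$ as a continuous variable, computes $\frac{\partial \phi}{\partial k_x}= \frac{1-F^{k_x}(v_x)+k_x\ln F(v_x)}{k_x^2f(v_x)F^{k_x-1}(v_x)}$, and signs the numerator by showing it is decreasing in $k_x$ and then invoking the inequality $1+\ln t - t\le 0$ at $k_x=1$. You instead stay entirely discrete: after isolating $g(k)=\frac{1-u^k}{k\,u^{k-1}}$ with $u=F(v_x)$, the factorization $1-u^k=(1-u)\sum_{i=0}^{k-1}u^i$ turns $g(k)$ into $(1-u)$ times the average of the first $k$ terms of the increasing sequence $u^{-i}$, and the running average of an increasing sequence is increasing. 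This is more elementary (no logarithms, no calculus in $k$), yields \emph{strict} monotonicity for $u\in(0,1)$, and is fully adequate since $k_x$ is a positive integer everywhere the lemma is used; the paper's derivative argument buys monotonicity over real $k_x\ge 1$, which is never actually needed. The only caveat worth adding to your write-up is the boundary case $F(v_x)=1$ (and implicitly $f(v_x)>0$): there $u^{-i}$ is no longer increasing, but $g(k)\equiv 0$ so the (weak) monotonicity claimed in the lemma still holds trivially; your strict inequality should be stated only for $F(v_x)\in(0,1)$.
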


\begin{proof}
    First, we compute the first-order partial derivative of $\phi(v_x; k_x)$ with respect to the subtree size $k_x$:
    {\small \begin{displaymath}
        \begin{aligned}
            \frac{\partial \phi(v_x;k_x)}{\partial k_x}= \frac{-F^{k_x}(v_x)+1+k_x\ln F(v_x)}{k_x^2f(v_x)F^{k_x-1}(v_x)} \leq \frac{-F(v_x)+1+\ln F(v_x)}{k_x^2f(v_x)F^{k_x-1}(v_x)} \leq 0.
        \end{aligned}
    \end{displaymath}}
    The first simplification involves bounding the numerator of the derivative, denoted as the function $\alpha(k_x)$. Differentiating $\alpha(k_x)$ with respect to $k_x$ gives $\alpha'(k_x) = \ln F(v_x)(1 - F^{k_x}(v_x)) < 0$, indicating that $\alpha(k_x)$ is monotonically decreasing with respect to the subtree size $k_x$. Thus, for all $k_x$, the inequality $\alpha(k_x) \leq \alpha(1) = 1 + \ln F(v_x) - F(v_x)$ holds. The second simplification follows from a fundamental mathematical inequality, which states that for any $x \in (0,1]$, $ 1 + \ln x - x \leq 0$ always holds. Consequently, we obtain $\alpha(1) = 1 + \ln F(v_x) - F(v_x) \leq 0$. Therefore, the function $\phi(v_x;k_x) = v_x - \frac{1-H_x(v_x)}{h_x(v_x)}$ is monotonically decreasing with respect to the subtree size $k_x$.
\end{proof}

\begin{proposition}
\label{theorem_crit_mono}  
    If the buyers' valuation distribution $F$ satisfies regularity, the subtree critical value function $\phi(v_x;k_x) = v_x - \frac{1-H_x(v_x)}{h_x(v_x)}$ is monotonically increasing with respect to the subtree value $v_x$ and monotonically decreasing with respect to the subtree size $k_x$.
\end{proposition}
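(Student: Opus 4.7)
The plan is to prove the two monotonicity claims separately, each following directly from the corresponding lemma established just above. Both claims reduce to checking the sign of a partial derivative of $\phi(v_x;k_x)$, and the structural identities $H_x(v_x) = F^{k_x}(v_x)$ and $h_x(v_x) = k_x f(v_x) F^{k_x-1}(v_x)$ make these derivatives amenable to explicit sign analysis.

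For the monotonicity in $v_x$, I would differentiate $\phi$ with respect to $v_x$ and rewrite the result so as to expose the Myerson virtual-value derivative $\psi'(v) = 2 + f'(v)(1-F(v))/f^2(v)$, which is positive by the regularity assumption on $F$. The sign of the full derivative then hinges on $f'(v_x)$: when $f'(v_x) \geq 0$ each factor is manifestly nonnegative; when $f'(v_x) < 0$ a finer argument is required, leveraging regularity in the form $f^2(v_x)/f'(v_x) \leq -(1-F(v_x))/2$ to reduce the sign question to whether the auxiliary polynomial $\eta(x) = (k_x-1)x^{k_x+1} - (k_x+1)x^{k_x} + (k_x+1)x - (k_x-1)$ is nonpositive on $(0,1]$, which follows from $\eta(1) = 0$ together with the monotonicity of $\eta$ on $[0,1]$.

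For the monotonicity in $k_x$, no regularity hypothesis is needed. I would differentiate $\phi$ with respect to $k_x$ and observe that the numerator simplifies to $-F^{k_x}(v_x) + 1 + k_x \ln F(v_x)$. Since this expression is itself decreasing in $k_x$ (its $k_x$-derivative equals $\ln F(v_x) \cdot (1 - F^{k_x}(v_x)) \leq 0$), it is bounded above by its value at $k_x = 1$, namely $1 + \ln F(v_x) - F(v_x)$, which is nonpositive by the elementary inequality $\ln x \leq x - 1$ for $x \in (0,1]$.

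The main obstacle is the subcase $f'(v_x) < 0$ with $F(v_x) + (k_x-1)f^2(v_x)/f'(v_x) > 0$ in the $v_x$-monotonicity argument; here a naive factor-wise sign check fails, and one must carefully translate regularity into the polynomial bound on $\eta(x)$ before the inequality closes. Once both derivative signs are verified, the proposition follows immediately by combining the two monotonicity statements.
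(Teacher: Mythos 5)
Your proposal is correct and follows essentially the same route as the paper: the $v_x$-monotonicity is proved exactly as in the paper's corresponding lemma, via the case split on the sign of $f'(v_x)$, the regularity bound $f^2(v_x)/f'(v_x)\le -(1-F(v_x))/2$, and the auxiliary polynomial $\eta(x)$ with $\eta(F(v_x))\le\eta(1)=0$; the $k_x$-monotonicity is proved identically via the numerator $-F^{k_x}(v_x)+1+k_x\ln F(v_x)$ bounded by its value at $k_x=1$ and the inequality $1+\ln x - x\le 0$ on $(0,1]$. No gaps beyond those already present in the paper's own argument.
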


Based on the property that the subtree critical value is monotonically decreasing with respect to subtree size, consider two subtrees $T_{\alpha}$ and $T_{\beta}$ with sizes $k_{\alpha}$ and $k_{\beta}$, respectively, where both subtrees have the same subtree value $v_s$, and assume that $k_{\alpha} < k_{\beta}$. In this case, despite the smaller size of subtree $T_{\alpha}$, its virtual valuation is much higher, making it more likely to win the item. This result appears counterintuitive because, intuitively, a larger subtree should have a higher probability of containing high-valuation buyers, thereby giving it a competitive advantage. However, under our auction mechanism, buyers with critical diffusion behavior are rewarded, which may lead to scenarios where the seller's revenue is not necessarily maximized by allocating the item to the subtree with more buyers. Consequently, when subtree values are identical, a larger subtree does not always have an advantage. Instead, the higher expected propagation rewards associated with the larger subtree may result in the smaller subtree achieving higher effective revenues, ultimately making them more likely to win the auction.

This phenomenon can also be interpreted through price discrimination in traditional auction mechanisms, particularly in a way that favors "weaker" buyers. Although all buyers’ valuations are all drawn from the distribution $F$ independently, the distributions of  subtree values for $T_{\alpha}$ and $T_{\beta}$ are given by $H_{\alpha} = F^{k_{\alpha}}$ and $H_{\beta} = F^{k_{\beta}}$, respectively. Since the subtree sizes satisfy $k_{\alpha} < k_{\beta}$, it is evident that the hazard rates of these two distributions satisfy:
\begin{displaymath}
\begin{aligned}
    \lambda_{\alpha}(v_s) = \frac{h_{\alpha}(v_s)}{1-H_{\alpha}(v_s)} &=\frac{k_{\alpha}f(v_s)F^{k_{\alpha}-1}(v_s)}{1-F^{k_{\alpha}}(v_s)} \\
    &\geq \frac{k_{\beta}f(v_s)F^{k_{\beta}-1}(v_s)}{1-F^{k_{\beta}}(v_s)} =\frac{h_{\beta}(v_s)}{1-H_{\beta}(v_s)}= \lambda_{\beta}(v_s).
\end{aligned} 
\end{displaymath}
At this point, the distribution $F^{k_{\alpha}}$ is said to \textit{stochastically dominate} $F^{k_{\beta}}$ \citep{krishna2009auction}. The distribution of subtree value $v_{\alpha}$ in subtree $T_{\alpha}$, given by $H_{\alpha} = F^{k_{\alpha}}$, is relatively weaker due to the smaller number of buyers, which could lead to a lower subtree value. However, if both subtrees have the same subtree value $v_s$, the following inequality holds:
\begin{equation}\label{eq_other_dominates}
    \phi(v_s;k_{\alpha}) = v_s - \frac{1}{\lambda_{\alpha}(v_s)} \geq v_s - \frac{1}{\lambda_{\beta}(v_s)} = \phi(v_s;k_{\beta}),
\end{equation}
which implies that the subtree critical value of $T_{\alpha}$ is higher. Consequently, although $T_{\alpha}$ is disadvantaged in terms of the number of buyers, its higher critical value enables it to gain a competitive advantage in the auction.

\subsubsection{The Unique Solution of $r_{opt}$}

From Proposition \ref{theorem_crit_mono}, it follows that the solution to the optimal reserve price that maximizes the expected revenue of the \texttt{APX-R} mechanism is unique.

\begin{theorem}\label{theorem_only_ropt}
    When the buyers' valuation distribution $F$ satisfies regularity,  the solution of the optimal reserve price that maximizes the expected revenue of the \texttt{APX-R} mechanism is unique.
\end{theorem}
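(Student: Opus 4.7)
The plan is to rewrite the first-order condition for $\operatorname{APX}(\mathbf{t'}; r)$ in a factored form whose sign is governed by the subtree critical value function $\phi(\cdot; k_x)$, and then apply the monotonicity established in Proposition \ref{theorem_crit_mono} to pin down a unique root of the first-order condition that is in fact a global maximum.

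First, I would differentiate the closed form of $\operatorname{APX}(\mathbf{t'}; r)$ given in eq.\eqref{eq_APX_r} with respect to $r$. Using $\sum_{x=1}^m k_x = n$ and $\sum_{x=1}^m (n-k_x)/n = m-1$, the derivative simplifies to $\sum_{x=1}^m F^{n-k_x}(r) - m F^n(r) - r n F^{n-1}(r) f(r)$. I would then group the first two sums as $\sum_{x} G_x(r)(1 - H_x(r))$ and rewrite the remaining term using the identity $G_x(r) h_x(r) = k_x f(r) F^{n-1}(r)$, which gives the clean factored form
\begin{equation*}
\frac{\partial \operatorname{APX}(\mathbf{t'}; r)}{\partial r} = - f(r) F^{n-1}(r) \sum_{x=1}^m k_x \, \phi(r; k_x),
\end{equation*}
where $\phi(\cdot; k_x)$ is the subtree critical value of Definition \ref{definition_subtree_critical_value}.

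The stationarity equation thus reduces to $\sum_{x=1}^m k_x \, \phi(r; k_x) = 0$. Because $F$ is regular, Proposition \ref{theorem_crit_mono} yields that each $\phi(r; k_x)$ is strictly increasing in $r$, so the positively weighted sum $\Psi(r) := \sum_{x=1}^m k_x \, \phi(r; k_x)$ is strictly increasing on $(0, \bar v)$. At the left endpoint, $\phi(0; k_x) = -1/h_x(0)$ is negative (or $-\infty$ when $k_x > 1$ forces $h_x(0) = 0$), so $\Psi(0^+) < 0$; at the right endpoint, $H_x(\bar v) = 1$ gives $\phi(\bar v; k_x) = \bar v$ and hence $\Psi(\bar v) = n \bar v > 0$. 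Strict monotonicity together with the intermediate value theorem then delivers a unique $r_{opt} \in (0, \bar v)$ solving the first-order condition.

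To finish, I would verify that this unique stationary point is the global maximizer. Since $f(r) F^{n-1}(r) > 0$ on $(0, \bar v)$, the sign of $\partial \operatorname{APX} / \partial r$ is exactly opposite to that of $\Psi(r)$; hence $\operatorname{APX}(\mathbf{t'}; \cdot)$ is strictly increasing on $(0, r_{opt})$ and strictly decreasing on $(r_{opt}, \bar v)$, making $r_{opt}$ the unique maximizer of the seller's expected revenue. The main obstacle lies not in the monotonicity argument itself, which follows directly from Proposition \ref{theorem_crit_mono}, but in handling the degenerate behavior of $\phi(r; k_x)$ at $r = 0$ when $k_x > 1$ drives $h_x(0)$ to zero; this is resolved by taking a one-sided limit and noting that it only reinforces the negativity of $\Psi$ near the left endpoint without affecting strict monotonicity on the open interval.
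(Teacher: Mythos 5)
Your proposal is correct and follows essentially the same route as the paper: both reduce the first-order condition to the weighted sum $\sum_{x} k_x\,\phi(r;k_x)=0$, invoke the monotonicity of the subtree critical value in $r$ under regularity (Proposition \ref{theorem_crit_mono}), and conclude uniqueness from the sign change between the endpoints $0$ and $\bar v$. Your additional observations — the explicit factored form of the derivative confirming that the unique stationary point is a global maximizer, and the careful handling of $h_x(0)=0$ at the left endpoint — are sound refinements of the same argument rather than a different proof.
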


\begin{proof}
    Transforming the necessary condition for $r_{opt}$ in eq.\eqref{eq_reserve_opt}: 
    {\small \begin{displaymath}
        \begin{aligned}
            &\sum_{x=1}^m \left(1+k_x \cdot \frac{r_{opt} \cdot f(r_{opt})}{F(r_{opt})}-\left[{F^{k_x}(r_{opt})}\right]^{-1}\right)\\
            =& \frac{f(r_{opt})}{F(r_{opt})} \sum_{x=1}^m \left(k_x \cdot \left(r_{opt}- \frac{1-F^{k_x}(r_{opt})}{k_xf(r_{opt})F^{k_x-1}(r_{opt})}\right)\right) = 0.
        \end{aligned}
    \end{displaymath}}
    Through this simplification, the condition that the solution of $r_{opt}$ must satisfy can also be rewritten as:
     {\small \begin{equation}\label{eq_r_opt_form}
        \frac{f(r_{opt})}{F(r_{opt})} \sum_{x=1}^m \left(k_x \cdot \left(r_{opt}- \frac{1-F^{k_x}(r_{opt})}{k_xf(r_{opt})F^{k_x-1}(r_{opt})}\right)\right) = 0.
    \end{equation}}
    Define the function $\beta(r)$ as $\beta(r) =  \sum_{x=1}^m \left(k_x \cdot \left(r- \frac{1-F^{k_x}(r)}{k_xf(r)F^{k_x-1}(r)}\right)\right)$. From corollary \ref{col_mono_v}, it follows that each term in the summation of $\beta(r)$ is monotonically increasing with respect to $r$, which implies that $\beta(r)$ itself is also monotonically increasing. $r_{opt}$ is the unique solution to $\beta(r) = 0$. Moreover, since $\beta(0) < 0, \beta(\bar{v}) = k_x \cdot \bar{v} > 0$,  there exists a unique solution for $r_{opt}$.
\end{proof}

In traditional auction settings, when the reserve price is set higher than the optimal reserve price, the expected revenue declines rapidly. This result provides a guideline for designing an approximate optimal reserve price function, suggesting that it should be set to the left of $r_{opt}$. According to this theorem, we can search for an approximate optimal reserve price function within the range left of the unique optimal reserve price solution. Although the optimal reserve price solution is unique, in practice, this solution is influenced by buyers' diffusion strategies, meaning that it may fail to satisfy DSIC in terms of the propagation dimension of the mechanism.

\subsubsection{Properties of $r_{opt}$}

It can be observed that $r_{opt}$ is strongly correlated with the POT structure that emerges after buyers engage in propagation behavior. Specifically, it is highly dependent on the number of direct neighbors of the seller in the POT (denoted as $m$) and the number of buyers in each POST subtree (denoted as $k_x$). Therefore, whether $r_{opt}$ satisfies the incentive compatibility property in the propagation dimension of the mechanism requires further investigation.

To begin, consider an arbitrary buyer $i$. Assume that all other buyers truthfully propagate the auction information to all of their neighbors, except for one buyer $j$ in $e_i$, who does not receive the auction information from $i$. That is, buyer $i$ propagates the auction information only to $e'_i$, where $|e_i| - |e'_i| = 1$ and $e_i - e'_i = \{ j \}$. 

The key question is: How does this affect the structure of the POT compared to the scenario where all buyers truthfully propagate the auction information? This analysis will provide insight into how deviations from full propagation influence the resulting auction structure and the feasibility of using the $r_{opt}$ while maintaining DSIC in the propagation dimension.

\begin{enumerate}[leftmargin=*, labelsep=0.5em, label=\arabic*)]
\item Buyer $j$ cannot participate in the auction: This implies that the total number of buyers in the auction market will decrease. Moreover, buyer $i$ is the DCN of buyer $j$, then the number of direct neighbors of the seller in the POT remains unchanged. However, the number of buyers in $T_x$ which buyer $i$ belongs to will decrease. This structural change is illustrated in figure \ref{figure_diffusion_ic_1}.

\tikzset{global scale/.style={
    scale=#1,
    every node/.append style={scale=#1}
        }}
    \begin{figure}[ht]
        \centering
        \begin{tikzpicture}[global scale=1.2, box/.style={circle, draw}]
            \node[box,fill=gray,very thin, scale=0.6](s1) at(0,0){{$s$}};
            \node[box,fill={rgb:red,1;green,180;blue,80},scale=0.6](A1) at(-0.5,-0.6){$\mathbf{i}$};
            \node[box,fill={rgb:red,1;green,180;blue,80},scale=1](B1) at(0.5,-0.6){\quad};
            \node[box,fill=white, scale=0.6](C1) at(-0.5,-1.2){$\mathbf{j}$};
            \node[box,fill=white, scale=1](D1) at(-0.5,-1.8){\quad};
            \node at (0, -2.5) {\scriptsize{Social network}};
            \draw[line width=0.6pt] (s1) --(A1);
            \draw[line width=0.6pt] (s1) --(B1);
            \draw[line width=0.6pt] (A1) --(C1);
            \draw[line width=0.6pt] (C1) --(D1);

            \node[box,fill=gray,very thin, scale=0.6](s2) at(3,0){$s$};
            \node[box,fill={rgb:red,1;green,180;blue,80},scale=0.6](A2) at(2.5,-0.6){$\mathbf{i}$};
            \node[box,fill={rgb:red,1;green,180;blue,80},scale=1](B2) at(3.5,-0.6){\quad};
            \node[box,fill=white, scale=0.6](C2) at(2.5,-1.2){$\mathbf{j}$};
            \node[box,fill=white, scale=1](D2) at(2.5,-1.8){\quad};
            \node at (3, -2.5) {\scriptsize{POT of full diffusion}};
            \draw[->,  line width=0.6pt] (s2) --(A2);
            \draw[->,  line width=0.6pt] (s2) --(B2);
            \draw[->,  line width=0.6pt] (A2) --(C2);
            \draw[->,  line width=0.6pt] (C2) --(D2);

            \node[box,fill=gray,very thin, scale=0.6](s3) at(6,0){$s$};
            \node[box,fill={rgb:red,1;green,180;blue,80},scale=0.6](A3) at(5.5,-0.6){$\mathbf{i}$};
            \node[box,fill={rgb:red,1;green,180;blue,80},scale=1](B3) at(6.5,-0.6){\quad};
            \node at (6, -2.5) {\scriptsize{POT of $j \notin e'_i$}};
            \draw[->,  line width=0.4pt] (s3) --(A3);
            \draw[->,  line width=0.4pt] (s3) --(B3);
            \end{tikzpicture}
        \caption{POT changes when $j \notin e'_i$ - Case 1.}
        \label{figure_diffusion_ic_1}
\end{figure}
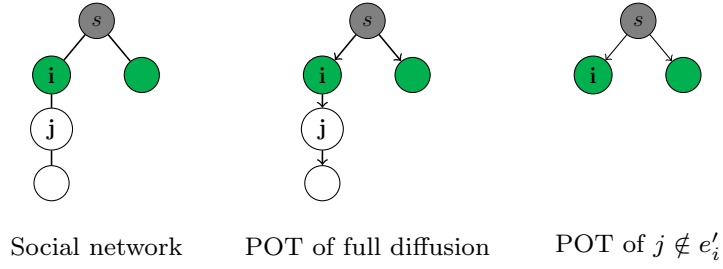

\item Buyer $j$ can still participate in the auction, and under the scenario where all buyers truthfully propagate the auction information, the buyers responsible for propagating the auction info to $j$ share a common DCN. This implies that the total number of buyers in the auction remains unchanged, and buyer $i$ is not the DCN of buyer $j$. Thus, when $j \notin e'_i$, both the value of $m$ and all $k_x$ remain unchanged. This structural change is illustrated in figure \ref{figure_diffusion_ic_2}.

\tikzset{global scale/.style={
    scale=#1,
    every node/.append style={scale=#1}
        }}
    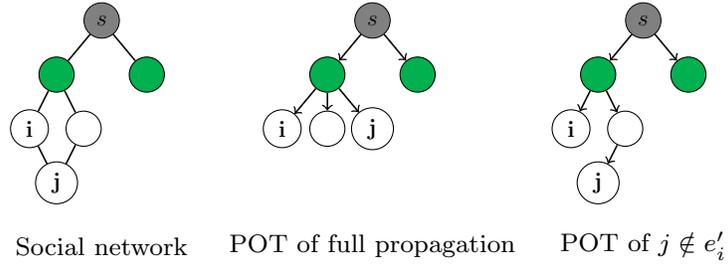
\begin{figure}[ht]
        \centering
        \begin{tikzpicture}[global scale=1.2, box/.style={circle, draw}]
            \node[box,fill=gray,very thin, scale=0.6](s1) at(0,0){{$s$}};
            \node[box,fill={rgb:red,1;green,180;blue,80},scale=1](A1) at(-0.5,-0.6){\quad};
            \node[box,fill={rgb:red,1;green,180;blue,80},scale=1](B1) at(0.5,-0.6){\quad};
            \node[box,fill=white, scale=0.6](C1) at(-0.8,-1.2){$\mathbf{i}$};
            \node[box,fill=white, scale=1](D1) at(-0.2,-1.2){\quad};
            \node[box,fill=white, scale=0.6](E1) at(-0.5,-1.8){$\mathbf{j}$};
            \node at (0, -2.5) {\scriptsize{Social network}};
            \draw[line width=0.6pt] (s1) --(A1);
            \draw[line width=0.6pt] (s1) --(B1);
            \draw[line width=0.6pt] (A1) --(C1);
            \draw[line width=0.6pt] (A1) --(D1);
            \draw[line width=0.6pt] (D1) --(E1);
            \draw[line width=0.6pt] (C1) --(E1);

            \node[box,fill=gray,very thin, scale=0.6](s2) at(3,0){{$s$}};
            \node[box,fill={rgb:red,1;green,180;blue,80},scale=1](A2) at(2.5,-0.6){\quad};
            \node[box,fill={rgb:red,1;green,180;blue,80},scale=1](B2) at(3.5,-0.6){\quad};
            \node[box,fill=white, scale=0.6](C2) at(2.0,-1.2){$\mathbf{i}$};
            \node[box,fill=white, scale=1](D2) at(2.5,-1.2){\quad};
            \node[box,fill=white, scale=0.6](E2) at(3,-1.2){$\mathbf{j}$};
            \node at (3, -2.5) {\scriptsize{POT of full propagation}};
            \draw[->,  line width=0.6pt] (s2) --(A2);
            \draw[->,  line width=0.6pt] (s2) --(B2);
            \draw[->,  line width=0.6pt] (A2) --(C2);
            \draw[->,  line width=0.6pt] (A2) --(D2);
            \draw[->,  line width=0.6pt] (A2) --(E2);

            \node[box,fill=gray,very thin, scale=0.6](s3) at(6,0){{$s$}};
            \node[box,fill={rgb:red,1;green,180;blue,80},scale=1](A3) at(5.5,-0.6){\quad};
            \node[box,fill={rgb:red,1;green,180;blue,80},scale=1](B3) at(6.5,-0.6){\quad};
            \node[box,fill=white, scale=0.6](C3) at(5.2,-1.2){$\mathbf{i}$};
            \node[box,fill=white, scale=1](D3) at(5.8,-1.2){\quad};
            \node[box,fill=white, scale=0.6](E3) at(5.5,-1.8){$\mathbf{j}$};
            \node at (6, -2.5) {\scriptsize{POT of $j \notin e'_i$}};
            \draw[->,  line width=0.6pt] (s3) --(A3);
            \draw[->,  line width=0.6pt] (s3) --(B3);
            \draw[->,  line width=0.6pt] (A3) --(C3);
            \draw[->,  line width=0.6pt] (A3) --(D3);
            \draw[->,  line width=0.6pt] (D3) --(E3);
            \end{tikzpicture}
        \caption{POT changes when $j \notin e'_i$ - Case 2.}
        \label{figure_diffusion_ic_2}
\end{figure}

\item Buyer $j$ can still participate in the auction, and under the scenario where all buyers truthfully propagate the auction information, none of the buyers responsible for propagating the auction info to $j$ share a common DCN. This implies that the total number of buyers in the auction remains unchanged, and buyer $i$ is not the DCN of buyer $j$. If there is only one path from the seller $s$ to buyer $j$, then when buyer $i$ does not propagate to buyer $j$, the value of $m$ decreases. Meanwhile, the POST subtree containing the only remaining path from the seller to buyer $j$ increases in size, as illustrated in figure \ref{figure_diffusion_ic_3}. If there are at least two paths from $s$ to buyer $j$, then $j$ remains a direct neighbor of the seller in the POT, and both $m$ and the sizes of all $k_x$ remain unchanged, as illustrated in figure \ref{figure_diffusion_ic_4}.

\tikzset{global scale/.style={
    scale=#1,
    every node/.append style={scale=#1}
        }}
    \begin{figure}[ht]
        \centering
        \begin{tikzpicture}[global scale=1.2, box/.style={circle, draw}]
            \node[box,fill=gray,very thin, scale=0.6](s1) at(0,0){{$s$}};
            \node[box,fill={rgb:red,1;green,180;blue,80},scale=1](A1) at(-0.6,-0.6){\quad};
            \node[box,fill={rgb:red,1;green,180;blue,80},scale=0.6](B1) at(0,-0.6){$\mathbf{i}$};
            \node[box,fill={rgb:red,1;green,180;blue,80},scale=1](C1) at(0.6,-0.6){\quad};
            \node[box,fill=white, scale=0.6](D1) at(0.3,-1.2){$\mathbf{j}$};
            \node at (0, -2) {\scriptsize{Social network}};
            \draw[line width=0.6pt] (s1) --(A1);
            \draw[line width=0.6pt] (s1) --(B1);
            \draw[line width=0.6pt] (s1) --(C1);
            \draw[line width=0.6pt] (B1) --(D1);
            \draw[line width=0.6pt] (C1) --(D1);

            \node[box,fill=gray,very thin, scale=0.6](s2) at(3,0){{$s$}};
            \node[box,fill={rgb:red,1;green,180;blue,80},scale=1](A2) at(2.1,-0.6){\quad};
            \node[box,fill={rgb:red,1;green,180;blue,80},scale=0.6](B2) at(2.7,-0.6){$\mathbf{i}$};
            \node[box,fill={rgb:red,1;green,180;blue,80},scale=1](C2) at(3.3,-0.6){\quad}; 
            \node[box,fill=white, scale=0.6](D2) at(3.9,-0.6){$\mathbf{j}$};
            \node at (3, -2) {\scriptsize{POT of full propagation}};
            \draw[->,  line width=0.6pt] (s2) --(A2);
            \draw[->,  line width=0.6pt] (s2) --(B2);
            \draw[->,  line width=0.6pt] (s2) --(C2);
            \draw[->,  line width=0.6pt] (s2) --(D2);

            \node[box,fill=gray,very thin, scale=0.6](s3) at(6,0){{$s$}};
            \node[box,fill={rgb:red,1;green,180;blue,80},scale=1](A3) at(5.4,-0.6){\quad};
            \node[box,fill={rgb:red,1;green,180;blue,80},scale=0.6](B3) at(6,-0.6){$\mathbf{i}$};
            \node[box,fill={rgb:red,1;green,180;blue,80},scale=1](C3) at(6.6,-0.6){\quad};
            \node[box,fill=white, scale=0.6](D3) at(6.6,-1.2){$\mathbf{j}$};
            \node at (6, -2) {\scriptsize{POT of $j \notin e'_i$}};
            \draw[->,  line width=0.6pt] (s3) --(A3);
            \draw[->,  line width=0.6pt] (s3) --(B3);
            \draw[->,  line width=0.6pt] (s3) --(C3);
            \draw[->,  line width=0.6pt] (C3) --(D3);
            
            \end{tikzpicture}
        \caption{POT changes when $j \notin e'_i$ - Case 3.}
        \label{figure_diffusion_ic_3}
\end{figure}
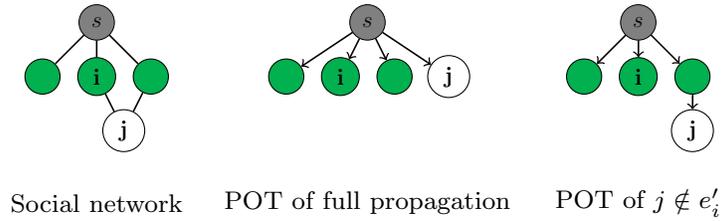
    \tikzset{global scale/.style={
    scale=#1,
    every node/.append style={scale=#1}
        }}
    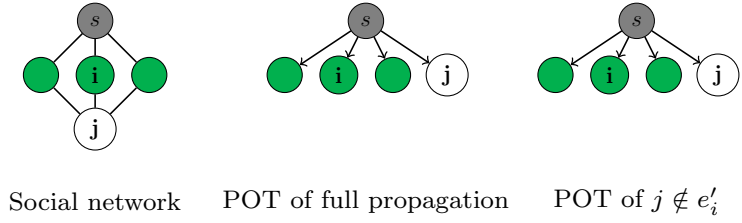
\begin{figure}[ht]
        \centering
        \begin{tikzpicture}[global scale=1.2, box/.style={circle, draw}]
            \node[box,fill=gray,very thin, scale=0.6](s1) at(0,0){{$s$}};
            \node[box,fill={rgb:red,1;green,180;blue,80},scale=1](A1) at(-0.6,-0.6){\quad};
            \node[box,fill={rgb:red,1;green,180;blue,80},scale=0.6](B1) at(0,-0.6){$\mathbf{i}$};
            \node[box,fill={rgb:red,1;green,180;blue,80},scale=1](C1) at(0.6,-0.6){\quad};
            \node[box,fill=white, scale=0.6](D1) at(0,-1.2){$\mathbf{j}$};
            \node at (0, -2) {\scriptsize{Social network}};
            \draw[line width=0.6pt] (s1) --(A1);
            \draw[line width=0.6pt] (s1) --(B1);
            \draw[line width=0.6pt] (s1) --(C1);
            \draw[line width=0.6pt] (A1) --(D1);
            \draw[line width=0.6pt] (B1) --(D1);
            \draw[line width=0.6pt] (C1) --(D1);

            \node[box,fill=gray,very thin, scale=0.6](s2) at(3,0){{$s$}};
            \node[box,fill={rgb:red,1;green,180;blue,80},scale=1](A2) at(2.1,-0.6){\quad};
            \node[box,fill={rgb:red,1;green,180;blue,80},scale=0.6](B2) at(2.7,-0.6){$\mathbf{i}$};
            \node[box,fill={rgb:red,1;green,180;blue,80},scale=1](C2) at(3.3,-0.6){\quad}; 
            \node[box,fill=white, scale=0.6](D2) at(3.9,-0.6){$\mathbf{j}$};
            \node at (3, -2) {\scriptsize{POT of full propagation}};
            \draw[->,  line width=0.6pt] (s2) --(A2);
            \draw[->,  line width=0.6pt] (s2) --(B2);
            \draw[->,  line width=0.6pt] (s2) --(C2);
            \draw[->,  line width=0.6pt] (s2) --(D2);

            \node[box,fill=gray,very thin, scale=0.6](s3) at(6,0){{$s$}};
            \node[box,fill={rgb:red,1;green,180;blue,80},scale=1](A3) at(5.1,-0.6){\quad};
            \node[box,fill={rgb:red,1;green,180;blue,80},scale=0.6](B3) at(5.7,-0.6){$\mathbf{i}$};
            \node[box,fill={rgb:red,1;green,180;blue,80},scale=1](C3) at(6.3,-0.6){\quad}; 
            \node[box,fill=white, scale=0.6](D3) at(6.9,-0.6){$\mathbf{j}$};
            \node at (6, -2) {\scriptsize{POT of $j \notin e'_i$}};
            \draw[->,  line width=0.6pt] (s3) --(A3);
            \draw[->,  line width=0.6pt] (s3) --(B3);
            \draw[->,  line width=0.6pt] (s3) --(C3);
            \draw[->,  line width=0.6pt] (s3) --(D3);
            
            \end{tikzpicture}
        \caption{POT changes when $j \notin e'_i$ - Case 4.}
        \label{figure_diffusion_ic_4}
\end{figure}

\end{enumerate}

In summary, when all other buyers truthfully propagate the auction information but buyer $i$ does not, the resulting POT structure may change in the following ways: 1) $m$ remains unchanged, while the size $k_x$ of one certain POST subtree $T_x$ decreases.  2) $m$ remains unchanged, and the sizes of all POST subtrees remain unchanged.  3) $m$ decreases, while the size $k_x$ of one certain POST subtree increases.  

We now analyze how the solution of $r_{opt}$ changes under each of these cases relative to the scenario where all buyers truthfully propagate auction info. That is, assuming all other buyers propagate truthfully, we examine how a single buyer reducing propagation affects the solution of $r_{opt}$. From Proposition \ref{theorem_crit_mono}, we know that the subtree critical value function $\phi(v_x; k_x)$ is monotonically increasing with respect to the subtree value $v_x$ and monotonically decreasing with respect to the subtree size $k_x$. By straightforward derivations, it follows that the left-hand side of the eq.\eqref{eq_r_opt_form} is also monotonically increasing with respect to the subtree value $v_x$ for any $T_x$ and monotonically decreasing with respect to its subtree size $k_x$. Define the function $\kappa(r; k_x) = k_x \cdot \phi(r; k_x)$.

\begin{enumerate}[leftmargin=*, labelsep=0.5em, label=\arabic*)]

\item  $m$ remains unchanged, and the size of one certain $T_x$ decreases: Assume that under truthful propagation, the optimal reserve price solution is $r_{opt}$, and the size of $T_x$ is $k_x$. When buyer $i$ does not propagate the auction info to buyer $j$, the size of the corresponding POST subtree in the new POT becomes $k'_x$. From the definition of $\kappa(r; k_x)$, $\sum_{x=1}^m \kappa(r_{opt}; k'_x) \geq \sum_{x=1}^m \kappa(r_{opt}; k_x) = 0$ holds.
Since the function $\kappa(r; k_x)$ is monotonically increasing with respect to $r$, when a buyer reduces propagation, causing $m$ to remain unchanged but the size of one $T_x$ to decrease, the new optimal reserve price solution becomes smaller compared to the original solution under truthful propagation.

\item $m$ remains Unchanged, and all POST subtree sizes remain unchanged: The optimal reserve price solution remains unchanged.

\item $m$ decreases, and the size of one certain $T_x$ increases: Assume that under truthful propagation, the optimal reserve price solution is $r_1$, the number of the seller's direct neighbors in the POT is $m_1$, the size of $T_j$ containing buyer $j$ is $k_j$, and the size of $T_l$ which contains the buyers along a path from the seller $s$ to buyer $j$ excluding buyer $i$ is $k_l$.  When buyer $i$ does not propagate the auction info to buyer $j$, the new optimal reserve price solution becomes $r_2$, and the number of the seller's direct neighbors in the POT changes to $m_2 = m_1 - 1$.  At this point, buyer $j$ is now part of $T_l$, whose size increases to $k_l + k_j$, while the sizes of all other POST subtrees remain unchanged. From eq.\eqref{eq_r_opt_form}:
\begin{displaymath}
    \begin{cases}
    \sum_{x=1,x \neq j,l}^{m_1} \kappa(r_1;k_x)+\kappa(r_1;k_j) + \kappa(r_1;k_l) = 0\\
    \sum_{x=1,x \neq l}^{m_2} \kappa(r_2;k_x)+\kappa(r_2;k_j+k_l) = 0
\end{cases}
    \end{displaymath}
Assuming that $r_2 \leq r_1$ holds:
{\small \begin{displaymath}
        \begin{aligned}
             &\sum_{x=1,x \neq l}^{m_2} \kappa(r_2;k_x)+\kappa(r_2;k_j+k_l) \\
            <& \sum_{x=1,x \neq l}^{m_2} \kappa(r_1;k_x)+\kappa(r_1;k_j+k_l) \\
             =&\kappa(r_1;k_j+k_l)-\kappa(r_1;k_j)-\kappa(r_1;k_l) \\
             =& \frac{\left[1-F^{k_j}(r_1)\right]\left[F^{k_l}(r_1)-1\right]}{f(r_1)F^{k_j+k_l-1}(r_1)} < 0
        \end{aligned}
    \end{displaymath}}
Thus the assumption does not hold, implying that $r_2 > r_1$. That is, in this case, when buyer $i$ does not propagate the auction information to buyer $j$, the optimal reserve price solution increases.
\end{enumerate}
Since the optimal reserve price solution is directly influenced by all buyers' propagation strategies, some buyers may strategically withhold propagation to decrease the optimal reserve price solution, thereby lowering the threshold to win the auction or reducing their payment. Similarly, some buyers may strategically withhold propagation to increase their expected propagation rewards by decreasing the reserve price.  However, for buyers whose non-truthful propagation results in an increase in the optimal reserve price solution, their utility will always be no higher than under truthful propagation.  Therefore, using the optimal reserve price cannot guarantee that all buyers truthfully propagate the auction information to all their neighbors. In other words, using $r_{opt}$ fails to ensure the DSIC of the \texttt{APX-R} mechanism.

\begin{proposition}\label{proposition_other_opt_notic}
    When buyers' valuations follow a regular distribution, the optimal reserve price that maximizes the expected revenue of the \texttt{APX-R} mechanism fails to satisfy DSIC.
\end{proposition}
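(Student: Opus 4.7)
The plan is to exhibit a counterexample analogous to Theorem \ref{theorem_r_opt_hard_and_no_IC}, but valid for any regular distribution $F$, in which some buyer has a strictly profitable propagation deviation. The preceding case analysis has already established that in Case 1 (where buyer $i$ withholds propagation to a unique downstream neighbor $j$ whose only market access runs through $i$), the subtree $T_x$ containing $i$ shrinks by one and the unique optimal reserve price strictly decreases: $r'_{opt} < r_{opt}$. I would exploit this strict drop directly to construct the deviation.

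Concretely, fix any regular $F$ with strictly positive density on $[0,\bar v]$ and build a network in which $i$ is a direct neighbor of $s$ whose sole downstream neighbor is $j$, together with at least one other direct neighbor of $s$ carrying some bidders. Assign the valuations so that $v_i$ is the highest bid in the network, while $v_j$ and every other bidder's valuation lie strictly below $r'_{opt}$. Then under truthful full propagation $v^*_{-d_i} < r'_{opt} < r_{opt} < v_i$, so by Algorithm \ref{alg_Mechanism_APX} $i$ wins with payment $\max\{v^*_{-d_i}, r_{opt}\} = r_{opt}$ and utility $v_i - r_{opt}$. Under the deviation in which $i$ drops $j$ from $e'_i$, Case 1 applies: $i$ is still the unique winner (since $v_i > r'_{opt} > v^*_{-d_i}$), but now pays $\max\{v^*_{-d_i}, r'_{opt}\} = r'_{opt} < r_{opt}$, yielding strictly higher utility $v_i - r'_{opt}$. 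This contradicts DSIC at fixed truthful bidding.

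The main obstacle is verifying feasibility of the prescribed valuation profile. The strict inequality $r'_{opt} < r_{opt}$ is already supplied by the preceding Case 1 argument through the strict monotonicity of $\kappa(r; k_x) = k_x \phi(r; k_x)$ in $k_x$, so the gap between the two reserves depends only on the subtree-size profile and not on the valuations. Once that gap is pinned down, the joint event $\{v_i > r_{opt}\} \cap \{v_j < r'_{opt}\} \cap \bigcap_{\ell \notin \{i,j\}} \{v_{\ell} < r'_{opt}\}$ has strictly positive probability because $F$ admits a positive density on $[0,\bar v]$, so valid valuation profiles exist. A concluding sanity check via Algorithm \ref{alg_Mechanism_APX} confirms that $i$ remains the unique winner under both truthful propagation and the deviation, so the comparison between her two payments is isolated exactly to the $r_{opt} \to r'_{opt}$ drop, completing the contradiction with DSIC.
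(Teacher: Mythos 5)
Your proposal is correct and follows essentially the same route as the paper: the paper's own argument for this proposition is precisely the preceding case analysis (in particular Case~1, where withholding propagation to a neighbor whose only access runs through the deviator shrinks one subtree and strictly lowers $r_{opt}$ via the monotonicity of $\kappa(r;k_x)$), followed by the observation that a buyer can exploit the lowered reserve to reduce her payment. Your version merely makes the paper's concluding step explicit by pinning down a concrete positive-probability valuation profile in which the deviator wins in both scenarios and pays exactly the reserve, so the utility gain is exactly $r_{opt}-r'_{opt}>0$; this rests on the same strict-decrease claim from Case~1 that the paper itself invokes.
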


According to proposition \ref{proposition_other_opt_notic}, although $r_{opt}$ maximizes the expected revenue of the \texttt{APX-R} mechanism, its failure to satisfy DSIC in the propagation dimension may prevent certain high-valuation buyers from participating in the auction, thereby affecting overall revenue. Thus, when designing the reserve price, it is possible to relax the strict requirement of revenue optimality and instead seek a simple reserve price function that both approximates $r_{opt}$ and ensures incentive compatibility. Notably, the search space for an approximate optimal reserve price is extremely large, and the goal is to find a simple yet effective reserve price function $\gamma$ that can significantly improve the seller’s expected revenue. According to theorem \ref{proposition_APX_DSIC}, if the reserve price function is independent of buyers' strategies, then the \texttt{APX-R} mechanism can be ensured to satisfy DSIC. Therefore, special attention should be given to simple reserve price functions that do not depend on buyers' bidding strategies and propagation strategies. However, within this space of such reserve price functions, the key challenge is optimizing the seller's expected revenue under this constraint. That is, selecting a reserve price function that optimizes revenue while ensuring DSIC. By constructing a reasonable approximate optimal reserve price, it is possible to maintain incentive compatibility while effectively optimizing the seller’s revenue.

\subsection{Near-Optimal Reserve Price}

\subsubsection{Secure Revenue Constraints}
Similarly, when selecting the reserve price function in the \texttt{APX-R} mechanism, the primary requirement is to ensure that the reserve price function $\gamma$ does not result in lower revenue for the seller compared to the case where no reserve price is set. That is, $\operatorname{APX}(\mathbf{t'};\gamma) \geq \operatorname{APX}(\mathbf{t'};0)$ must always hold. By substituting the expected revenue formula of the \texttt{APX-R} mechanism from eq.\eqref{eq_revenue_graph}, we obtain  $\sum_{x=1}^m [\operatorname{APX}_{T_x}(\mathbf{t'};\gamma)-\operatorname{APX}_{T_x}(\mathbf{t'};0)] \geq 0$. This constraint is referred to as the feasibility constraint for baseline revenue of the mechanism’s reserve price function $\gamma$, ensuring that the expected revenue when using $\gamma$ is not lower than when no reserve price is set. If each term in the summation is non-negative, i.e.,  $\forall T_x, \operatorname{APX}_{T_x}(\mathbf{t'};\gamma)-\operatorname{APX}_{T_x}(\mathbf{t'};0) \geq 0$, then the secure revenue guarantee is satisfied.  Thus we obtain a sufficient condition for the reserve price function $\gamma$ to satisfy the secure revenue constraint.

\begin{proposition}\label{proposition_secure_revenue_regular}
    If the reserve price function $\gamma$ satisfies the following inequality for any $T_x$:
    \begin{equation}\label{eq_secure_revenue_regular}
        \int_0^{\gamma} \left[ \frac{k_x}{n}F^n(v_x)-F^n(v_x)+F^{n-k_x}(v_x)\right]dv_x \geq \frac{k_x}{n}\gamma F^n(\gamma).
    \end{equation}
    then the reserve price $\gamma$ necessarily satisfies the secure revenue constraint, ensuring that the expected revenue of the \texttt{APX-R} mechanism with reserve price $\gamma$ is not lower than when no reserve price is applied.
\end{proposition}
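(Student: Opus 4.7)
The plan is essentially a direct computation starting from the closed-form sub-tree revenue in eq.\eqref{eq_APX_Tx_r}, which gives
$$\operatorname{APX}_{T_x}(\mathbf{t'};r)=\tfrac{k_x}{n}\bigl(\bar{v}-rF^n(r)\bigr)-\int_r^{\bar{v}}\!\Bigl[\tfrac{k_x}{n}F^n(v_x)-F^n(v_x)+F^{n-k_x}(v_x)\Bigr]dv_x.$$
First, I would specialize this at $r=\gamma$ and at $r=0$, and form the difference $\operatorname{APX}_{T_x}(\mathbf{t'};\gamma)-\operatorname{APX}_{T_x}(\mathbf{t'};0)$. The additive $\tfrac{k_x}{n}\bar v$ terms cancel, and the two integrals differ only by the interval $[0,\gamma]$ since $\int_\gamma^{\bar v}-\int_0^{\bar v}=-\int_0^\gamma$. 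Collecting everything, one finds
$$\operatorname{APX}_{T_x}(\mathbf{t'};\gamma)-\operatorname{APX}_{T_x}(\mathbf{t'};0)=\int_0^{\gamma}\Bigl[\tfrac{k_x}{n}F^n(v_x)-F^n(v_x)+F^{n-k_x}(v_x)\Bigr]dv_x-\tfrac{k_x}{n}\gamma F^n(\gamma),$$
which is exactly the left-hand side minus the right-hand side of the hypothesized inequality \eqref{eq_secure_revenue_regular}.

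Second, I would invoke the assumed inequality to conclude $\operatorname{APX}_{T_x}(\mathbf{t'};\gamma)\geq \operatorname{APX}_{T_x}(\mathbf{t'};0)$ for every sub-tree $T_x$. Summing over $x=1,\dots,m$ and using eq.\eqref{eq_APX_r} (i.e.\ that the total expected revenue is the sum of sub-tree revenues), one obtains
$$\operatorname{APX}(\mathbf{t'};\gamma)-\operatorname{APX}(\mathbf{t'};0)=\sum_{x=1}^m\bigl[\operatorname{APX}_{T_x}(\mathbf{t'};\gamma)-\operatorname{APX}_{T_x}(\mathbf{t'};0)\bigr]\geq 0,$$
which is precisely the secure revenue constraint \eqref{eq_rc_better_than_0}. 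Because the argument proceeds sub-tree by sub-tree, the hypothesis is indeed a \emph{sufficient} (not necessary) condition: some sub-trees could fall short provided the total remains non-negative, but the proposition only asserts the sufficient side.

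There is no genuine obstacle here — the statement is a bookkeeping consequence of the sub-tree revenue formula and of linearity of the integral. The only mild subtlety worth spelling out is that the derivation of eq.\eqref{eq_APX_Tx_r} in the preceding section did not assume uniformity beyond regularity (the manipulations are simple integration by parts and order-exchange in a double integral that work for any continuous distribution $F$ with density $f$ on $[0,\bar v]$), so the same closed form is available under the regular-distribution assumption adopted in this section. With that justified, the proposition follows by the one-line algebra outlined above; no monotonicity or hazard-rate property of $F$ is required for this particular sufficient condition.
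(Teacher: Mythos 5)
Your proposal is correct and follows essentially the same route as the paper: the paper also decomposes the secure-revenue constraint sub-tree by sub-tree and observes that $\operatorname{APX}_{T_x}(\mathbf{t'};\gamma)-\operatorname{APX}_{T_x}(\mathbf{t'};0)$, computed from eq.\eqref{eq_APX_Tx_r}, equals exactly the left-hand side minus the right-hand side of \eqref{eq_secure_revenue_regular}, so non-negativity of each term suffices. Your added remark that eq.\eqref{eq_APX_Tx_r} is valid for any continuous $F$ (not just uniform) is accurate and matches the paper's implicit use of that formula in the regular-distribution section.
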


The inequality \eqref{eq_secure_revenue_regular} indicates that different POST subtrees impose different constraints to achieve secure revenue, which reflects a form of price discrimination at the POST subtree level \citep{celis2014buy}, applying different constraints to different buyer groups. Since the entire POT tree consists of $m$ POST subtrees, satisfying the secure revenue constraint requires fulfilling $m$ inequality constraints. Thus, in designing an approximate optimal reserve price function, it is crucial to ensure that all POST subtrees satisfy the secure revenue constraint, guaranteeing that the expected revenue of the \texttt{APX-R} mechanism with the approximate reserve price is not lower than when no reserve price is used.

\subsubsection{Near-Optimal Reserve Price}

Since the optimal reserve price that maximizes the expected revenue of the \texttt{APX-R} mechanism does not satisfy DSIC. According to proposition \ref{proposition_APX_DSIC}, using a reserve price function independent of both buyers' bidding strategies and propagation strategies can ensure the DSIC property of the mechanism. Therefore, for various regular distributions of valuations, we aim to identify a reserve price function that ensures incentive compatibility while optimizing the seller’s expected revenue as much as possible. 

First, we discuss the tree-wise optimal reserve price, which maximizes the expected revenue of each POST subtree. Let the optimal reserve price for subtree $T_x$ be denoted as $r_x^*$. According to eq.\eqref{eq_optimal_r_Tx}, given a buyer valuation distribution, the solution $r_x^*$ depends only on the subtree size $k_x$. Now we investigate how the subtree-wise optimal reserve price $r_x^*$ is constrained by subtree size $k_x$. Since this section is based on the assumption that buyers' valuation distributions satisfy regularity, all subsequent analyses adhere to this assumption. By the definition \ref{definition_subtree_critical_value}, the subtree-wise optimal reserve price $r_x^*$ is the solution to the equation where the subtree critical value equals zero. According to Proposition \ref{theorem_crit_mono}, when the buyers' valuation distribution $F$ satisfies regularity, the subtree critical value function $\phi(v_x; k_x)$ is monotonically increasing with respect to the subtree value $v_x$ and monotonically decreasing with respect to the subtree size $k_x$.

Figure \ref{fig_monotonicity} illustrates the variation of subtree critical values $\phi(v_x; k_1)$ and $\phi(v_x; k_2)$ as functions of subtree value $v_x$ when subtree sizes satisfy $k_1 < k_2$. From the figure, it is evident that due to the monotonicity of the subtree critical value function concerning both subtree value and subtree size, the solution to the subtree-wise optimal reserve price $r_x^*$ (i.e., the solution to the equation $\phi(r; k_x) = r - (1 - H_x(r)) / h_x(r) = 0$) is monotonically increasing with respect to subtree size $k_x$. In other words, POST subtrees with more buyers tend to have higher subtree-wise optimal reserve price solutions $r_x^*$.
\begin{figure}[h]
\centering
\includegraphics[scale=0.6]{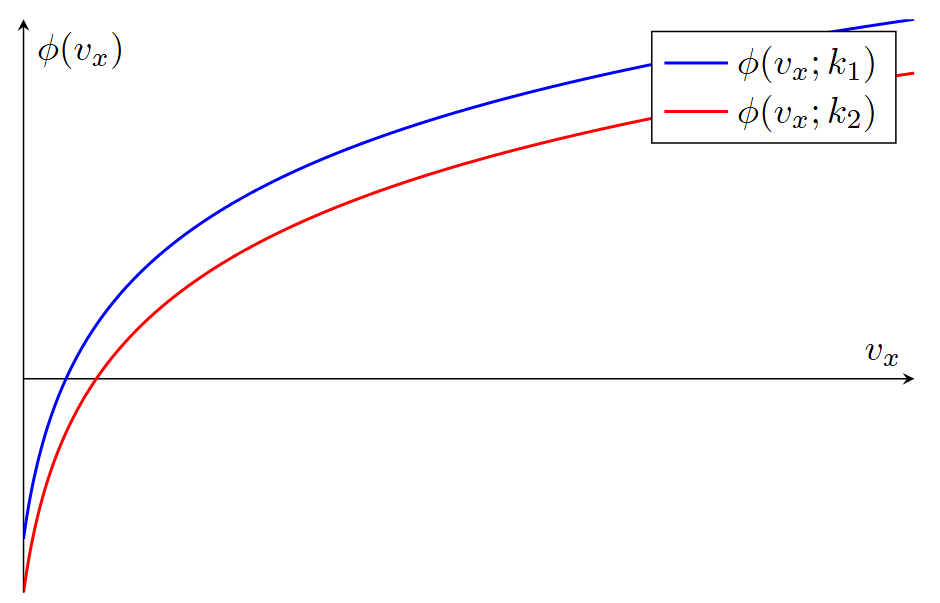}
\caption{Variation of subtree critical values $\phi(v_x; k_1)$ and $\phi(v_x; k_2)$ when $k_1 < k_2$.}
\label{fig_monotonicity}
\end{figure}

Figure \ref{fig_tx_rev_monotonicity} visually illustrates how the expected revenue of POST subtrees of different sizes varies with the reserve price. This figure is based on the assumption that buyers' valuations follow an exponential distribution $Exp(0.08)$. The triangle markers in the figure represent the subtree-wise optimal reserve price corresponding to each subtree size $k_x$. It is evident that as the subtree size $k_x$ increases, the subtree-wise optimal reserve price also increases.
\begin{figure}[h]
\centering
\includegraphics[scale=0.43]{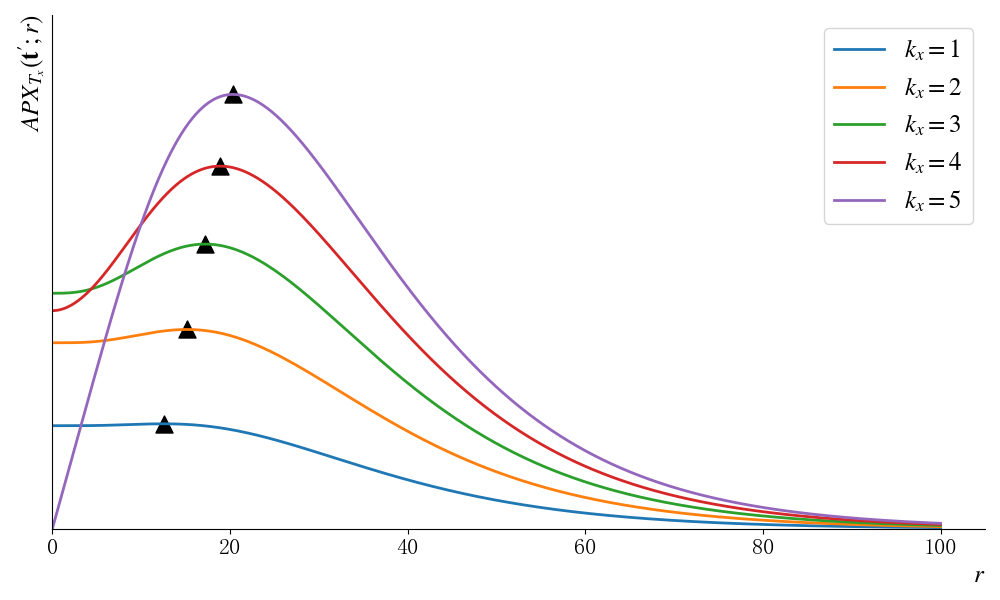}
\caption{Expected revenue of POST subtrees of different sizes as a function of the reserve price.}
\label{fig_tx_rev_monotonicity}
\end{figure}

\begin{lemma}\label{lemma_subtree_reserve_mono}
    The subtree-wise optimal reserve price $r_x^*$ of the \texttt{APX-R} mechanism, which maximizes the expected revenue $\operatorname{APX}_{T_x}(\mathbf{t}';r)$ of subtree $T_x$, is monotonically increasing with subtree size $k_x$.
\end{lemma}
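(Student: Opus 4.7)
The plan is to invoke Proposition \ref{theorem_crit_mono} and reduce the claim to a direct comparative statics argument on the first-order condition. By the definition of $r_x^*$ as the maximizer of $\operatorname{APX}_{T_x}(\mathbf{t'};r)$, I would first observe that $r_x^*$ is characterized by the vanishing of the subtree critical value, i.e.\ $\phi(r_x^*;k_x)=0$, which is precisely the equation derived earlier in the section on subtree-wise optimal reserve prices.

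Next, I would use the two monotonicity facts already established in Proposition \ref{theorem_crit_mono}: $\phi(v_x;k_x)$ is strictly increasing in $v_x$ and monotonically decreasing in $k_x$. Fix two subtree sizes $k_x<k_x'$ with associated optimal reserve prices $r_x^*$ and $r_{x'}^*$ satisfying $\phi(r_x^*;k_x)=0$ and $\phi(r_{x'}^*;k_x')=0$. Evaluating at $r_x^*$ and using monotonicity in the size argument gives $\phi(r_x^*;k_x')\le \phi(r_x^*;k_x)=0$. Since $\phi(\cdot;k_x')$ is strictly increasing in its first argument and vanishes at $r_{x'}^*$, the inequality $\phi(r_x^*;k_x')\le 0=\phi(r_{x'}^*;k_x')$ forces $r_x^*\le r_{x'}^*$, which is the desired monotonicity.

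To upgrade this to strict monotonicity (when the underlying distribution $F$ yields strict inequalities in Proposition \ref{theorem_crit_mono}), I would either appeal directly to strict monotonicity of $\phi$ in $k_x$, or, alternatively, apply the implicit function theorem to the identity $\phi(r_x^*(k);k)=0$, obtaining
\begin{equation*}
\frac{d r_x^*}{d k}=-\frac{\partial\phi/\partial k}{\partial\phi/\partial v_x}\Big|_{v_x=r_x^*(k)},
\end{equation*}
where the denominator is positive and the numerator is non-positive by Proposition \ref{theorem_crit_mono}, giving $d r_x^*/d k\ge 0$.

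I do not anticipate a serious obstacle here since the heavy lifting was already done in proving Proposition \ref{theorem_crit_mono}. The only subtlety is justifying that the first-order condition $\phi(r_x^*;k_x)=0$ actually has a unique solution and characterizes the global maximum on $[0,\bar v]$; this follows from Theorem \ref{theorem_only_ropt} in the special case of a single subtree (or can be re-derived quickly using that $\phi$ crosses zero exactly once due to strict monotonicity in $v_x$, combined with the boundary behavior $\phi(0;k_x)<0$ and $\phi(\bar v;k_x)=\bar v>0$).
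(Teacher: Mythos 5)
Your proposal is correct and follows essentially the same route as the paper: the paper establishes the lemma by appealing to Proposition \ref{theorem_crit_mono} (that $\phi(v_x;k_x)$ is increasing in $v_x$ and decreasing in $k_x$) and observing, with the aid of Figure \ref{fig_monotonicity}, that the zero of $\phi(\cdot;k_x)$ must therefore shift rightward as $k_x$ grows. Your write-up simply makes that comparative-statics step explicit (including the single-crossing/boundary justification that the first-order condition pins down the global maximizer), which the paper leaves informal.
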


\begin{theorem}\label{theorem_gamma_extend}
    Let $\underline{k}$ represent the smallest size of POST subtrees among all possible POT trees generated from historical auction instances. If buyers' valuations follow a regular distribution, the \texttt{APX-R} mechanism using a reserve price function $\gamma$ such that
    \begin{equation}\label{eq_gamma_extend}
        \gamma = \frac{1-F^{\underline{k}}(\gamma)}{\underline{k}f(\gamma)F^{\underline{k}-1}(\gamma)} 
    \end{equation}
    is secure and approximates the global optimal reserve price $r_{opt}$ from the left. 
\end{theorem}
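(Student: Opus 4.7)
The plan is to mirror the structure of the uniform-distribution argument in Theorem~\ref{theorem_reserve_price}, but with Myerson-style monotonicity in place of closed-form calculations. The key observation is that the defining equation of $\gamma$ is exactly $\phi(\gamma;\underline{k})=0$, so $\gamma$ is nothing but the subtree-wise optimal reserve price $r^*_{\underline{k}}$ for a POST of size $\underline{k}$. Everything else will be read off the monotonicity of the subtree critical value established in Proposition~\ref{theorem_crit_mono} and of the subtree-wise optimal reserve price in Lemma~\ref{lemma_subtree_reserve_mono}.

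First I would prove the security claim. By Lemma~\ref{lemma_subtree_reserve_mono}, $r^*_x$ is nondecreasing in $k_x$; since $\underline{k}\le k_x$ for every POST $T_x$, this gives $\gamma=r^*_{\underline{k}}\le r^*_x$ for all $x$. Next I would check that $\operatorname{APX}_{T_x}(\mathbf{t}';r)$ is monotonically increasing on $[0,r^*_x]$: differentiating eq.\eqref{eq_APX_Tx_r} in $r$ yields, up to a positive factor, the quantity $k_x\cdot\phi(r;k_x)$, which by Proposition~\ref{theorem_crit_mono} is increasing in $r$ and negative for $r<r^*_x$ and positive for $r>r^*_x$. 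Hence $\operatorname{APX}_{T_x}(\mathbf{t}';\cdot)$ is unimodal with peak at $r^*_x$, so $\operatorname{APX}_{T_x}(\mathbf{t}';\gamma)\ge \operatorname{APX}_{T_x}(\mathbf{t}';0)$ for every $x$. Summing gives $\operatorname{APX}(\mathbf{t}';\gamma)\ge \operatorname{APX}(\mathbf{t}';0)$, which is precisely the sufficient condition in Proposition~\ref{proposition_secure_revenue_regular}.

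Second I would show $\gamma\le r_{opt}$. Consider the function $\beta(r)=\sum_{x=1}^{m}k_x\,\phi(r;k_x)$ introduced in the proof of Theorem~\ref{theorem_only_ropt}. By that theorem $\beta$ is strictly increasing with unique root $r_{opt}$. Evaluating at $r=\gamma$, Proposition~\ref{theorem_crit_mono} (monotone decreasing in $k_x$) together with $\underline{k}\le k_x$ gives
\begin{equation*}
\phi(\gamma;k_x)\le \phi(\gamma;\underline{k})=0\qquad\text{for every }x,
\end{equation*}
so $\beta(\gamma)\le 0=\beta(r_{opt})$. Monotonicity of $\beta$ forces $\gamma\le r_{opt}$, which is exactly the statement that $\gamma$ approaches the global optimal reserve price from the left.

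The main obstacle is the unimodality step in paragraph two: in the uniform-distribution case this was a one-line calculus exercise, but for a general regular $F$ the derivative of $\operatorname{APX}_{T_x}(\mathbf{t}';r)$ has no closed form, and I must instead argue via the sign of $\phi(r;k_x)$, which itself requires regularity of the subtree value distribution $H_x=F^{k_x}$. This is the content of the preceding lemma establishing that $\phi(\cdot;k_x)$ is strictly increasing in $r$, and once that is invoked the rest of the argument is a direct assembly of Proposition~\ref{proposition_secure_revenue_regular}, Proposition~\ref{theorem_crit_mono}, Lemma~\ref{lemma_subtree_reserve_mono}, and Theorem~\ref{theorem_only_ropt}, with no further technical work beyond sign-chasing.
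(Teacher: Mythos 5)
Your proposal is correct and follows essentially the same route as the paper's proof: identify $\gamma$ as the subtree-wise optimal reserve price for a subtree of size $\underline{k}$, invoke Lemma~\ref{lemma_subtree_reserve_mono} to obtain $\gamma\le r_x^*$ for every $T_x$, and use unimodality of each $\operatorname{APX}_{T_x}(\mathbf{t}';\cdot)$ to conclude both security and $\gamma\le r_{opt}$; your $\beta$-based argument for $\gamma\le r_{opt}$ is a slightly sharper version of the paper's appeal to the uniqueness of the maximizer in Theorem~\ref{theorem_only_ropt}. One small correction: differentiating eq.~\eqref{eq_APX_Tx_r} gives $-F^{n-k_x}(r)\,h_x(r)\,\phi(r;k_x)$, i.e.\ the derivative is a \emph{negative} multiple of $\phi(r;k_x)$ rather than a positive multiple of $k_x\phi(r;k_x)$ as you wrote --- which is precisely why the revenue rises while $\phi<0$ and falls while $\phi>0$, so your stated conclusion (peak at $r_x^*$) is the right one despite the sign slip.
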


\begin{proof}
    Since for any $T_x$, it holds that $\underline{k} \leq k_x$, the reserve price function $\gamma$ essentially corresponds to the optimal reserve price of a subtree of size $\underline{k}$. By Lemma \ref{lemma_subtree_reserve_mono}, we have $\gamma \leq r_x^*$ for all subtrees $T_x$.
    Additionally, since the expected revenue function $\operatorname{APX}_{T_x}(\mathbf{t}';r)$ is monotonically increasing in $r$ over $(0, r_x^*)$ and monotonically decreasing over $(r_x^*, \bar{v})$, the total expected revenue $\operatorname{APX}(\mathbf{t}';r)$ is monotonically increasing in $r$ over $(0, \gamma)$. 
    Thus, it necessarily holds that $\operatorname{APX}(\mathbf{t}';\gamma) > \operatorname{APX}(\mathbf{t}';0)$, meaning that using the reserve price function $\gamma$ in \texttt{APX-R} yields a higher expected revenue compared to not setting any reserve price. Furthermore, since $r_{opt}$ maximizes the total expected revenue $\operatorname{APX}(\mathbf{t}';r)$, we obtain $\gamma \leq r_{opt}$, which confirms that $\gamma$ approximates $r_{opt}$ from the left.
\end{proof}

The definition of $\underline{k}$ remains consistent with the previous section, where it is an a prior estimation of the network structure based on historical auction data, and is independent of the current auction process. Therefore, it is not influenced by any buyer's propagation strategy. Similarly, if no prior auction has been conducted on the social network and the seller lacks prior knowledge about $\underline{k}$, they can adopt a conservative estimate for $\underline{k}$ (e.g., setting $\underline{k} = 1$ and using the reserve price from Myerson's optimal auction) to initialize the \texttt{APX-R} mechanism. As the mechanism ensures DSIC, multiple auctions will allow buyers to reveal a more accurate estimate of $\underline{k}$, ultimately improving revenue.

Regarding the individual rationality (IR) and weak budget balance (WBB) properties of the mechanism, according to propositions \ref{proposition_WBB} and \ref{proposition_IR}, this approximate optimal reserve price function also preserves the IR and WBB properties of the mechanism.

\section{Simulations}
In this section, we explore the performance of the \texttt{APX-R} mechanism under different assumptions about buyers' valuation distributions. We assume that all buyers' valuations are independently and identically distributed drawn from the uniform, normal, and exponential distributions, and conduct a series of experiments based on these assumptions.  

For consistency, we set the upper bound of all valuation distributions to $\bar{v} = 100$, meaning that valuations are defined over the interval $[0,100]$. In the experiments, each auction mechanism is simulated one million times, with all buyers’ valuations independently drawn from the same distribution. To ensure that the vast majority of valuations fall within $[0,100]$, the following distribution settings are used:

\begin{enumerate}[leftmargin=*, labelsep=0.5em, label=\arabic*)]
    \item Uniform Distribution: Buyers' valuations are i.i.d. according to the Uniform distribution $U[0,100]$. Under this setting, the expected value of buyers' valuations is 50, and the variance is 833.33.
    \item Normal Distribution: To ensure that most valuation samples remain within $[0,100]$, buyers' valuations are i.i.d. following a Normal distribution $N(50,16.67^2)$, where the mean is $\mu = 50$ and the standard deviation is $\sigma = 16.67$. This ensures that $F(100) = 0.9987 \approx 1$, meaning nearly all values are within $[0,100]$. Under this setting, the expected value of buyers' valuations is 50, and the variance is 277.89.
    \item Exponential Distribution: To ensure that most valuation samples fall within $[0,100]$, the rate parameter of the Exponential distribution is set to $\lambda = 0.08$, ensuring that $F(100) = 0.9997 \approx 1$. Thus, buyers' valuations are i.i.d. following $Exp(0.08)$. Under this setting, the expected value of buyers' valuations is 12.5, and the variance is 156.25.
\end{enumerate}

\subsection{\texttt{APX-R} Under Different Estimates of $\underline{k}$}

To analyze the revenue performance of the \texttt{APX-R} mechanism under different prior estimates of the network structure parameter $\underline{k}$, we test in simulation experiments using the POT tree structure shown in figure \ref{fig_different_gamma}. We consider the following four cases:  (1) Without reserve price, where the mechanism degenerates into the \texttt{IDM}mechanism \citep{li2017mechanism, li2022diffusion} (i.e., $\gamma=0$);  (2) Using the reserve price from Myerson’s optimal auction, which corresponds to $\underline{k} = 1$;  (3) A conservatively estimated reserve price, obtained by setting $\underline{k} = 2$, referred to as the cautious reserve;  (4) A reserve price based on an accurate estimate of $\underline{k}$, obtained by setting $\underline{k} = 3$, referred to as the accurate reserve. 

\tikzset{global scale/.style={
    scale=#1,
    every node/.append style={scale=#1}
  }}
  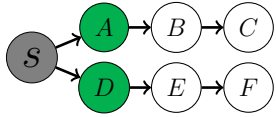
\begin{figure}[ht]
 \centering
        \centering
        \begin{tikzpicture}[global scale=0.8, box/.style={circle, draw}]
            \node[box,fill=gray,very thin](s) at(0,-0.5){{\LARGE$s$}};
            \node[box,fill={rgb:red,1;green,180;blue,80}](A) at(1.2,-1){$D$};
            \node[box,fill=white](H) at(2.4,-1){$E$};
            \node[box,fill=white](I) at(3.6,-1){$F$};
            
            \node[box,fill={rgb:red,1;green,180;blue,80}](B) at(1.2,0){$A$};
            \node[box,fill=white](C) at(2.4,0){$B$};
            \node[box,fill=white](D) at(3.6,0){$C$};
            
            \draw[->,  line width=1pt] (s) --(A);
            \draw[->,  line width=1pt] (A) --(H);
            \draw[->,  line width=1pt] (H) --(I);
            \draw[->,  line width=1pt]  (s) --(B);
            \draw[->,  line width=1pt]  (B) --(C);
            \draw[->,  line width=1pt]  (C) --(D);
        \end{tikzpicture}
        \caption{POT tree structure used to analyze \texttt{APX-R} under different reserve price settings.}
        \label{fig_different_gamma}
\end{figure}

Figure \ref{fig_different_gamma_rev} presents the revenue distributions of the \texttt{APX-R} mechanism under these four reserve price settings. Each subfigure illustrates the frequency of different revenue values across simulation runs. {The four gradient-colored histograms in the figure represent the revenue distributions of the \texttt{APX-R} mechanism under different estimations of $\underline{k}$. Due to the reserve price setting, there may be instances where the revenue is zero, this occurs when all buyers' valuations fall below the reserve price, resulting in an auction failure. Consequently, for certain valuation distributions, setting different reserve prices leads to histograms appearing at zero revenue, with varying heights depending on the frequency of auction failures caused by different reserve prices.} From the figure, it is evident that whether valuations follow a uniform or normal, exponential distribution, the revenue distribution shifts to the right as the reserve price estimation improves from no reserve price ($\gamma=0$), to the conservative estimate ($\underline{k}=2$), and finally to the accurate estimate ($\underline{k}=3$).  This observation indicates that a more precise estimate of $\underline{k}$ leads to significantly higher revenue, demonstrating that optimally setting the reserve price can improve expected revenue. More specifically, as the accuracy of the reserve price estimation increases, the expected revenue also increases, with the revenue distribution concentrating more on higher values.
\begin{figure}[ht]
    \centering
    \subfigure[$U{{[}}0,100{{]}}$]{
        \includegraphics[width=0.24\textwidth]{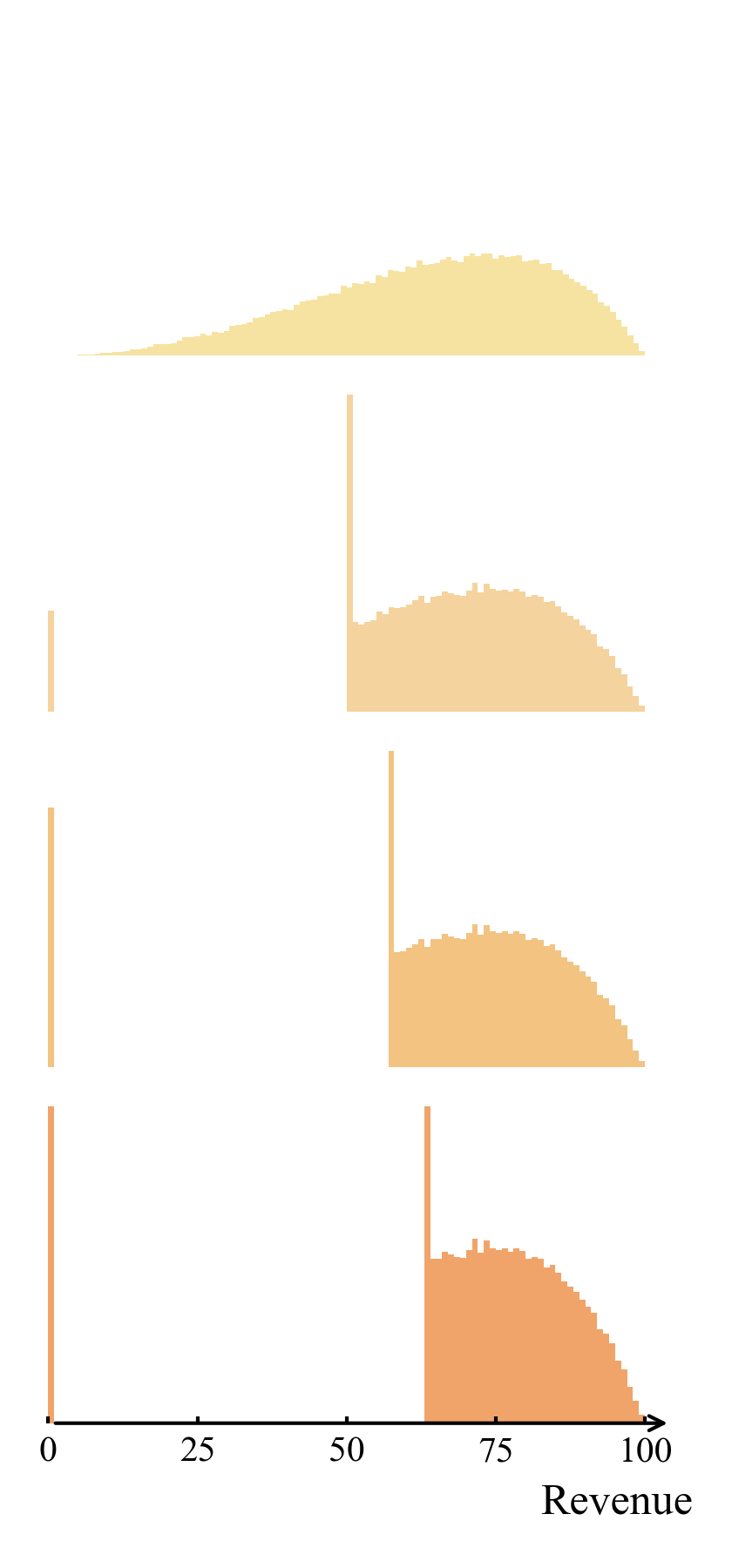}
        \label{fig_uniform_distribution2}
    }
    \subfigure[$N(50,16.67^2)$]{
        \includegraphics[width=0.24\textwidth]{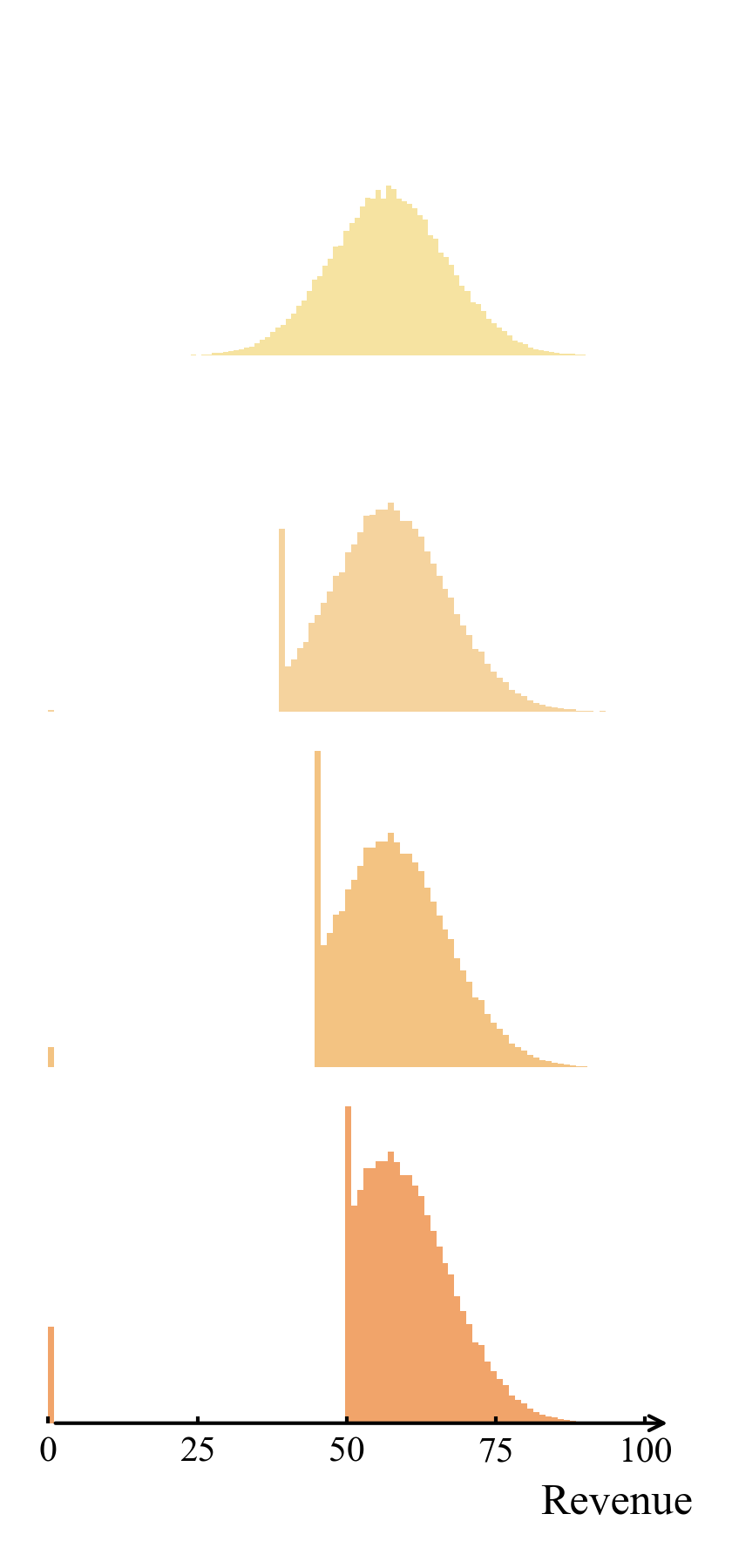}
        \label{fig_norm_distribution2}
    }
    \subfigure[$Exp(0.08)$]{
        \includegraphics[width=0.24\textwidth]{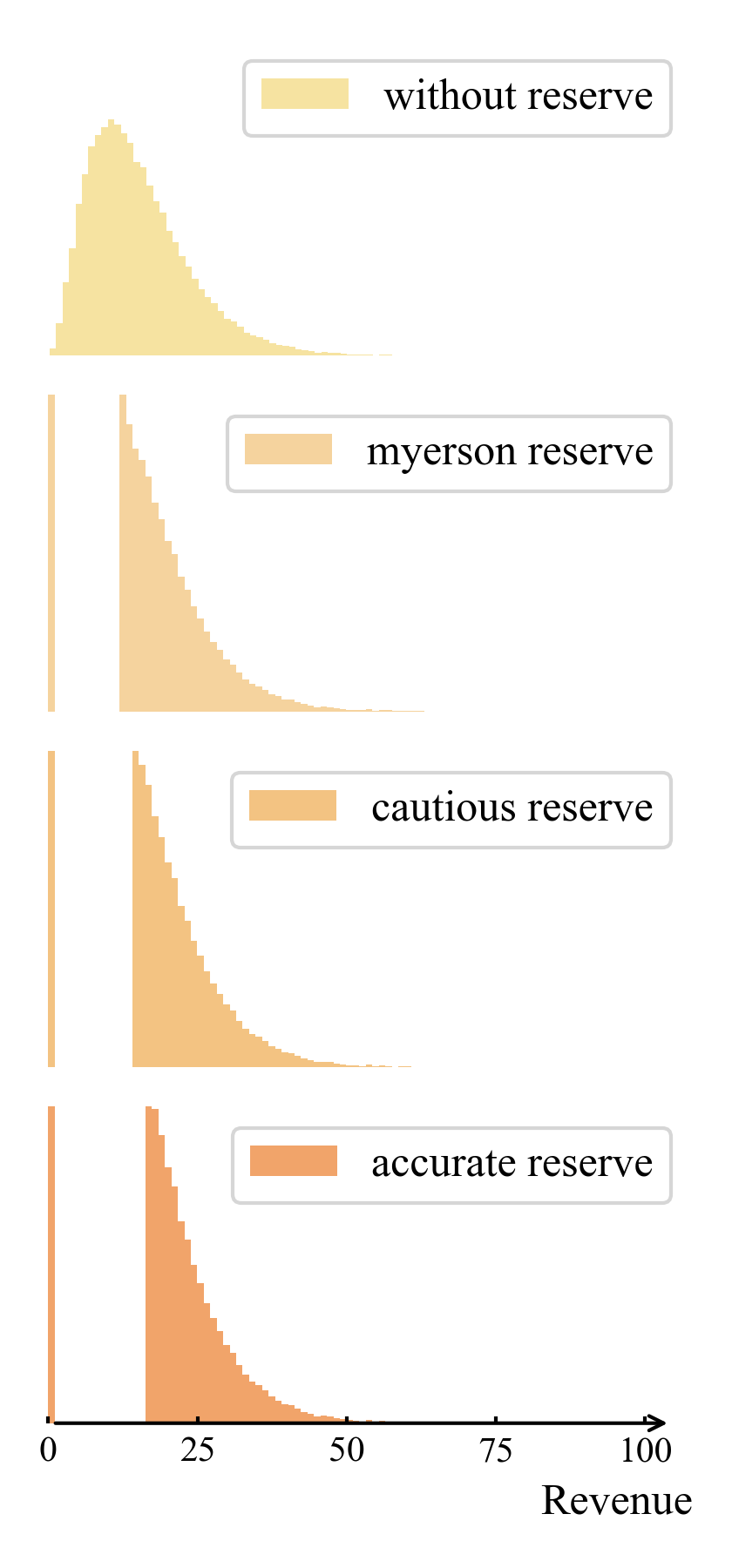}
        \label{fig_exp_distribution2}
    }
    \caption{Revenue distributions of \texttt{APX-R} under different reserves.}
    \label{fig_different_gamma_rev}
\end{figure}

\subsection{Impact of POT Tree Symmetry on \texttt{APX-R}}

To examine how the symmetry of POT trees affects the revenue performance of the \texttt{APX-R} mechanism, we conducted simulations using three POT tree structures with different levels of symmetry, as shown in Figure \ref{fig_symmetric}.

According to proposition \ref{proposition_WBB} in Chapter 3 and the seller’s expected revenue formula \eqref{eq_APX_r}, the seller’s expected revenue depends only on the number of buyers in each POST subtree, rather than the specific structure of each subtree. Therefore, when analyzing POT tree symmetry, we only need to focus on subtree size combinations, rather than their exact structural configurations. In all three POT tree structures, the seller's number of direct neighbors is $m = 2$, and the total number of buyers is $n = 6$, but the size combinations of the POST subtrees vary: In POT tree (a), the subtree size combination is $1+5$; in POT tree (b), the subtree size combination is $2+4$; In POT tree (c), the subtree size combination is $3+3$. Since the smallest POST subtree size in Figure \ref{fig_symmetric} is $1$, we use the approximate reserve price function $\gamma$ for $\underline{k} = 1$ in the \texttt{APX-R} simulations.

\tikzset{global scale/.style={
    scale=#1,
    every node/.append style={scale=#1}
  }}
  \begin{figure}[ht]
 \centering
        \centering
        \begin{tikzpicture}[global scale=0.8, box/.style={circle, draw}]
            \node[box,fill=gray,very thin](s1) at(-15,-0.5){{\LARGE$s$}};
            \node[box,fill={rgb:red,1;green,180;blue,80}](F1) at(-13.8,-1){$F$};
            
            \node[box,fill={rgb:red,1;green,180;blue,80}](A1) at(-13.8,0){$A$};
            \node[box,fill=white](B1) at(-12.6,0){$B$};
            \node[box,fill=white](C1) at(-11.4,0){$C$};
            \node[box,fill=white](D1) at(-10.2,0){$D$};
            \node[box,fill=white](E1) at(-9,0){$E$};
            \node at (-11.8,-2.5) {{\large POT structure (a)}};
            \draw[->,  line width=1pt] (s1) --(A1);
            \draw[->,  line width=1pt] (A1) --(B1);
            \draw[->,  line width=1pt] (B1) --(C1);
            \draw[->,  line width=1pt]  (C1) --(D1);
            \draw[->,  line width=1pt]  (D1) --(E1);
            \draw[->,  line width=1pt]  (s1) --(F1);

            \node[box,fill=gray,very thin](s2) at(-7,-0.5){{\LARGE$s$}};
            \node[box,fill={rgb:red,1;green,180;blue,80}](E2) at(-5.8,-1){$E$};
            \node[box,fill=white](F2) at(-4.6,-1){$F$};
            \node[box,fill={rgb:red,1;green,180;blue,80}](A2) at(-5.8,0){$A$};
            \node[box,fill=white](B2) at(-4.6,0){$B$};
            \node[box,fill=white](C2) at(-3.4,0){$C$};
            \node[box,fill=white](D2) at(-2.2,0){$D$};
            \node at (-5.2,-2.5) {{\large POT structure (b)}};
            
            \draw[->,  line width=1pt] (s2) --(A2);
            \draw[->,  line width=1pt] (A2) --(B2);
            \draw[->,  line width=1pt] (B2) --(C2);
            \draw[->,  line width=1pt]  (C2) --(D2);
            \draw[->,  line width=1pt]  (s2) --(E2);
            \draw[->,  line width=1pt]  (E2) --(F2);
            
            \node[box,fill=gray,very thin](s) at(0,-0.5){{\LARGE$s$}};
            \node[box,fill={rgb:red,1;green,180;blue,80}](D) at(1.2,-1){$D$};
            \node[box,fill=white](E) at(2.4,-1){$E$};
            \node[box,fill=white](F) at(3.6,-1){$F$};
            
            \node[box,fill={rgb:red,1;green,180;blue,80}](A) at(1.2,0){$A$};
            \node[box,fill=white](B) at(2.4,0){$B$};
            \node[box,fill=white](C) at(3.6,0){$C$};
            \node at (2.2,-2.5) {{\large POT structure (c)}};
            
            \draw[->,  line width=1pt] (s) --(A);
            \draw[->,  line width=1pt] (A) --(B);
            \draw[->,  line width=1pt] (B) --(C);
            \draw[->,  line width=1pt]  (s) --(D);
            \draw[->,  line width=1pt]  (D) --(E);
            \draw[->,  line width=1pt]  (E) --(F);
        \end{tikzpicture}
        \caption{POT structure for exploring \texttt{APX-R} revenue distribution under different levels of symmetry.}
        \label{fig_symmetric}
\end{figure}
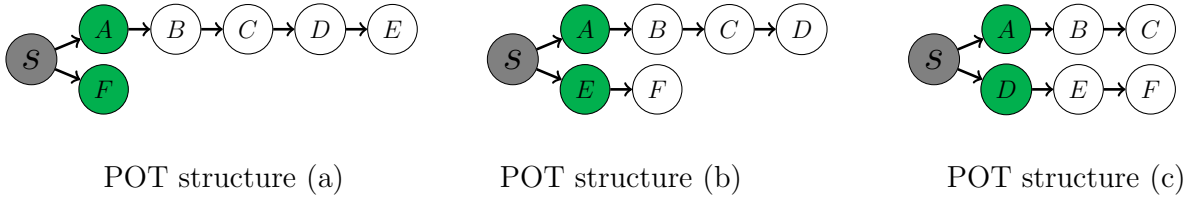

Figure \ref{fig_symmetric_rev} presents the revenue distributions of the \texttt{APX-R} mechanism using the approximate reserve price function $\gamma$ for $\underline{k} = 1$ in the three POT tree structures shown in Figure \ref{fig_symmetric}. Each histogram represents the frequency of the corresponding revenue values on the x-axis, illustrating how the symmetry of the POT tree structure affects \texttt{APX-R} revenue. {The three gradient-colored histograms in the figure represent the revenue distributions of the \texttt{APX-R} mechanism under different levels of symmetry in the POT structure. Due to the reserve price setting, a revenue of zero may still occur. Furthermore, for any given valuation distribution, since the reserve price is the same, the total number of buyers remains unchanged, and only the POT structure differs, the probability of auction failure remains consistent across cases. Consequently, the height of the corresponding histogram also remains the same.} It is evident that under all three different distributions, as the symmetry of the POT tree increases, the \texttt{APX-R} revenue distribution shifts to the right, indicating higher revenues. Thus, POT tree symmetry is positively correlated with the seller’s expected revenue. This trend suggests that as the POT tree becomes more symmetric, the \texttt{APX-R} mechanism’s revenue becomes more stable. Therefore, higher POT tree symmetry enhances the expected revenue of the \texttt{APX-R} mechanism.

\begin{figure}[ht]
    \centering
    \subfigure[$U{{[}}0,100{{]}}$]{
        \includegraphics[width=0.24\textwidth]{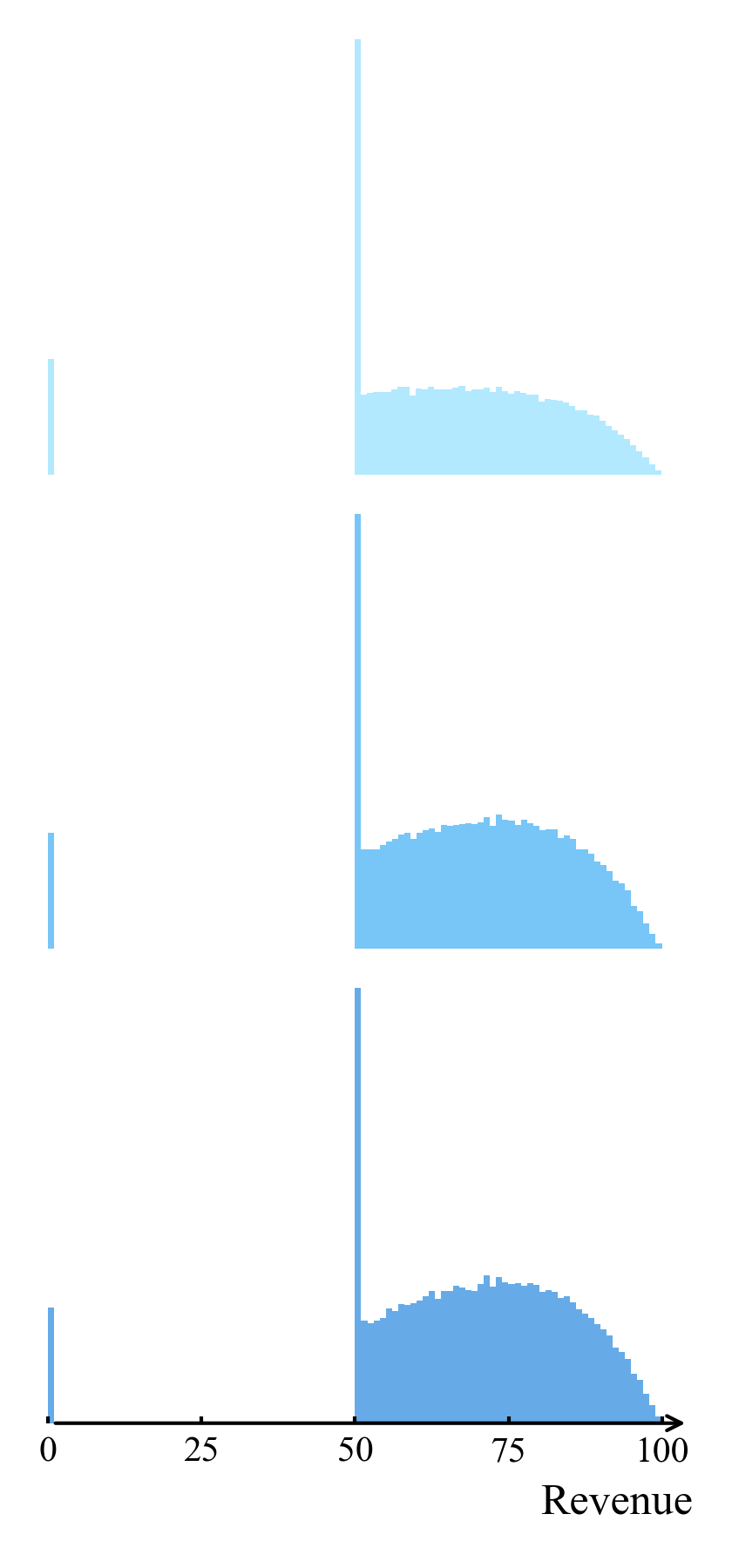}
        \label{fig_uniform_distribution1}
    }
    \subfigure[$N(50,16.67^2)$]{
        \includegraphics[width=0.24\textwidth]{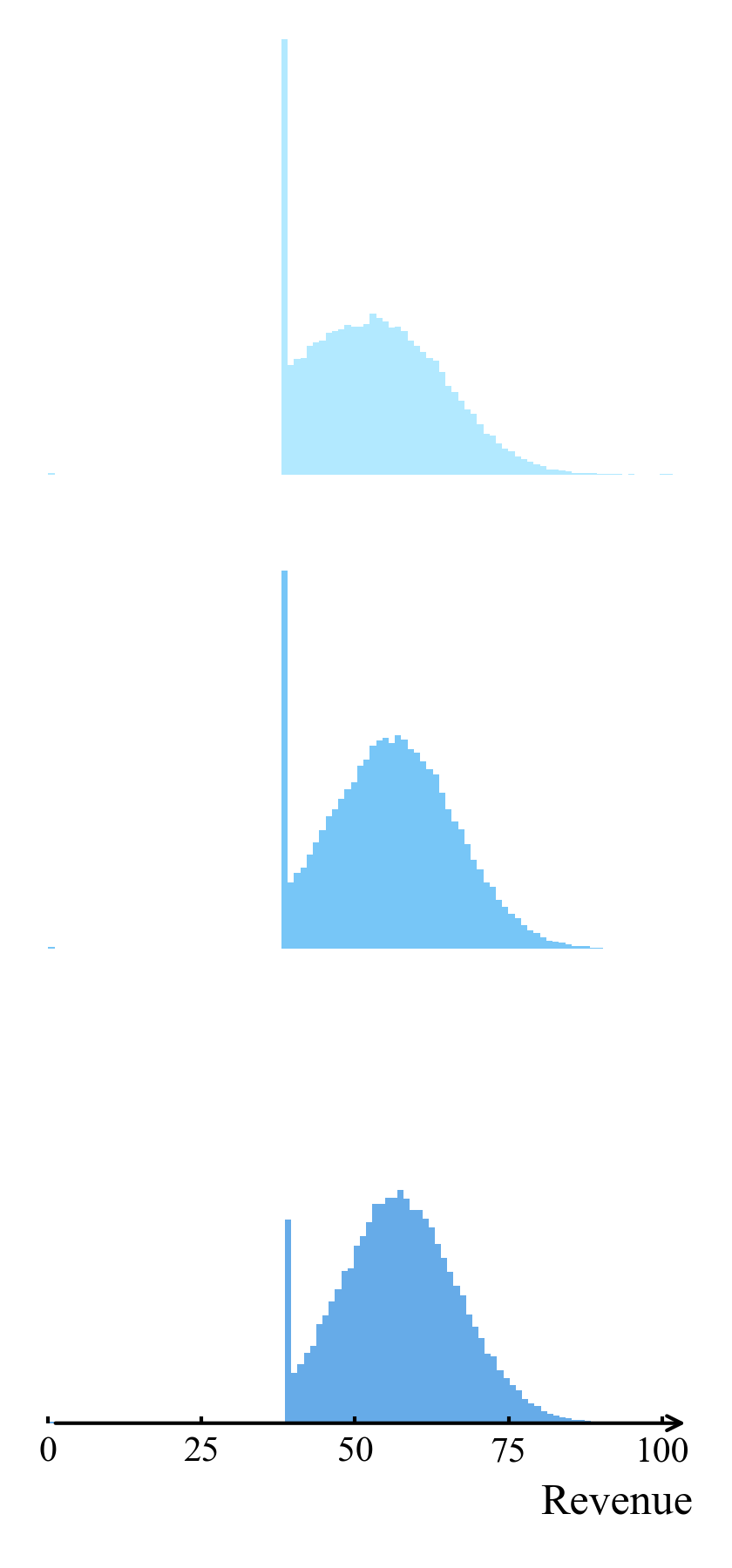}
        \label{fig_norm_distribution1}
    }
    \subfigure[$Exp(0.08)$]{
        \includegraphics[width=0.24\textwidth]{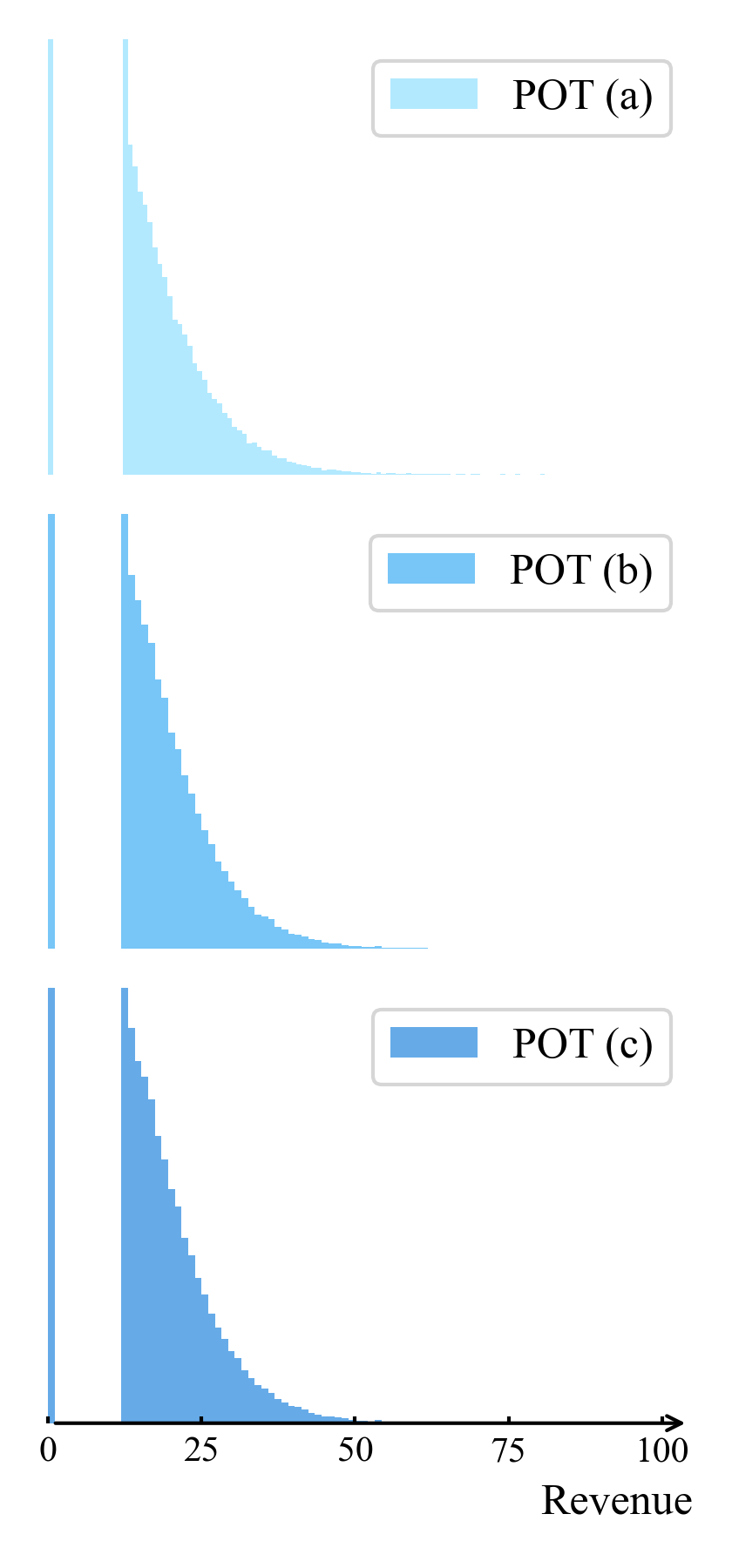}
        \label{fig_exp_distribution1}
    }
    \caption{Revenue distributions of \texttt{APX-R} in POT trees with different levels of symmetry.}
    \label{fig_symmetric_rev}
\end{figure}

Table \ref{table_symmetric} presents the expected revenues of the \texttt{APX-R} mechanism using Myerson’s reserve price in the three POT tree structures. It shows that as POT tree symmetry increases, expected revenue steadily rises. Therefore, POT tree symmetry plays a crucial role in determining the expected revenue of auction mechanisms, and more symmetric POT structures significantly improve the seller’s expected revenue.

\linespread{1.}
{   
    \begin{table}[htbp]
		\centering
        \caption{Expected revenue of the \texttt{APX-R} mechanism in POT tree structures with different levels of symmetry.}
        \addtolength{\leftskip} {-2cm}
        \addtolength{\rightskip}{-2cm}
		\begin{tabular}{lrrr}
		  \hline\hline
		  Distributions & \cellcolor{gray!10}POT (a) & \cellcolor{gray!20}POT (b) & \cellcolor{gray!30}POT (c)   \\
            \hline
            $U[0,100]$ & \cellcolor{gray!10}$59.4687$ &\cellcolor{gray!20}$64.8785$ & \cellcolor{gray!30}\textbf{66.5007} \\
		  \hline
		  $N(50,16.67^2)$ & \cellcolor{gray!10}$50.8081$ & \cellcolor{gray!20}$55.7984$ & \cellcolor{gray!30}\textbf{57.1788}  \\ 
            \hline
            $Exp(0.08)$ & \cellcolor{gray!10}$14.3518$ & \cellcolor{gray!20}$15.8711$ & \cellcolor{gray!30}\textbf{16.3513} \\
            \hline\hline
		\end{tabular}
		\label{table_symmetric}
	\end{table}}

\subsection{Classic Small-Scale Network Structure}

The network illustrated below represents a simple small-scale social network that exemplifies the theory of \textit{Six Degrees of Separation} that all people are six or fewer social connections away from each other \citep{amaral2000classes}. Although it is a toy network, this network highlights the high connectivity of social structures, which plays a crucial role in network expansion and information diffusion. For the \texttt{APX-R} mechanism, this small network presents a significant challenge. It extends a traditional two-buyer auction setting, which is often considered a highly limited market with low competition. The primary difficulty lies in incentivizing buyers to propagate auction info and invite additional participants, thereby improving seller revenue. Although the market is relatively small, the network’s longest path poses a critical constraint on auction info diffusion. The efficiency of information diffuses within the network is crucial for market expansion.

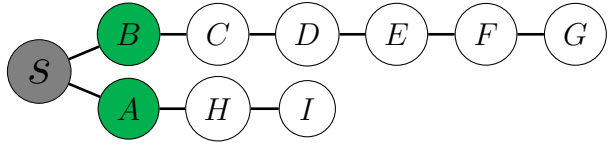
\begin{figure}[ht]
    \centering
    \begin{tikzpicture}[scale=0.99, box/.style={circle, draw}]
        \node[box,fill=gray,very thin](s) at(0,-0.5){{\LARGE$s$}};
        \node[box,fill={rgb:red,1;green,180;blue,80}](A) at(1.2,-1){$A$};
        \node[box,fill=white](H) at(2.4,-1){$H$};
        \node[box,fill=white](I) at(3.6,-1){$I$};

        \node[box,fill={rgb:red,1;green,180;blue,80}](B) at(1.2,0){$B$};
        \node[box,fill=white](C) at(2.4,0){$C$};
        \node[box,fill=white](D) at(3.6,0){$D$};
        \node[box,fill=white](E) at(4.8,0){$E$};
        \node[box,fill=white](F) at(6.0,0){$F$};
        \node[box,fill=white](G) at(7.2,0){$G$};

        \draw[line width=1pt] (s) --(A);
        \draw[line width=1pt] (A) --(H);
        \draw[line width=1pt] (H) --(I);

        \draw[line width=1pt] (s) --(B);
        \draw[line width=1pt] (B) --(C);
        \draw[line width=1pt] (C) --(D);
        \draw[line width=1pt] (D) --(E);
        \draw[line width=1pt] (E) --(F);
        \draw[line width=1pt] (F) --(G);
    \end{tikzpicture}
    \caption{Classic small-scale network structure.}
    \label{fig_small_net}
\end{figure}

Experiments were conducted in the classic small-scale network to evaluate the performance of the \texttt{APX-R} mechanism under different reserve price settings. The expected revenue was compared with the theoretical upper bound $\operatorname{OPT}(N)$, the theoretical lower bound $\operatorname{MYS}(e_s)$, and two other diffusion auction mechanisms designed to optimize seller revenue: \texttt{CWM} \cite{zhang2023optimal} and \texttt{maxViVa} \cite{bhattacharyya2023optimal}.

Since the estimated $\underline{k}$ directly impacts the reserve price in \texttt{APX-R}, and the smallest subtree size in the POT tree is 3, four different reserve price settings were analyzed: (1) no reserve price, which degenerates to \texttt{IDM}\citep{li2017mechanism,li2022diffusion}; (2) reserve price for the classical Myerson optimal auction; (3) $\gamma$ resulting from a conservative setting of $\underline k$; and (4) $\gamma$ resulting from the accurate $\underline k$. 

For \texttt{CWM}, all losers have no reward, so they have very weak incentive to diffuse. For \texttt{maxViVa} it's proved in Appendix that it's not DSIC in terms of diffusion. In contrast, \texttt{APX-R} provides strong propagation incentives to all buyers, ensuring full information diffusion. We set that all \texttt{CWM} agents diffuse with probability $0.5$, and that each \texttt{maxViVa} agent diffuses to $80\%$ of her neighbors.

It can be seen that \texttt{APX-R} always performs well, while the accurate $\gamma$ produces a revenue that is closest to the maximum possible revenue from the table \ref{table_small_market_regular}. This is because the accurate $\underline k$ gives a best approximation to the optimal reserve price $r_{opt}$ while still preserving the incentive compatibility. In contexts beyond this toy network, \texttt{APX-R} will only perform better,
particularly in more realistic shallow networks and scenarios where the seller is connected to a greater number of neighbors.
Each cell represents the expected revenue obtained under the specified mechanism and valuation distribution.

\linespread{1.}
{   
    \begin{table}[htbp]
		\centering
        \caption{Expected revenue of different diffusion auction mechanisms in the classic small-scale network under various regular distributions.}
        \addtolength{\leftskip} {-2cm}
        \addtolength{\rightskip}{-2cm}
        \scalebox{0.85}{
		\begin{tabular}{l|lrrr}
		  \toprule
		  \multicolumn{2}{c}{\textbf{Mechanisms}}  & 
            \multicolumn{3}{c}{\textbf{Expected Revenue}} \\
            \hline
            \rowcolor{blue!10}
            \multicolumn{2}{c}{\textbf{Distributions}} & $U[0,100]$ &$N(50,16.67^2)$ & $Exp(0.08)$ \\
		  \hline\hline
		  \multirow{4}{*}{\texttt{APX-R}} & $\gamma=0$ {\small (No reserve, \texttt{IDM})} & $70.7041$ & $60.4643$ & $18.1708$ \\ 
            \cdashline{2-5}[1pt/1pt] & {\small$\underline{k}=1$ (Myerson reserve)} & $72.2705$ & $60.5352$ & $19.1526$\\
            \cdashline{2-5}[1pt/1pt] & \small{$\underline{k}=2$ (cautious estimate)} & $73.3333$ & $60.7732$  & $19.6865$\\ 
            \cdashline{2-5}[1pt/1pt] & 
            \cellcolor{gray!20}\small{$\underline{k}=3$ (accurate estimate)} & \cellcolor{gray!20}\textbf{74.1242} &  \cellcolor{gray!20} \textbf{61.0995} &\cellcolor{gray!20}  \textbf{20.0335}  \\ 
            \hline
            \multirow{2}{*}{\small{Others}} & \texttt{maxViVa} & $65.5138$ & $56.4292$ &$16.2480$ \\ 
            \cdashline{2-5}[1pt/1pt]
            & \texttt{CWM} & $63.1984$ & $53.6910$ & $15.8095$ \\
            \hline
            \multirow{2}{*}{\small{Benchmarks}} & \small{$\operatorname{MYS}(e_s)$} & $41.6667$ & $42.7858$ & $8.3508$ \\ 
            \cdashline{2-5}[1pt/1pt]
            & \small{$\operatorname{OPT}(N)$} & $80.0195$ & $65.5407$ & $22.8856$ \\ 
		  \bottomrule
		\end{tabular}}
		\label{table_small_market_regular}
	\end{table}%
}

\subsection{Different Market Metrics}

We have also carried out a more detailed simulation. We check the performance of the \texttt{\texttt{APX-R}} mechanism under two criteria: \textit{market expansion ratio} and \textit{market depth}. The simulations are all based on the network composed of $1$ seller and $9$ potential buyers. 

\subsubsection{Market Expansion Ratio}
The market expansion ratio serves as a metric for measuring the proportion between the size of the traditional market and that of the networked market. A smaller ratio indicates greater potential for market expansion through the network, whereas a larger ratio suggests that the original market is more similar to the networked market, implying lower expansion potential. This proportion is formally defined as the market expansion ratio.

\begin{definition}\label{definition_MER}
The \textit{market expansion ratio}, denoted as $mer$, measures market growth from auction information diffusion. It is the ratio of the initial market size to the final market size after full propagation: $mer = \frac{\rho+1}{n+1}$,
where $\rho$ is the number of initial buyers directly connected to the seller, and $n$ is the total number of buyers after propagation.
\end{definition}

Although all auction simulations assume a market with $1$ seller and $9$ buyers, the network structures differ based on market expansion ratios. For each scenario with a different $mer$, the selected network structure follows the most symmetric tree structure. Figure \ref{figure_market_mer} illustrates the simulation settings for markets considering different values of $mer$. The investigation starts from $mer = 30\%$, since according to the six degrees of separation theory, the maximum distance between the seller and any buyer does not exceed six hops. As a result, a market with 10 participants $mer = 20\%$ is not feasible within this framework.

\begin{figure}[h]
\centering
\includegraphics[scale = 0.9]{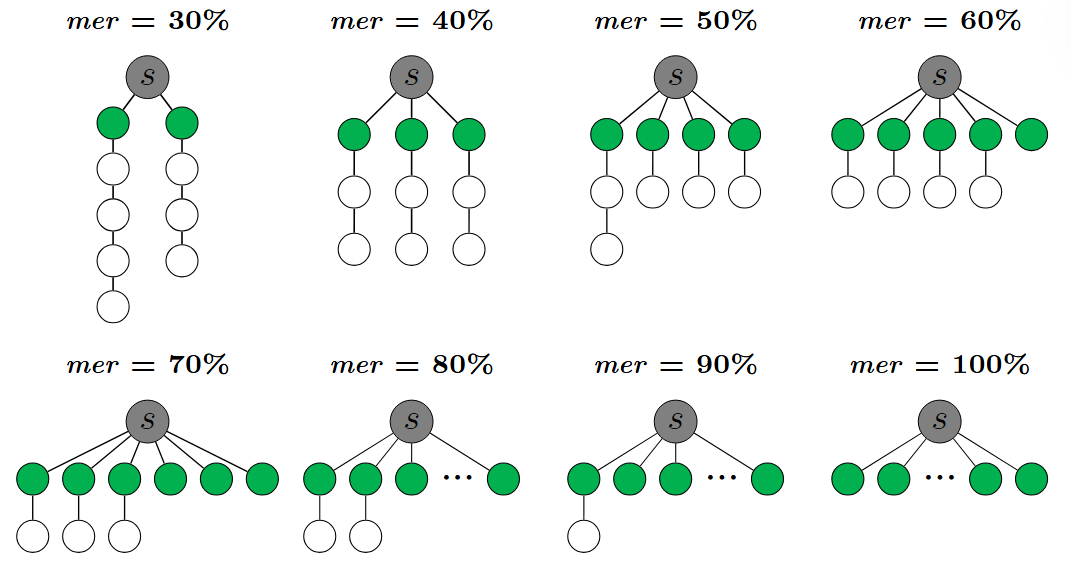}
\caption{Simulation scenarios for a market with $1$ seller and $9$ buyers under different market expansion ratios.}
\label{figure_market_mer}
\end{figure}

Table \ref{Table_market_portion_regular} shows expected revenues  of \texttt{APX-R} under different ratios of $mer$. Each decimal is the revenue of the mechanism (the row) running under the market scenario (the column). The empty grids in this table are where the corresponding $\gamma$ value is overestimated, which cannot happen with our method. 
The black numbers in each column are the highest revenues under each market scenario. We can see that the \texttt{APX-R} mechanism with an accurate reserve price $\gamma$ always performs best, even in the network market, which is not so different from the classical market (e.g. $mer=90\%$ or $100\%$). \texttt{APX-R} with conservative $\gamma$ still have a good performance against the currently existing optimal diffusion auctions \texttt{maxViVa} and \texttt{CWM}. More importantly, the most significant aspect of \texttt{APX-R} is that it has robust performance regardless of how the network market differs from the classical market. 

\begin{table}[htbp]
    \centering
    \addtolength{\leftskip} {-3cm}
    \addtolength{\rightskip}{-3cm}
    \scalebox{0.8}{
    \begin{tabular}{l|l|rrrrrrrr}
		\hline\hline
        \rowcolor{green!8}
        \multicolumn{2}{c|}{\quad} & \multicolumn{8}{c}{\textbf{Market Expansion Ratios}}   \\
		\hline
        \rowcolor{green!8}
		\multicolumn{2}{c|}{\textbf{Mechanisms}}         &30\% &40\% &50\% &\textbf{60\%} &\textbf{70\%} &\textbf{80\%} &\textbf{90\%} &\textbf{100\%}\\
		\hline\hline
        \rowcolor{blue!10}
        \multicolumn{10}{c}{\textbf{Uniform Dsitribution} $U[0,100]$} \\
        \multirow{4}{*}{\texttt{APX-R}} &$\gamma=0$      &$73.3333$ & $77.1418$ &$78.2043$ &$78.8778$ &$79.1567$ &$79.4344$ &$79.7112$ &$80.0000$\\ \cdashline{2-10}[1pt/1pt]
        &\small{$\gamma$ of $\underline{k}=1$} & $74.1113$ &$77.3605$ & $78.3454$ &\cellcolor{gray!20}\textbf{78.9692} &\cellcolor{gray!20}\textbf{79.2312} &\cellcolor{gray!20}\textbf{79.4944} &\cellcolor{gray!20}\textbf{79.7569} &\cellcolor{gray!20}\textbf{80.0195}\\ 
        \cdashline{2-10}[1pt/1pt]
        &\small{$\gamma$ of $\underline{k}=2$}  &$74.7809$ & $77.5655$ & \cellcolor{gray!20}\textbf{78.4597} &$-$ &$-$ &$-$ &$-$ &$-$\\ 
        \cdashline{2-10}[1pt/1pt]
        & \small{$\gamma$ of $\underline{k}=3$} & \cellcolor{gray!20}\textbf{75.2765} &\cellcolor{gray!20}\textbf{77.6491} &$-$ &$-$ &$-$ &$-$ &$-$ &$-$\\
        \hline
        \multirow{2}{*}{Others} &\texttt{maxViVa}           &$66.4026$ & $72.7807$ & $75.6555$ &$77.1544$ &$77.9283$ &$78.6896$ &$79.3832$ &$80.0195$\\ 
        \cdashline{2-10}[1pt/1pt]
        &\texttt{CWM} &$57.4405$ & $68.1743$ &$73.4132$ &$76.3581$ &$76.7892$ &$77.2172$ &$79.3694$ &$80.0195$\\
        \hline\hline
        \rowcolor{blue!10}
        \multicolumn{10}{c}{\textbf{Norm Distribution} $N(50,16.67^2)$} \\
		\multirow{4}{*}{\texttt{APX-R}} &$\gamma=0$      &$61.7839$ & $63.8564$ &$64.4753$ &$64.8691$ &$65.0370$ &$65.2040$ &$65.3713$ &$65.5406$\\ \cdashline{2-10}[1pt/1pt]
        &{\small $\gamma$ of $\underline{k}=1$} & $61.7990$ &$63.8581$ & $64.4762$ &\cellcolor{gray!20} \textbf{64.8698} &\cellcolor{gray!20} \textbf{65.0376} &\cellcolor{gray!20} \textbf{65.2044} &\cellcolor{gray!20} \textbf{65.3716} &\cellcolor{gray!20} \textbf{65.5407}\\ 
        \cdashline{2-10}[1pt/1pt]
        &{\small $\gamma$ of $\underline{k}=2$} &$61.8972$ & $63.8762$ & \cellcolor{gray!20} \textbf{64.4847} &$-$ &$-$ &$-$ &$-$ &$-$\\ \cdashline{2-10}[1pt/1pt]
        & {\small $\gamma$ of $\underline{k}=3$}          &\cellcolor{gray!20} \textbf{62.0203} &\cellcolor{gray!20} \textbf{63.8968} &$-$ &$-$ &$-$ &$-$ &$-$ &$-$\\
        \hline
        \multirow{2}{*}{Others} &\texttt{maxViVa}           &$56.8625$ & $61.1056$ & $62.8087$ &$63.8117$ &$64.2438$ &$64.7097$ &$65.1040$ &$65.5407$\\ 
        \cdashline{2-10}[1pt/1pt]
        &\texttt{CWM} &$48.8079$ & $57.7087$ &$61.3730$ &$63.2310$ &$63.4989$ &$63.7707$ &$63.8970$ &$65.5407$\\
		\hline\hline
        \rowcolor{blue!10}
        \multicolumn{10}{c}{\textbf{Exponential Distribution} $Exp(0.08)$} \\
		\multirow{4}{*}{\texttt{APX-R}} &$\gamma=0$      &$19.2069$ & $21.1406$ &$21.7613$ &$22.1559$ &$22.3330$ &$22.5040$ &$22.6795$ &$22.8534$\\ \cdashline{2-10}[1pt/1pt]
        &{\small $\gamma$ of $\underline{k}=1$} & $19.8243$ &$21.3753$ & $21.9135$ &\cellcolor{gray!20}\textbf{22.2580} &\cellcolor{gray!20}\textbf{22.4172} &\cellcolor{gray!20}\textbf{22.5709} &\cellcolor{gray!20}\textbf{22.7291} &\cellcolor{gray!20}\textbf{22.8856}\\ \cdashline{2-10}[1pt/1pt]
        &{\small $\gamma$ of $\underline{k}=2$} &$20.2147$ & $21.5223$ & \cellcolor{gray!20}\textbf{21.9908} &$-$ &$-$ &$-$ &$-$ &$-$\\ 
        \cdashline{2-10}[1pt/1pt]
        & {\small$\gamma$ of $\underline{k}=3$}          &\cellcolor{gray!20}\textbf{20.4610} &\cellcolor{gray!20}\textbf{21.5704} &$-$ &$-$ &$-$ &$-$ &$-$ &$-$\\
        \hline
        \multirow{2}{*}{Others} &\texttt{maxViVa}           &$16.6697$ & $19.2450$ & $20.4720$ &$21.2398$ &$21.6671$ &$22.1203$ &$22.5670$ &$22.8856$\\ 
        \cdashline{2-10}[1pt/1pt]
        &\texttt{CWM} &$14.1591$ & $17.4215$ &$19.4600$ &$20.8695$ &$21.0793$ &$21.3041$ &$22.1779$ &$22.8856$\\
		\hline\hline
		\end{tabular}}
        \caption{Impact of different $mer$ on revenue across mechanisms.}
        \label{Table_market_portion_regular}
	\end{table}

It is worth noting that in scenarios with higher market expansion ratios, the \texttt{APX-R} mechanism can still achieve significant revenue gains even without a reserve price (i.e., $\gamma = 0$). This is because in such cases, the seller already has access to a larger initial pool of buyers, making the impact of market expansion on revenue growth more substantial than the effect of setting a reserve price. Although the incremental benefit of a reserve price diminishes in market where the seller initially knows more buyers, accurately estimating the reserve price still allows \texttt{APX-R} to further enhance revenue beyond an already high baseline. Thus, the emphasis should be on optimizing seller revenue at high market expansion levels. Particularly in large-scale markets, refining the reserve price can further drive overall revenue growth.
\subsubsection{Market Depth}

Market depth $md$ refers to the longest path of information diffusion, equivalent to the graph depth. According to the \textit{Six Degrees of Separation}, real-world networks typically have $md \leq 6$. Thus, this experiment only considers cases where $md \leq 6$. In larger markets with greater $md$, the distance from the seller to the farthest buyer increases. When the winning buyer is at the maximum depth, more DCNs exist along the path, potentially leading to higher propagation rewards.

\begin{definition}\label{definition_MD}
    The market depth of an auction network, denoted as $md$, is defined as the longest distance from the seller to any buyer, representing the maximum information diffusion path.
\end{definition}
In this market depth simulation, the market still consists of $1$ seller and $9$ buyers. For each depth level, the most symmetric tree structure with the minimum number of direct neighbors for the seller is used. Figure \ref{figure_market_md} shows the simulated market structures considering different values of $md$.

\begin{figure}[h]
\centering
\includegraphics[scale = 0.9]{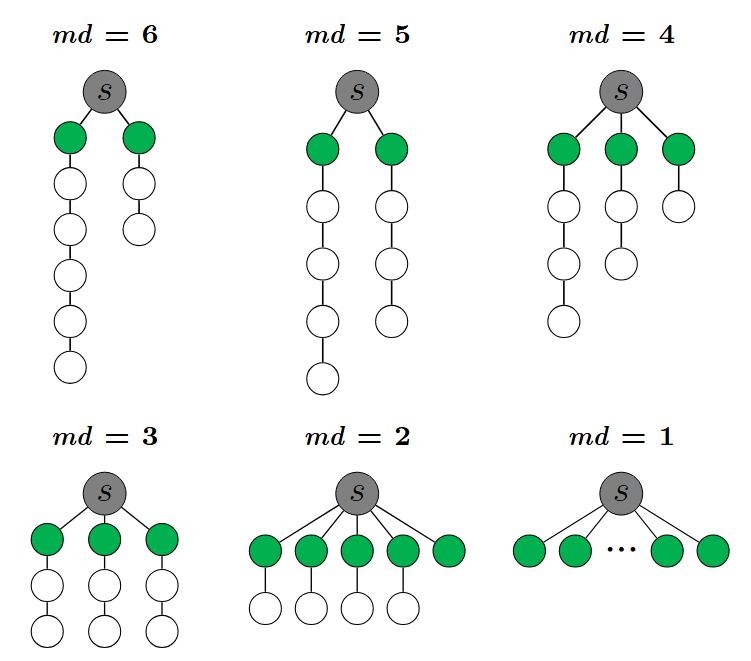}
\caption{Simulation scenarios for a market with $1$ seller and $9$ buyers under different market depths.}
\label{figure_market_md}
\end{figure}

Table \ref{Table_market_depth_regular} shows the performance of mechanisms in markets with different market depths. In this simulation of market depth, \texttt{APX-R} also performs well in all example settings (all columns).  From these two tables it can be seen that the \texttt{APX-R} mechanism shows its advantage and robustness when confronted with different possible variations of network structures. 

\begin{table}[htbp]
    \centering
    \addtolength{\leftskip} {-2cm}
    \addtolength{\rightskip}{-2cm}
    \scalebox{0.85}{
    \begin{tabular}{l|l|rrrrrr}
		\hline\hline  
        \rowcolor{green!8}
        \multicolumn{2}{c|}{\quad} & \multicolumn{6}{c}{\textbf{Market Depths}}   \\
        \hline
        \rowcolor{green!8}
		\multicolumn{2}{c|}{\textbf{Mechanisms}}  &$\mathbf{md}$\textbf{ = 6}&$\mathbf{md}$\textbf{ = 5} &$\mathbf{md}$\textbf{ = 4} &$\mathbf{md}$\textbf{ = 3}&$\mathbf{md}$\textbf{ = 2} &$\mathbf{md}$\textbf{ = 1}  \\
		\hline\hline
        \rowcolor{blue!10}
        \multicolumn{8}{c}{\textbf{Uniform Distribution} $U[0,100]$} \\
        \multirow{4}{*}{\texttt{APX-R}} &$\gamma=0$ &$70.7041$ & $73.3333$ &$76.5476$ &$77.1418$ &$78.8778$ &$80.0000$ \\ 
        \cdashline{2-8}[1pt/1pt]
        &\small{$\gamma$ of$\underline{k}=1$} &$72.2705$ &$74.1113$ & $76.8513$ &$77.3605$ &\cellcolor{gray!20}\textbf{78.9692} &\cellcolor{gray!20}\textbf{80.0195}\\  
        \cdashline{2-8}[1pt/1pt]
        & \small{$\gamma$ of$\underline{k}=2$} &$73.3333$ & $74.7809$ & \cellcolor{gray!20}\textbf{77.1301} &$77.5655$ &$-$ &$-$\\ \cdashline{2-8}[1pt/1pt]
        &\small{$\gamma$ of$\underline{k}=3$} &\cellcolor{gray!20}\textbf{74.1242} &\cellcolor{gray!20}\textbf{75.2765} &$-$ &\cellcolor{gray!20}\textbf{77.6491} &$-$ &$-$\\
        \hline
        \multirow{2}{*}{Others} &\texttt{maxViVa}  &$65.5138$ & $66.4026$ & $72.2899$ &$72.7807$ &$77.1544$ &$80.0195$\\ 
        \cdashline{2-8}[1pt/1pt]
        & \texttt{CWM}  &$63.1984$ & $57.740$ &$71.0953$ &$68.174$ &$76.358$ &$80.0195$\\
        \hline\hline
        \rowcolor{blue!10}
        \multicolumn{8}{c}{\textbf{Norm Distribution}  $ N(50,16.67^2)$} \\
		\multirow{4}{*}{\texttt{APX-R}} &$\gamma=0$ &$60.4643$ & $61.7839$ &$63.5359$ &$63.8564$ &$64.8694$ &$65.5406$ \\ 
        \cdashline{2-8}[1pt/1pt]
        &{\small$\gamma$ of $\underline{k}=1$} &$60.5352$ &$61.7990$ & $63.5391$ &$63.8581$ &\cellcolor{gray!20}\textbf{64.8698} &\cellcolor{gray!20}\textbf{65.5407}\\  \cdashline{2-8}[1pt/1pt]
        & {\small$\gamma$ of $\underline{k}=2$} &$60.7732$ & $61.8972$ &\cellcolor{gray!20} \textbf{63.5691} &$63.8762$ &$-$ &$-$\\ 
        \cdashline{2-8}[1pt/1pt]
        &{\small$\gamma$ of $\underline{k}=3$} &\cellcolor{gray!20}\textbf{61.0995} &\cellcolor{gray!20}\textbf{62.0203} &$-$ &\cellcolor{gray!20}\textbf{63.8968} &$-$ &$-$\\
        \hline
        \multirow{2}{*}{Others} &\texttt{maxViVa}  &$56.4292$ & $56.8625$ & $60.7927$ &$61.1056$ &$63.8117$ &$65.5407$\\ 
        \cdashline{2-8}[1pt/1pt]
        & \texttt{CWM}  &$53.6910$ & $48.8079$ &$59.8231$ &$57.7087$ &$63.2310$ &$65.5407$\\
		\hline\hline
        \rowcolor{blue!10}
        \multicolumn{8}{c}{\textbf{Exponential Distribution} $Exp(0.08)$} \\
        \multirow{4}{*}{\texttt{APX-R}} &$\gamma=0$ &$18.1708$ & $19.2069$ &$20.8474$ &$21.1406$ &$22.1559$ &$22.8534$ \\ 
        \cdashline{2-8}[1pt/1pt]
        &{\small$\gamma$ of $\underline{k}=1$} &$19.1526$ &$19.8243$ & $22.1414$ &$21.3753$ &\cellcolor{gray!20}\textbf{22.2580} &\cellcolor{gray!20}\textbf{22.8856}\\  \cdashline{2-8}[1pt/1pt]
        & {\small$\gamma$ of $\underline{k}=2$} &$19.6865$ & $20.2147$ & \cellcolor{gray!20}\textbf{21.3263} &$21.5223$ &$-$ &$-$\\ 
        \cdashline{2-8}[1pt/1pt]
        &{\small$\gamma$ of $\underline{k}=3$} &\cellcolor{gray!20}\textbf{20.0335} &\cellcolor{gray!20}\textbf{20.4610} &$-$ &\cellcolor{gray!20}\textbf{21.5704} &$-$ &$-$\\
        \hline
        \multirow{2}{*}{Others} &\texttt{maxViVa}  &$16.2480$ & $16.6697$ & $18.9024$ &$19.2450$ &$21.2398$ &$22.8856$\\ 
        \cdashline{2-8}[1pt/1pt]
        & \texttt{CWM}  &$15.8095$ & $14.1591$ &$18.5479$ &$17.4215$ &$20.8695$ &$22.8856$\\
	\hline\hline
		\end{tabular}}
        \caption{Impact of different $mer$ on revenue across mechanisms.}
        \label{Table_market_depth_regular}
	\end{table}

\subsection{Large Real Social Networks}

To demonstrate that under different values of $\rho$, the \texttt{APX-R} mechanism with the proposed reserve price function achieves higher expected revenue than directly running Myerson's optimal auction on the seller's initial neighbor set, i.e., $\operatorname{APX(\mathbf{t'};\gamma)} > \operatorname{MYS(e_s)}$, experiments were conducted using the FilmTrust social trust network dataset provided by KONECT\footnote{\url{http://www.konect.cc/networks/librec-filmtrust-trust/}} \citep{rossi2015network}. This dataset consists of 874 nodes and 1853 edges. For each $\rho$ value, a randomly selected node in the network satisfying the given $\rho$ was designated as the seller for the experiments. 

Figure \ref{figure_large_net_com_myerson} illustrates the expected revenue comparison between the \texttt{APX-R} mechanism and Myerson’s optimal auction in the original market as $\rho$ varies, under different regular valuation distributions. The results show that across all four valuation distributions, the \texttt{APX-R} mechanism (pink line) consistently outperforms the expected revenue of Myerson’s optimal auction in the original market $\operatorname{MYS}(e_s)$ (blue line). Notably, for larger values of $\rho$, the expected revenue of \texttt{APX-R} approaches the theoretical upper bound. This demonstrates that in large-scale real-world social networks, the \texttt{APX-R} mechanism significantly outperforms Myerson’s auction in maximizing the seller’s expected revenue.

\begin{figure}[h]
\centering
\includegraphics[scale = 0.5]{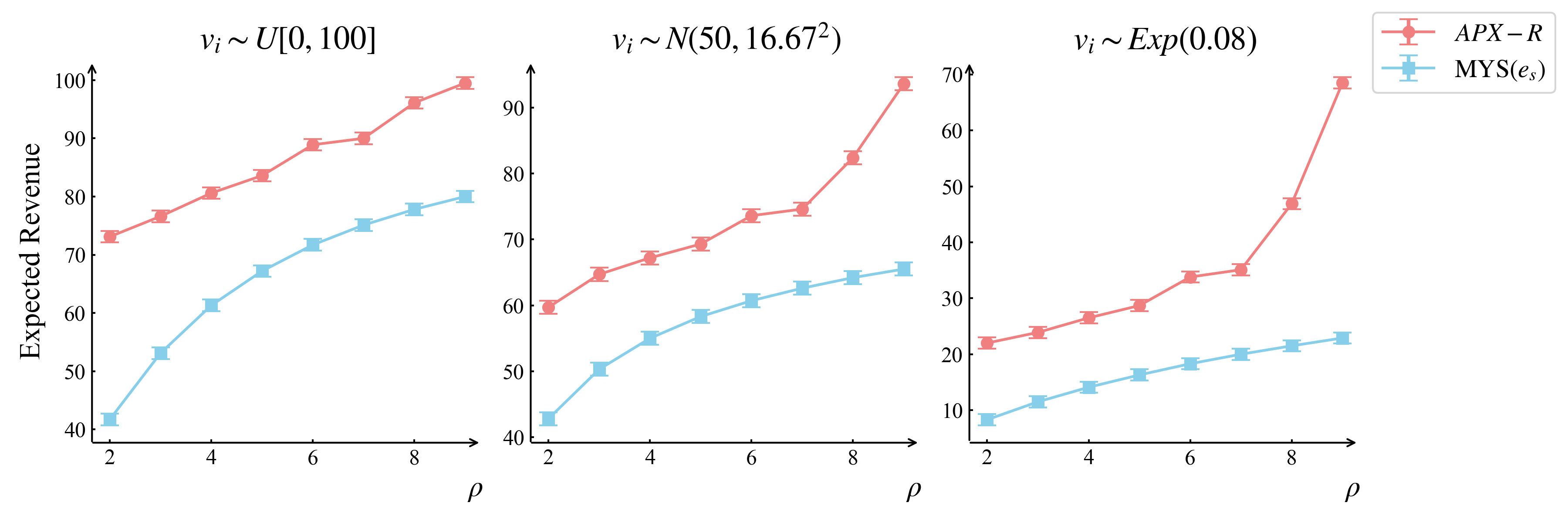}
\caption{Expected revenue comparison of \texttt{APX-R} and Myerson’s optimal auction in a large real-world network under different $\rho$ values.}
\label{figure_large_net_com_myerson}
\end{figure}

To examine the performance differences of the \texttt{APX-R} mechanism in sparse and dense social networks, simulations were conducted on two datasets: LastFM social network\footnote{\url{https://snap.stanford.edu/data/feather-lastfm-social.html}} and Facebook social network\footnote{\url{https://snap.stanford.edu/data/ego-Facebook.html}}. Both datasets are undirected graphs sourced from the SNAP (Stanford Large Network Dataset Collection) \citep{leskovec2014snap}. These two networks represent contrasting topological structures, with LastFM being relatively sparse and Facebook being more dense. Table \ref{table_bignet_property} summarizes key properties of both networks, where density is the ratio of actual edges to the maximum possible number of edges, ranging from $[0, 1]$. And transitivity is a measure of how likely adjacent nodes are to be interconnected, often used to evaluate clustering in social networks. High transitivity suggests that "friends of friends" are more likely to be directly connected.

\linespread{1.}
{   
    \begin{table}[htbp]
		\centering
        \addtolength{\leftskip} {-2cm}
        \addtolength{\rightskip}{-2cm}
        \scalebox{0.8}{
		\begin{tabular}{lrrrr}
		  \toprule
		  Datasets & Nodes & Edges & Density & Transitivity\\
            \midrule
            LastFM &  7624 & 27806 & 0.0001 & 0.179 \\
		  \midrule
		  Facebook &  4039 & 88234 & 0.0011 & 0.519 \\
          \bottomrule
		\end{tabular}}
        \caption{Parameter comparison of LastFM and Facebook network.}
		\label{table_bignet_property}
	\end{table}}

\begin{figure}[htbp]
    \centering
    \subfigure[$\rho =2$]{
        \centering
        \includegraphics[height=4cm]{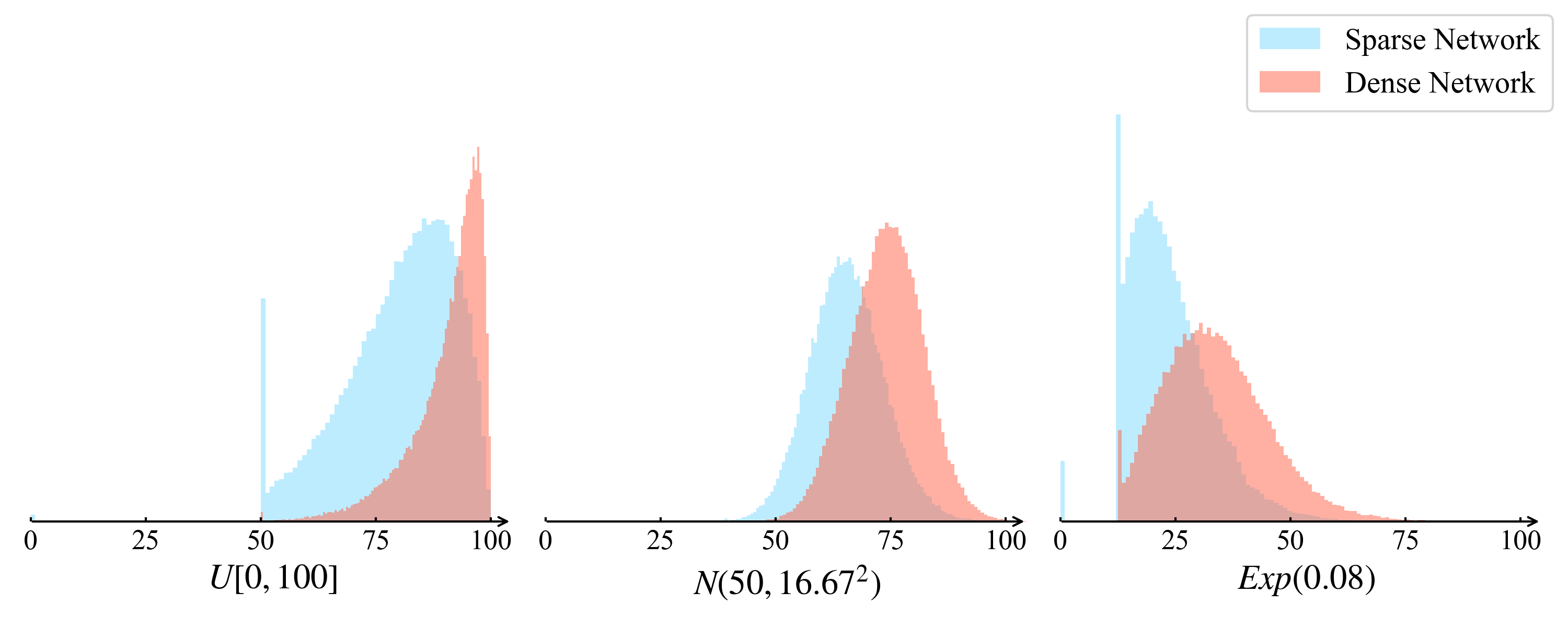}
        \label{fig_bignet_2deg}}

    \subfigure[$\rho =3$]{
        \centering
        \includegraphics[height=4cm]{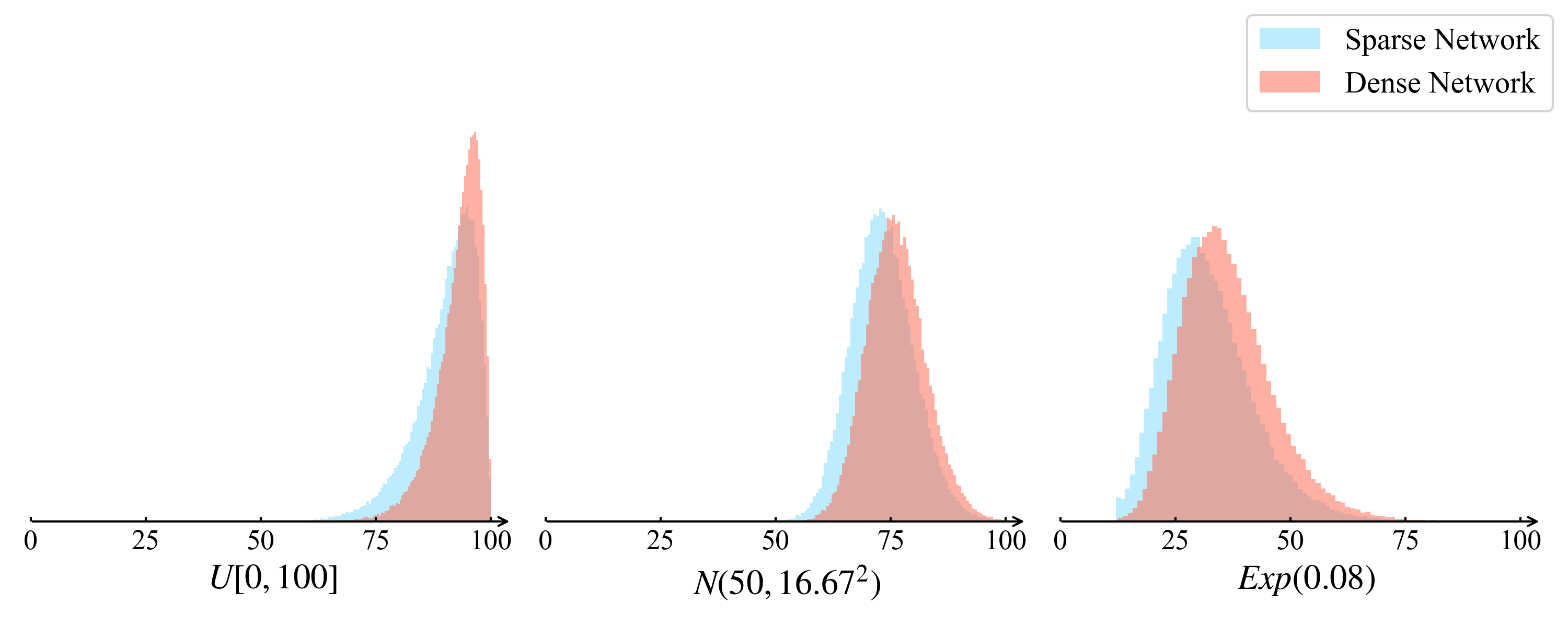}
        \label{fig_bignet_3deg}}

    \subfigure[$\rho =4$]{
        \centering
        \includegraphics[height=4cm]{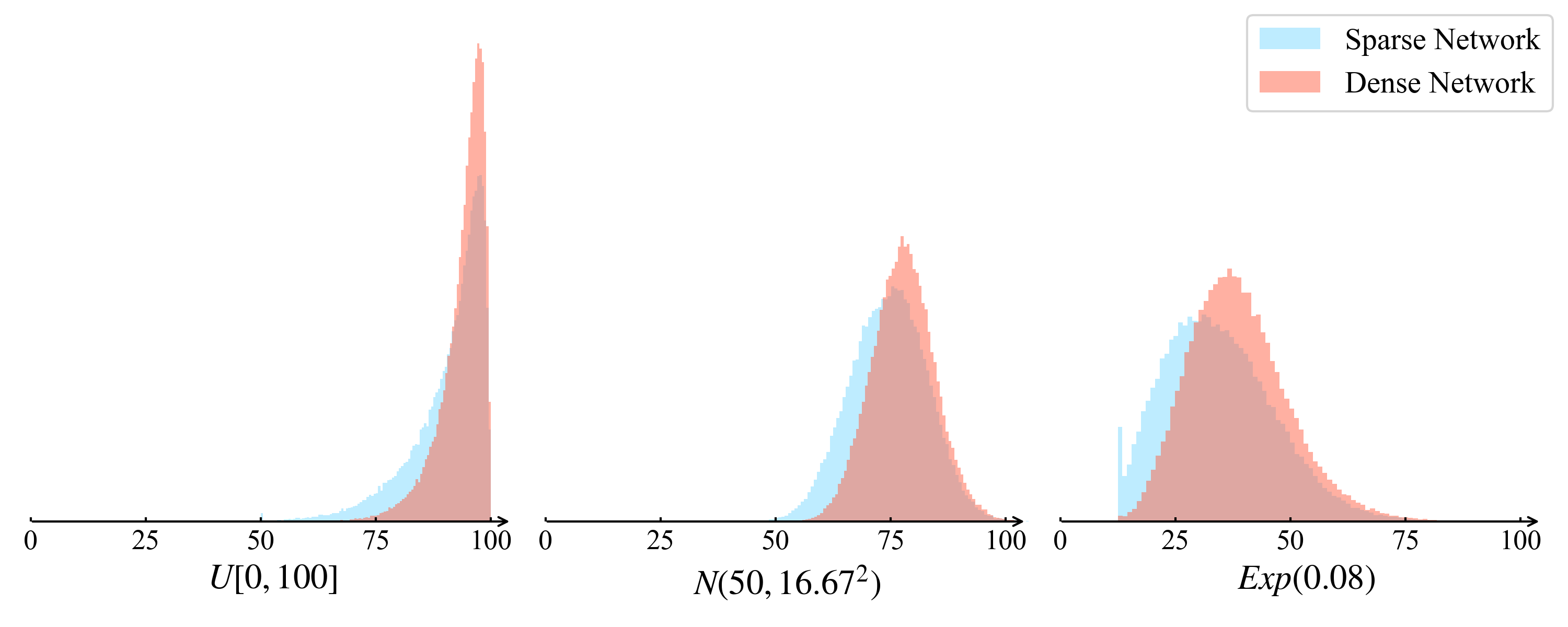}
        \label{fig_bignet_4deg}}

    \caption{Comparison of revenue distributions of \texttt{APX-R} in sparse and dense social networks.}
    \label{fig_bignet_xishumiji_rev}
\end{figure}
Figure \ref{fig_bignet_xishumiji_rev} illustrates the revenue distribution of the \texttt{APX-R} mechanism in sparse and dense social networks under different valuation distributions. The experiments consider cases where the number of seller's neighbors in the original market $\rho$ is 2, 3 and 4. The blue and red histograms represent revenue distributions in sparse and dense social networks, respectively. From Figure \ref{fig_bignet_xishumiji_rev}, it is evident that revenue distributions shift to the right in dense networks compared to sparse networks. That is, in dense networks, revenue is more frequently concentrated in higher ranges, indicating that sellers generally achieve higher revenues in denser social networks. Moreover, as the seller's number of neighbors $\rho$ increases, the revenue distribution in dense networks further concentrates in higher ranges. 

Table \ref{table_bignet_xishumiji_rev} presents the expected revenues of the \texttt{APX-R} mechanism in sparse and dense networks. It is evident that the expected revenue in the dense social network is consistently higher than in the sparse network. 

\linespread{1.}
{   
    \begin{table}[htbp]
		\centering
        \addtolength{\leftskip} {-2cm}
        \addtolength{\rightskip}{-2cm}
        \scalebox{0.8}{
		\begin{tabular}{lrrr|rrr|rrr}
		  \toprule
          \quad & \multicolumn{3}{c}{$U[0,100]$} & \multicolumn{3}{c}{$N(50,16.67^2)$} & \multicolumn{3}{c}{$Exp(0.08)$}  \\
            \hline
            Datasets & $\rho=2$ & $\rho=3$ & $\rho=4$ & $\rho=2$ & $\rho=3$ & $\rho=4$ & $\rho=2$ & $\rho=3$ & $\rho=4$ \\
            \hline
            LastFM &  79.9637 & 86.3822 & 90.6310 & 65.4908 & 70.2645 & 74.5560 &  22.9801 & 28.6986 & 34.5107\\ 
            \hline
		  Facebook & 90.4085 & 92.9271 & 93.8019 & 74.1225 & 76.2327 & 77.7093 & 33.7885 & 36.4504 & 38.8228\\
          \bottomrule
		\end{tabular}}
        \caption{Comparison of expected revenue of \texttt{APX-R} in sparse and dense social networks.}
		\label{table_bignet_xishumiji_rev}
	\end{table}}

\section{Conclusions and Discussion}
This work has answered two questions: (1) In order to optimize the seller's revenue  beyond what was achievable under Myerson optimal, by what means can the auctioneer exploit the economic networks, and attract more buyers from the economic network? (2) In turn, how is a reserve price determined in these new mechanisms such that the seller's revenue can be (approximately) maximized? 
We first designed the approximation mechanism for network auctions taking into consideration of a reserve price. Then we showed that using the format of an optimal reserve price in this mechanism is not applicable since it is complex to compute and also ruins DSIC. So we switched our target to  the reserve price which approaches the optimal reserve but still maintains DSIC.  The approximation ratio analysis shows  that this mechanism has very competitive performance on  seller's revenue. 

This approximation mechanism is particularly effective for sellers who already have a substantial audience at their disposal. In such scenarios, the mechanism will perform much more better than the Myerson optimal auction, and approximate the maximum possible revenue more closely. This indicates the result of this paper could be especially suitable for realistic large markets, such as internet auctions. The foundation of the reserve price optimization is eq.\eqref{eq_uniform_APX_Tx_r}, where the optimal reserves are different for each sub-tree. This is strongly related to the price discrimination \citep{bulow1989simple} and a more recent work investigates dynamic price discrimination against a changing number of bidders \citep{chaves2024auction}.
Our enhancement of  sellers' revenue can be even further improved taking into consideration of each agent's position in the network, in this case, it becomes a personalized price discrimination. 
Research about correlations of the values \citep{carrasco2018optimal, che2021robustly,he2022correlation,liu2023budget} could have underlying relation with our network setting.
Our immediate future work is to drop the i.i.d. and explore different valuation models in economic networks, for instance, independent private values (IPV), and more general value classes that allow for network-based value correlation. Extending the reserve price for economic networks to the combinatorial auction setting is also an interesting topic \citep{pekevc2003combinatorial}.

Another discussion is about a generalization of the allocation on the graph. The intuition of our allocation rule is:  allocate the item to the agent who has both reasonably high value (highest in $N-d_{j+1}$) and reasonably critical social position (leading the whole DCS). The ordering in  DCS  determines the agents' priority of winning. 
We allocate the item to the buyer with the highest priority who satisfies the two winning conditions.
This question is related to the core of all network auctions. The allocation in network auctions should consider both the value and the topology. We choose IDM-like allocation as the basis for setting reserve price and revenue optimization, since it is the simplest and most representative mechanism.
In addition, the problem of "selecting other agents (e.g., some middle node) in the DCS" is strongly related to the secretary problem and prophet inequality \citep{correa2024sample}. Beyond auction, using the power of crowds (combining many individual abilities to obtain
an aggregate decision-making) can be an effective technique for improving the decision system's performance.
These have been seen in the wisdom of crowds \citep{surowiecki2005wisdom}, and recent progresses considered shared information in this field \citep{palley2019extracting}. Thus exploiting the power of the networks underlying the crowds to make a more accurate wisdom of crowds is valuable future topic and can provide fundamental framework for crowdsourcing \citep{shi2022social}. Extending the network-based auction design to multi-unit item and even combinatorial auction scenarios \citep{pekevc2003combinatorial} and optimizing the revenue in these scenarios are challenging future topics.

\newpage



 
\appendix



\setcounter{equation}{0}
\renewcommand{\theequation}{A\arabic{equation}}

\setcounter{figure}{0}
\renewcommand{\thefigure}{A\arabic{figure}}

\renewcommand{\thelemma}{A\arabic{lemma}}

\setcounter{proposition}{0}
\renewcommand{\theproposition}{A\arabic{proposition}}

\setcounter{definition}{0}
\renewcommand{\thedefinition}{A\arabic{definition}}

\setcounter{table}{0}
\renewcommand{\thetable}{A\arabic{table}}

\section{Non-DSIC of \texttt{maxViVa}}
See \citep{bhattacharyya2023optimal}
for detailed definitions and notations, including transformed auction and \texttt{maxViVa}. 
According to the definition of transformed auction, each sub-tree $\hat{T_i}$, $i \in children(s)$ is replaced with a node having $v_{\ell}:=\max _{j \in \hat{T}_{\ell}} v_j$. Recall that the optimal referral auction is designed based on the assumption that the valuation of all buyers are i.i.d, assume that the cdf of buyers' valuations is $H$  and the corresponding pdf is $h$. Moreover, assume that the size of sub-tree $\hat{T_{\ell}}$ is $k_{\ell}$. Consequently, the cdf of $v_{\ell}$ is $F_{\ell}(v_{\ell})=H^{k_{\ell}}(v_{\ell})$, and the corresponding p.d.f is $f_{\ell}(v_{\ell})=k_{\ell}\cdot H^{k_{\ell}-1}(v_{\ell})\cdot h(v_{\ell})$. Then the \textit{virtual valuation} of agent $\ell$ in the transformed auction is given by:
{ {
\begin{equation}
{w_{\ell}(v_{\ell})=v_{\ell}-\frac{1-F_{\ell}(v_{\ell})}{f_{\ell}(v_{\ell})}=v_{\ell}-\frac{1-H^{k_{\ell}}(v_{\ell})}{k_{\ell}\cdot H^{k_{\ell}-1}(v_{\ell})}}
\end{equation}}}Notably, $k_{\ell}$ in the above equation is strongly correlated with the diffusion strategy of each buyer and thus can be influenced by buyers' actions. In essence, the \textit{virtual valuation} of agent $\ell$ is simultaneously determined by $v_{\ell}$ and $k_{\ell}$.

Taking the derivative of the virtual valuation $w_{\ell}(v_{\ell})$ with respect to $k_{\ell}$ yields:
{ {
\begin{equation}
{\frac{\partial{{w_{\ell}(v_{\ell})}}}{\partial {k_{\ell}}}=\frac{-H^{k_{\ell}}(v_{\ell})+1+k_{\ell} \ln H(v_{\ell})}{k_{\ell}^2 H^{k_{\ell}-1}(v_{\ell}) \cdot h(v_{\ell})}}
\end{equation}}}

Let $\varphi(k_{\ell})=-H^{k_{\ell}}(v_{\ell})+1+k_{\ell} \ln H(v_{\ell})$. The first-order derivative of $\varphi(k_{\ell})$ with respect to $k_{\ell}$ is $\varphi^{\prime}(k_{\ell})=\ln H(v_{\ell}) \cdot(1-H^{k_{\ell}}(v_{\ell}))<0$. It can be easily verified that $\varphi(k_{\ell})$ is monotonically decreasing with $k_{\ell}$. Therefore, $\varphi(k_{\ell}) \leq \varphi(1)=-H(v_{\ell})+1+\ln H(v_{\ell})$ holds for all $k_{\ell} \geq 1$, while $\varphi(1)$ must be non-positive due to $H(v_{\ell}) \in [0,1]$. Consequently,

\begin{equation}
{\frac{\partial{{w_{\ell}(v_{\ell})}}}{\partial {k_{\ell}}} \leq \frac{1+\ln H(v_{\ell})-H(v_{\ell})}{k_{\ell}^2 H^{k_{\ell}-1}(v_{\ell}) \cdot h(v_{\ell})} \leq 0}
\end{equation}

Thus, the \textit{virtual valuation} of agent $\ell$ in transformed auction is monotonically decreasing with respect to $k_{\ell}$, indicating that \textit{virtual valuation} may decrease with full propagation, leading to lower utility. In conclusion, the proposed revenue-optimal referral auction \texttt{maxViVa} in \citep{bhattacharyya2023optimal} is not DSIC.

Besides the Non-DSIC problem, in the paper, the characterization of the constraint of allocation and payment in Theorem $1$  also has an error. Firstly, given other buyers' reported types, buyer's utility is related not only to her reported type, but also to her true real type, but the utility function in the paper only covers the former. Moreover, buyer's utility function equals to her true valuation multiplied by her allocation minus her payment. In the proof, however, the equivalence inequality derived from inequality (3) defaults to a utility function that is the reported valuation multiplied by the allocation minus the payment. 

\section{Maximum of  Term $\sum_{x=1}^m \frac{k_x}{n-k_x+1}$ in the  Ratio $\frac{\operatorname{APX}(\mathbf{t'};\gamma)}{\operatorname{OPT}(N)}$ } \label{ratio_OPTN}
With a simple mathematical transformation, we know that
{\small{
\begin{displaymath}
\sum_{x=1}^m \frac{k_x}{n-k_x+1}=\sum_{x=1}^m \frac{k_x}{n-k_x+1}
=\sum_{x=1}^m \left(-1+\frac{n+1}{n-k_x+1}\right)=-m+\sum_{x=1}^m \frac{n+1}{n-k_x+1}
\end{displaymath}}}
When $m$ is a fixed value, $\big(-m+\sum_{x=1}^m \frac{n+1}{n-k_x+1}\big)$ has a maximum value when there are $(m-1)$ sub-trees with size $k_x=\underline{k}$ and one sub-tree with size $k_x=n-(m-1)\underline{k}$. And the corresponding maximum value is 
{ {
\begin{displaymath}
\begin{aligned}
q(m)=&\frac{(m-1)\underline{k}}{n-\underline{k}+1}+\frac{n-m\underline{k}+\underline{k}}{m\underline{k}-\underline{k}+1}\\
=&-1+\frac{(m-1)\underline{k}}{n-\underline{k}+1}+\frac{n+1}{m\underline{k}-\underline{k}+1}.
\end{aligned}
\end{displaymath}}}
Since $n\geq m\underline{k} \geq m\underline{k}-\underline{k}$, the first-order derivative $q'(m)$ satisfies 
{ {
\begin{displaymath}
\begin{aligned}
q'(m)=&\frac{\underline{k}}{n-\underline{k}+1}-\frac{(n+1)\underline{k}}{(m\underline{k}-\underline{k}+1)^2}\\
\leq &\frac{\underline{k}}{m\underline{k}-\underline{k}+1}\left(1-\frac{n+1}{m\underline{k}-\underline{k}+1}\right)\\
\leq &0.
\end{aligned}
\end{displaymath}}}
So $q(m)$ is decreasing on $m$. Consequently, $\sum_{x=1}^m \frac{k_x}{n-k_x+1}$ reaches the maximum value 
\begin{displaymath}
\frac{(\rho-1)\underline{k}}{n-\underline{k}+1}+\frac{n-\rho\underline{k}+\underline{k}}{\rho\underline{k}-\underline{k}+1},
\end{displaymath}
when $m=\rho$ and there are $(\rho-1)$ sub-trees with a size of $k_x=\underline{k}$ and one sub-tree with size  $k_x=n-(\rho-1)\underline{k}$.

\newpage




\bibliographystyle{elsarticle-num-names}


\bibliography{a-aaai24-all-recover}



\end{document}